\newtheorem{theorem}{Theorem}
\newtheorem{assumption}{Assumption}
\newtheorem{estimand}{Estimand}
\newtheorem{example}{Example}
\newtheorem{conjecture}{Conjecture}
\newtheorem{corollary}[theorem]{Corollary}
\newtheorem{lemma}{Lemma}
\newtheorem{proposition}[theorem]{Proposition}
\newtheorem{remark}{Remark}
\newenvironment{proof}[1][Proof]{\noindent\textbf{#1.} }{\ \rule{0.5em}{0.5em}}
\definecolor{navy}{rgb}{0,.3,.7}
\definecolor{myorange}{RGB}{238, 166, 63 }
\definecolor{myblue}{RGB}{29, 130, 231 }
\definecolor{green}{RGB}{49, 178, 54}
\definecolor{darkred}{RGB}{169,5,5}
\definecolor{darkgreen}{RGB}{0,152,0}
\newcolumntype{d}[1]{D{.}{.}{#1}}
\newcolumntype{e}{D{.}{.}{-1}}
\newcolumntype{.}{D{.}{.}{3}}
\newcommand{\ind}{\perp\!\!\!\!\perp }
\newcommand{\indicator}{\mathbbm{1}}
\newcommand{\E}{\mathbb{E}}
\newcommand{\bs}{\boldsymbol}
\newcommand{\attao}{\text{ATT}_{\text{AO}}}
\newcommand{\qttao}{\text{QTT}_{\text{AO}}}
\begin{document}

%%%%%%%%%%%%%%%%%%%%%%%%%%%%%%%%%%%%%%%%%%%%%%%%%%%%%%%%%%%%%%%%%%%%
%  TITKE PAGE
%%%%%%%%%%%%%%%%%%%%%%%%%%%%%%%%%%%%%%%%%%%%%%%%%%%%%%%%%%%%%%%%%%%%

\title{\large Estimating the Intensive Margin Effect in Panel Data Settings\thanks{A previous version of this paper has been circulated with the title ``Difference-in-Differences and Changes-in-Changes with Sample Selection''. An R package for implementing the proposed methodology is under development. I am thankful to Fabrizia Mealli, Andrea Ichino, and Alessandro Tarozzi for insightful discussions and guidance. I am also grateful to Francesco Drago, Erin Gabriel, Ellen Greaves, Bo Honoré, Guido Imbens, Miguel Nunes, Jörn-Steffen Pischke, and seminar participants at the European University Institute, EuroCIM 2025, ACIC 2025, iCEBDA 2025, and SAEe 2025 for valuable comments and suggestions. All errors are mine.
}
\vspace{.8 cm}
}
\vspace{.8 cm}
\author{Javier Viviens \thanks{Department of Economics, European University Institute. Email: javier.viviens@eui.eu.} }

\date{\vspace{.8 cm} February 4,  2026 \\
}
\maketitle

\begin{abstract}
\noindent
Many policies operate through two different channels: the extensive margin (e.g., the decision to participate) and the intensive margin (e.g., the intensity of the response among participants). This paper develops a novel identification strategy to estimate the intensive margin effect in panel data settings. I adapt the Horowitz-Manski-Lee bounds to the Changes-in-Changes framework to partially identify both the average and quantile intensive margin treatment effects. Additionally, I explore how to leverage multiple sources of sample selection to relax the monotonicity assumption in the original Horowitz-Manski-Lee bounds, which may be of independent interest. Alongside the identification strategy, I present estimators and inference results. I illustrate the relevance of the proposed methodology by analyzing a job training program in Colombia.

\vspace{0.5 cm}
\noindent \textbf{JEL-Code:} C14, C21, C23. 

\vspace{0.2cm}
\noindent \textbf{Keywords:} Intensive Margin, Difference-in-differences, Changes-in-Changes, Sample selection, Panel Data, Repeated Cross-Sections, Principal stratification, Partial identification.
\end{abstract}

\thispagestyle{empty}

\newpage

\setcounter{page}{1}

\begin{spacing}{1.5}

%%%%%%%%%%%%%%%%%%%%%%%%%%%%%%%%%%%%%%%%%%%%%%%%%%%%%%%%%%%%%%%%%%%%
%  SECTION 1: INTRODUCTION
%%%%%%%%%%%%%%%%%%%%%%%%%%%%%%%%%%%%%%%%%%%%%%%%%%%%%%%%%%%%%%%%%%%%
\section{Introduction}
A central concern in economics is understanding the mechanisms through which policies operate. Aggregate policy effects often mask distinct channels through which individuals are affected. 
A prominent instance arises in policies that impact the extensive margin (e.g., the decision to participate) and the intensive margin (e.g., the intensity of the response among participants). 
A canonical example of such policies is job training programs, which affect employment (extensive margin) and also wages for those employed (intensive margin) \parencite{heckman_economics_1999, ham_effect_1996,angrist_estimation_2001}.\footnote{Additional examples include, but are not limited to, education policies that affect enrolment and performance for those enrolled \parencite{krueger_experimental_1999,cornwell_student_2005}; carbon taxes that affect firms' entry and exit and emissions for active firms \parencite{greenstone_effects_2012}; weather shocks that affect the decision to migrate and income for those who do not migrate \parencite{groger_internal_2016}. In the medical literature, this phenomenon is known as truncation due to death \parencite{zhang_estimation_2003,ding_identifiability_2011,rubin_causal_2006}.} 
Distinguishing between these margins is essential for both welfare analysis and policy design. 

The main contribution of this paper is to develop an identification strategy for estimating the intensive margin effect in panel data settings. The existing literature has proposed identification and estimation strategies for intensive margin effects in settings where the treatment can be assumed uncounfounded (e.g., experimental settings).\footnote{\label{fn:literature_lee}This literature was pioneered by \textcite{zhang_estimation_2003}, \textcite{imai_sharp_2008}, \textcite{lee_training_2009}, and \textcite{zhang_likelihood-based_2009}, who introduced a principled approach to partially identify the intensive margin effect. Subsequent research has extended this literature in multiple directions, including sharpening bounds with covariates \parencite{semenova_generalized_2025, long_sharpening_2013, grilli_nonparametric_2008,samii_generalizing_2025, ding_identifiability_2011}, instruments \parencite{mattei_identification_2014}, post-censoring outcomes \parencite{yang_using_2016}, or a structural model for selection \parencite{honore_selection_2020}, incorporating longitudinal data \parencite{comment_survivor_2025, grossi_bayesian_2023}, considering noncompliance \parencite{chen_bounds_2015, blanco_bounds_2020}, targeting different estimands \parencite{huber_sharp_2015,bartalotti_identifying_2023}, and exploring non-binary treatment regimes \parencite{lee_lee_2025}. However, all these papers rely on the assumption that treatment is unconfounded.} This assumption is often implausible in panel data settings, where treatment is confounded with unobservable characteristics \parencite{liu_practical_2024,arkhangelsky_causal_2024,ghanem_selection_2025}. This paper fills this gap by adapting intensive-margin estimands and identification strategies to panel data settings in which treatment is not randomly assigned. 

I formalize the intensive margin effect using principal stratification analysis \parencite{frangakis_principal_2002}. This framework 
characterizes each unit by its potential participation decision under both treatment arms, thereby partitioning the sample into four distinct subpopulations or principal strata. For example, in the job training case, those who are always employed, those employed only if trained, those employed only if not trained, and those who are always unemployed. The extensive margin effect refers to the treatment effect on participation (employment in the job training example). 
The intensive margin effect is defined as the treatment effect on the response 
for the subset of units that would participate under both treatment arms, and hence for whom there is no effect on participation, i.e., no extensive margin effect. In the job training case, these are the workers who are employed regardless of whether they receive the training or not. For these workers, once participation is defined in terms of employment, any remaining treatment effect operates through the intensive margin, such as changes in wages.

The identification strategy developed in this paper focuses on the Always-Observed principal stratum, i.e., the units with observed outcomes regardless of their treatment status.\footnote{\label{fn:bad_control}The methodology developed in this paper can also be applied in settings where outcomes are observed for all units, but we are interested in isolating the intensive margin effect. Consider the `bad control' example in \textcite{angrist_mostly_2009}, where we are interested in the effect of education on earnings. Part of this effect operates through occupational choice. In this example, we may be interested in the effect of education on earnings net of occupational changes. In this case, we would estimate the effect on the subpopulation of units that always have a white-collar occupation, regardless of their education level.} The effect on these units can be interpreted as the policy's intensive margin effect. To achieve identification, I extend 
the Howorowitz-Manski-Lee bounds \parencite{horowitz_identification_1995, lee_training_2009} to the Changes-in-Changes (CiC) model \parencite{athey_identification_2006}. The identification strategy relies on 
the assumption 
that all the unobserved confounders affecting the outcome for the Always-Observed units can be captured in a single latent index (for example, ability in the case of wages). The distribution of this index may differ for the Always-Observed units in the treatment and control groups, but it is assumed to be constant over time within each group (that is, the distribution of ability for workers who are always employed can be different in the treatment and control groups as long as this distribution does not vary over time). Combined with information on the principal strata proportions, this assumption delivers partial identification of two causal estimands: (i) the Average Treatment Effect on the Treated for the Always-Observed stratum ($\attao$) and (ii) the Quantile Treatment Effect on the Treated for the same stratum ($\qttao$). 

Another contribution of this paper is the estimation of the principal strata proportions. I also employ the CiC approach to model participation. Combined with a monotonicity assumption, as in the original Horowitz-Manski-Lee bounds, this method delivers point identification of the principal strata proportions. The required monotonicity assumption implies that treatment can affect participation only in `one direction' for all the units, ruling out the existence of one of the strata (e.g., units observed only if treated).\footnote{This assumption is analogous to the monotonicity assumption in IV settings \parencite{angrist_identification_1996} that rules out the existence of defiers.} In the job training context, this assumption implies that all workers employed without training would also be employed if trained. I relax this assumption when the data include multiple sources of participation. For instance, wages may be unobserved for both unemployed workers and workers who decide to migrate. I assume each source is monotonic while allowing the direction of monotonicity to differ across sources. For example, the training may reduce the probability of unemployment but increase the likelihood of migration for all units. This extension enables all four principal strata to exist while still achieving point identification of their proportions. This novel extension of the original Horowitz-Manski-Lee bounds could also be of interest in settings where treatment is unconfounded. 

I illustrate the proposed methodology by analyzing a job training program in Colombia. Using data from \textcite{attanasio_subsidizing_2011}, I estimate the intensive margin effect of the training on salaried earnings. I contrast these findings with those obtained using a `naive' approach, 
in which the CiC estimator is applied to units with observed wages. 
The `naive' approach suggests that the training increased wages by 13\% for trained units. In contrast, using the methodology presented in this paper, I fail to reject the null that the training's intensive margin effect was zero. When examining the $\qttao$, I find that the estimated bounds vary across the outcome distribution, suggesting heterogeneous intensive margin effects with potential distributional implications. 

To date, the literature on panel data 
has focused on correcting selection bias and targeting the ATT \parencite{heckman_sample_1979,hausman_attrition_1979,fitzgerald_analysis_1998,ghanem_correcting_2024,bhattacharya_inference_2008, semykina_estimating_2010,carlson_addressing_2024}. While these approaches recover causal effects under sample selection, their estimand cannot be interpreted as the intensive margin effect.\footnote{This estimand identifies the effect for all treated units and thus integrates over all the principal strata, mixing intensive and extensive margin effects.} More
recently, there has been growing attention to estimating the intensive margin effect in panel data settings. In parallel with this work, \textcite{shin_difference--differences_2024, rathnayake_difference--differences_2024} extend Horowitz-Manski-Lee bounds to a Difference-in-Differences (DiD) setting and target one of the estimands proposed in this paper (the ATT on the Always-Observed principal stratum). The methodology proposed by this paper can also be applied to the DiD design, as illustrated in Appendix \ref{ap:ext_did}.
A key distinction is that both \textcite{shin_difference--differences_2024, rathnayake_difference--differences_2024} identify the principal strata proportions under a parallel trends assumption in participation, whereas I identify them using the Changes-in-Changes model.
An advantage of my approach is that it guarantees estimated probabilities that lie in the unit interval. 
Moreover, this paper is the first to establish partial identification of distributional effects, the $\qttao$. 
This estimand is particularly relevant when policymakers are interested not only in the average impact of a policy but also in its distributional effects. By shifting the focus from means to distributions, the CiC framework enables the exploration of heterogeneity of treatment effects along the outcome distribution. Finally, I consider several practical extensions—including covariates, binary outcomes, and repeated cross sections—which further enhance the applicability of the proposed framework.

The remainder of this paper is organized as follows. Section \ref{sec:bi} introduces the notation and set-up. Section \ref{sec:hiru} introduces the identification strategy to estimate causal effects for the Always-Observed units. Section \ref{sec:lau} derives the estimation and inference results. Section \ref{sec:bost} applies the methodology to data from a job training program from \textcite{attanasio_subsidizing_2011}. Section \ref{sec:sei} concludes. The Appendix sections describe the extensions of the main results (\ref{ap:extensions}), prove them (\ref{ap:proofs}) and provide supplementary material (\ref{ap:supmat}).

%%%%%%%%%%%%%%%%%%%%%%%%%%%%%%%%%%%%%%%%%%%%%%%%%%%%%%%%%%%%%%%%%%%%
%  SECTION 2:
%%%%%%%%%%%%%%%%%%%%%%%%%%%%%%%%%%%%%%%%%%%%%%%%%%%%%%%%%%%%%%%%%%%%
\section{Notation and set up} \label{sec:bi}
For simplicity, I consider the two-groups two-periods case. Consider a random sample of $N$ units observed during two periods, $t \in \{1,2\}$. Let $w_{it}$ denote the treatment of unit $i$ at time $t$. No one is treated in the pre-treatment period, $w_{i1} = 0$ $\forall i$. Some units are treated in the post-treatment period, $w_{i2}\in\{0,1\}$. Let $G$ denote a treatment group indicator, with $G_{i} = 1$ if unit $i$ is treated in the second period and zero otherwise, $G_{i} \equiv w_{i2}$. Let $Y_{it}$ be the outcome of unit $i$ at time $t$. I adopt the potential outcome framework and assume the Stable Unit Treatment Value Assumption (SUTVA) \parencite{rubin_randomization_1980, rubin_application_1990}. This assumption implies that there is no interference between units and there are no hidden versions of treatment, allowing me to relate the observed $Y_{it}$ to the potential outcomes of unit $i$, $Y_{it}(w_{i1},w_{i2})$:
\begin{equation*}
        Y_{it} = Y_{it}(0,1)G_{i} + Y_{it}(0,0)(1 - G_{i}).
\end{equation*}

The outcome for some sampled units may not be observed. Let $S_{it}$ be a binary selection indicator equal to 1 if the outcome of unit $i$ is observed at time $t$ and zero otherwise. A crucial aspect of this setup is the recognition that the selection indicator, $S$, is also a post-treatment variable. Consequently, I must also relate the observed selection outcome, $S_{it}$, to the potential selection outcomes:
\begin{equation*}
    S_{it} = S_{it}(0,1)G_{i} + S_{it}(0,0)(1 - G_{i}).
\end{equation*}

Throughout the paper, I will maintain the assumption of no treatment anticipation for both the outcome and the selection, a standard assumption in the literature \parencite{arkhangelsky_causal_2024}. 
\begin{assumption} \label{as:no_anti}
    No treatment anticipation
    \begin{align*}
        S_{i1}(0,w_{i2}) = S_{i1}(0,w_{i2}^{\prime}) =S_{i1} \quad \forall w_{i2},w_{i2}^{\prime}\forall i ,\\
        Y_{i1}(0,w_{i2}) = Y_{i1}(0,w_{i2}^{\prime}) = Y_{i1} \quad \forall w_{i2},w_{i2}^{\prime}\forall i .
    \end{align*}
\end{assumption}
Assumption \ref{as:no_anti} implies that the treatment does not affect the outcome and the selection in the pre-treatment period. Since no unit is treated in the first period, it is sufficient to index the second-period potential outcomes with $w_{i2}$:
\begin{align*}
    S_{i2} = S_{i2}(1)G_{i} + S_{i2}(0)(1-G_{i}), \\
    Y_{i2} = Y_{i2}(1) G_{i} + Y_{i2}(0)(1-G_{i}).
\end{align*}

Since selection is a post-treatment variable, the population can be partitioned into subpopulations defined by joint values of the potential selection indicator in the post-treatment period, i.e., $(S_{i2}(1), S_{i2}(0))$. This approach is known as principal stratification \parencite{frangakis_principal_2002}. Every unit belongs to one of the four principal strata. Table \ref{tab:strata} presents the principal strata, where $V_{i}$ denotes the principal stratum of unit $i$. Because potential outcomes are not affected by the treatment, the principal strata exist prior to the treatment assignment. While the observed selection status is a post-treatment outcome, the principal strata act as pre-treatment covariates.\footnote{In the example discussed in footnote \ref{fn:bad_control}, occupational choice is a post-treatment variable and deemed a bad control \parencite{angrist_mostly_2009, rosenbaum_consequences_1984}. Nevertheless, principal strata can be defined in terms of the potential occupational choices under each treatment arm. The treatment determines which of these potential choices is realized, but does not determine to which principal stratum the unit belongs.} 
\begin{table}[H] \centering
\caption{Principal Strata.}
\begin{tabular}{|cc|c|c|}
\hline
$\bs{S_{i2}(0)}$ & $\bs{S_{i2}(1)}$  & \textbf{Description} & $\bs{V_{i}}$ \\ \hline 
   1&1 & Always Observed  & AO \\
    0&0 & Never Observed  & NO \\
    1&0 & Observed only in Control & OC \\
    0&1 & Observed only in Treatment & OT    \\ \hline
\end{tabular}
\label{tab:strata}
%\vspace{-0.2cm}
\begin{minipage}{0.81\linewidth  \setstretch{0.75}}
{\scriptsize  Notes: Principal Strata defined by the joint value of the potential selection indicators, $S(0)$ and $S(1)$. For example, consider the case where the treatment is a job training program aiming to increase workers' wages. The AO stratum comprises workers who are always employed, both with and without training. The NO stratum consists of workers who are always unemployed, regardless of whether they are treated or not. The OC stratum refers to workers who would find a job without the training but would be unemployed if they were treated. Conversely, the OT stratum includes workers who are unemployed in the absence of treatment but find a job if they receive it. }
 \end{minipage}
\end{table}

To facilitate the exposition of the proposed methodology, I impose the following assumption on missingness:
\begin{assumption} \label{as:abstate}
    Missingness as an absorbing state
    \begin{equation*}
        S_{i1} = 0 \implies S_{i2} = 0.
    \end{equation*}
\end{assumption}
Assumption \ref{as:abstate} implies that once a unit’s outcome is unobserved, it remains unobserved in all subsequent periods. Thus, all the units with $S_{i1} = 0$ will belong to the Never-Observed stratum at $t=2$. In contrast, units with $S_{i1} = 1$ can transition to any stratum in the second period. Beyond this, Assumptions \ref{as:no_anti} and \ref{as:abstate} do not impose additional constraints on the proportion of the strata in the population, which may vary across groups, nor on the distribution of potential outcomes $Y(0)$ and $Y(1)$ across strata.

Assumption \ref{as:abstate} is plausible in settings with censoring due to death or similar events, such as clinical trials where patients cannot be observed after death, or educational contexts where students who drop out or graduate do not subsequently re-enroll. 
In other settings, however, researchers may be reluctant to impose this assumption, as outcomes that are unobserved in the pre-treatment period may be observed at a later date.
This assumption imposes testable restrictions on the data, since $S_{i1}$ and $S_{i2}$ are observed for all units.
I impose Assumption \ref{as:abstate} to simplify the exposition of the methodology, but it is not required for the identification of the causal effects considered in this paper. Appendix \ref{ap:ext_as2} presents the identification results of this paper without this assumption.

It is also important to note that this assumption does not correspond to restricting the sample to those observed in the first period. Allowing for missingness in the pre-treatment period is crucial for imputing the missing potential selection outcomes, which is central to the methodology developed in this paper. Restricting the sample to units with observed outcomes at baseline would discard valuable information contained in baseline selection. At the same time, as discussed in Appendix \ref{ap:ext_as2}, under Assumption \ref{as:abstate}, the identification strategy in the main text targets causal effects only for units with observed outcomes at baseline. Nonetheless, this strategy can be extended to target causal effects regardless of whether the units were observed in the pre-treatment period, as shown in Appendix \ref{ap:ext_as2}. 

Finally, I formalize the random sampling assumption in Assumption \ref{as:rand_samp}.
\begin{assumption} \label{as:rand_samp}
    Random Sampling. $\{Y_{i1},Y_{i2}, S_{i1}, S_{i2},G_{i}\}_{i=1}^{N}$ are independent and identically distributed with joint distribution $P_{Y,G,S}$, and supp$(Y_{it}) \subset \mathbb{R}\cup\{*\}$, supp$(S_{it}) = \{0,1\}$, and supp$(G_{i}) = \{0,1\}$.
\end{assumption}
Assumption \ref{as:rand_samp} implies that the units are sampled from a larger population. Even when the outcome $Y_{it}$ is not observed for some units, the researcher knows that those units were included in the sample. I use the convention that $Y_{it} = *$ for units with $S_{it} = 0$. For example, wages are only observed for employed workers, but researchers also observe which workers are unemployed; quality of life is only observed for survivors, but researchers know which patients died; and academic performance is only observed for enrolled students, but researchers know which students dropped out. Accordingly, the expectations throughout the paper are meant to capture the sampling-based uncertainty \parencite{abadie_sampling-based_2020}.

%%%%%%%%%%%%%%%%%%%%%%%%%%%%%%%%%%%%%%%%%%%%%%%%%%%%%%%%%%%%%%%%%%%%
%  SECTION 3: 
%%%%%%%%%%%%%%%%%%%%%%%%%%%%%%%%%%%%%%%%%%%%%%%%%%%%%%%%%%%%%%%%%%%%
\section{Identification of intensive margin effects}\label{sec:hiru}
This section presents a novel identification strategy to estimate the intensive margin effect in longitudinal settings. 
Unlike previous literature on intensive margin effects,\footnote{See footnote \ref{fn:literature_lee}.} I do not assume treatment unconfoundedness. Instead, I allow the treatment to be confounded with unobservable unit-level characteristics. However, I impose restrictions on these unobservables' distribution over time. In what follows, I propose two causal estimands and discuss their identification. 

\begin{estimand} \label{es:attao}
    Average Treatment Effect on the Treated Always-Observed units ($\text{ATT}_{\text{AO}}$)
    \begin{equation*}
        \attao = \E[Y_{i2}(1) - Y_{i2}(0) \mid G_{i} = 1, V_{i} = AO]
    \end{equation*}
\end{estimand}
Estimand \ref{es:attao} is the ATT for the subpopulation of Always-Observed units. This estimand captures the average effect for treated units whose outcome is observed under both treatment arms, and therefore can be interpreted as the policy's intensive margin effect. This estimand is especially relevant in settings where the researcher is interested in the causal effect net of any compositional changes. 

In many applications where treatment effects are likely heterogeneous, policymakers may be interested not only in the average treatment effect but also in its distributional impact. For instance, consider the case of a job training program. A positive $\attao$ indicates that, on average, the training increases wages for the workers employed both with and without the program. However, this average positive effect might be driven primarily by workers at the lower end of the wage distribution or, conversely, by those at the top. These contrasting scenarios carry significant policy implications, as the distributional effects of the training differ substantially. To fill this gap, I propose a new estimand that targets the treatment effect at various quantiles of the outcome distribution for the Always-Observed treated units.

\begin{estimand} \label{es:qttao}
    Quantile Treatment Effect on the Treated Always-Observed units ($\qttao$)
    \begin{align*}
        \qttao(q) &= Q_{Y_{2}(1) \mid G = 1, V_{} = AO}^{}(q)-Q_{Y_{2}(0) \mid G = 1, V_{} = AO}^{}(q), \\
        Q_{Y}^{}(q) &= \inf\{y : F_{Y}(y) \geq q\}, \\
        & q \in[0,1]
    \end{align*}
    where $F_{Y_{t}(w) \mid G = 1, V_{} = AO}$ denotes the cumulative distribution of the Potential Outcome $Y_{t}(w)$ at period $t$ for units that belong to the the treatment group, $G=1$, and to the Always-Observed stratum, $V = AO \equiv S_{2}(1) =S_{2}(0) = 1$.
\end{estimand}
Estimand \ref{es:qttao} contrasts the quantile $q$ of the treated potential outcome distribution for AO units in the treatment group with the same quantile of the control potential outcome distribution for these units. For example, $\qttao(0.5)$ captures the treatment effect at the median of the outcome distribution. This estimand captures a distributional effect: it does not require any rank-invariance assumption across potential outcomes and should not be confused with the median treatment effect \parencite{imbens_causal_2015}. 

Both Estimands \ref{es:attao} and \ref{es:qttao} are principal causal effects, in that they define causal effects for a subpopulation characterized by a principal stratum. Because principal stratum membership depends on joint potential outcomes, which are never simultaneously observed, this subpopulation is not identifiable without additional assumptions. Analogously to the IV setting, where individual compliance types cannot be directly observed, it may not be possible to know which units belong to the Always-Observed stratum. This feature may be viewed as a limitation of the proposed methodology. Nevertheless, because these estimands admit a clear interpretation as intensive margin effects, they are policy-relevant and practically meaningful for evaluating interventions and informing scale-up decisions. Finally, as in every principal stratification analysis, it is advisable to characterize the units belonging to the Always-Observed principal stratum \parencite{mealli_refreshing_2012}.

An alternative to targeting principal causal effects is given by bias-correction methods, that attempt to recover the ATT in the presence of sample selection. This estimand can be interpreted as ``the treatment effect if all the treated units were observed'' and therefore cannot be interpreted as the intensive-margin effect without further assumptions. Moreover, bias-correction methods have several limitations. First, they assume that outcomes are always well-defined, even if they are not directly observed \parencite{zhang_likelihood-based_2009}. In the job training example, bias-correction methods assume the existence of a `wage offer', allowing us to consider what the wage of an unemployed worker is. Second, they believe that the selection status is manipulable \parencite{mealli_refreshing_2012}, meaning that employment status could, in principle, be directly manipulated by the researcher.\footnote{When selection status is manipulable, one can define potential outcomes that can never be observed in the data and are `a priori counterfactuals', such that the potential wage without training for a worker who is only employed if trained. Another example of this type of potential outcome is given by the IV setting, where we can hypothesize about the potential outcome under treatment for a never-taker. See \textcite{mealli_refreshing_2012} for a discussion.} Third, bias-correction methods often rely on exclusion restriction assumptions, requiring variables that affect employment but not wages \parencite{lee_training_2009} and place restrictions on how potential outcomes are distributed across principal strata.

Next, I outline the assumptions required for partial identification of these estimands. First, I discuss the partial identification of both the $\attao$ and the $\qttao(q)$ in a Changes-in-Changes setting. I then examine the identification of the principal strata proportions, which constitutes a key component of the proposed methodology, under alternative sets of assumptions.

\subsection{Identification of causal effects}
The existing literature estimating the effects for the Always-Observed units has relied on the assumption that treatment is unconfounded. However, this assumption is often unrealistic in causal models for panel data, where treatment assignment is typically non-random and likely correlated with unobserved factors \parencite{ghanem_selection_2025, liu_practical_2024}. To address this limitation, I reformulate the Horowitz-Manski-Lee bounds \parencite{horowitz_identification_1995,lee_training_2009} by replacing the unconfoundedness assumption with the Changes-in-Changes (CiC) model as in \textcite{athey_identification_2006}, allowing the treatment assignment to be confounded with unobservable characteristics that affect the potential outcomes. 

The choice of the CiC outcome modeling is meant to address several limitations in the canonical DiD research design. The canonical DiD may be sensitive to functional form \parencite{roth_when_2023} and implicitly relies on additivity and linearity assumptions on the potential outcomes. To address these issues, \textcite{athey_identification_2006} generalize the DiD identification strategy to a nonlinear and scale-free model, Changes-in-Changes. In the CiC model, individual and time `fixed effects' can influence the outcome in more flexible ways beyond the linear and additive fashion typically imposed in DiD. The CiC setup also offers additional advantages. First, shifting the focus from means to distributions enables the identification of distributional treatment effects, such as the $\qttao$. Second, it avoids extrapolation outside the outcome's support, ensuring that the counterfactual outcomes $Y(0)$ for treated units lie within the support of $Y$. On the other hand, because the method reconstructs the full distribution of $Y(0)$ for treated units, it requires assumptions that are not needed in the DiD framework, where identification concerns only the first moment of this distribution. As a result, the canonical DiD is not nested as a particular case of the CiC research design (see \textcite[Remark 4]{roth_when_2023} for an example). Nevertheless, the methodology developed in this paper can be adapted to the canonical DiD setting, as shown in Appendix \ref{ap:ext_did}.

\begin{assumption} \label{as:outcome}
    Outcome model for Always-Observed units \\
    \begin{equation*}
        (Y_{it}(0) \mid V_{i} = AO )\quad =\quad  m(\mathcal{U}_{it}, t),
    \end{equation*}
    where $m(u,t)$ is strictly increasing in $u$ $\forall t$, and $\mathcal{U}_{it}$ is an unobservable scalar for unit $i$ at time $t$ with constant distribution over time within groups,
    \begin{equation*}
        \mathcal{U}_{i1} \mid G_{i}, V_{i} = AO  \sim \mathcal{U}_{i2} \mid G_{i}, V_{i} = AO.
    \end{equation*}
\end{assumption}
Assumption \ref{as:outcome} is the principal counterpart to assumptions 3.1-3.4 in \textcite{athey_identification_2006}. It models the untreated potential outcome as a function of an unobservable individual characteristic, $\mathcal{U}_{it}$. This model embeds several restrictions on the unobservable, its distribution, and the outcome function $m(\cdot)$. These assumptions do not restrict the data in any way and, thus, are not testable.

This model assumes that all the individual characteristics affecting the outcome can be captured in a single index, $\mathcal{U}_{it}$. The function $m(\cdot)$ being strictly increasing implies that units with higher values of the unobservable also have higher outcomes. This assumption is intuitive when the unobservable is interpreted as an individual characteristic such as ability or productivity. It holds by construction in additively separable models, such as the two-way fixed effects model. However, it also accommodates a broad range of non-additive nonlinear outcome functions. 

The distribution of the unobservable may differ between the treatment and control groups. However, it remains constant over time within each group. This assumption is crucial, as it allows interpreting changes in the outcome in the control group as changes in the function $m(\cdot)$. Once the trend in $m(\cdot)$ is estimated, the missing potential outcome for the treatment group can be identified.

\begin{example} \label{ex:twfe}
    (Two-way fixed effects model). Assume a separable additive model. Let $m(\mathcal{U}_{it},t) = \mathcal{U}_{it} + \lambda_{t}$, and assume $\mathcal{U}_{it} = \alpha_{i} + \varepsilon_{it}$, where $\varepsilon_{it}$ is random noise. Then, the untreated potential outcome for AO units can be expressed as:
    \begin{equation*}
        Y_{it}(0) = \alpha_{i} + \lambda_{t} + \varepsilon_{it}.
    \end{equation*}
    This example illustrates that the standard two-way fixed effect model can be seen as a special case of the CiC outcome model.
\end{example}

\begin{proposition} \label{prop:bounds_qttao}
    Let $Y_{it}(1)$ and $Y_{it}(0)$ be continuous with compact 
    support. Furthermore, let the support of $Y_{it}(0)$ for the treatment group be contained in the support of $Y_{it}(0)$ for the control group. Then, if Assumptions \ref{as:no_anti}, \ref{as:abstate}, \ref{as:rand_samp} and \ref{as:outcome} hold, then $\Lambda^{LB}(q)$ and $\Lambda^{UB}(q)$ are 
    lower and upper bounds for the Quantile Treatment Effect on the Treated Always-Observed units ($\qttao(q)$), where:
    \begin{align*}
        \Lambda^{LB}(q) &= Q_{Y_2 \mid G=1, S_{2} = 1}(q\pi_1) - Q_{Y_{2} \mid G=0, S_2 = 1}\left(F_{Y_{1}\mid G=0, S_2 = 1}\left(Q_{Y_{1}\mid G=1,S_2 = 1}(q\pi_1 + 1 - \pi_1)\right) + 1 - \pi_0 \right), \\
         \Lambda^{UB}(q) &= Q_{Y_2 \mid G=1, S_{2} = 1}(q\pi_1 + 1 - \pi_1) -  Q_{Y_{2} \mid G=0,S_2 = 1}\left(F_{Y_{1}\mid G=0, S_2 = 1}\left(Q_{Y_{1}\mid G=1,S=1}(q\pi_1)\right)-(1-\pi_0)\right), \\
        \pi_{1} & = Pr(S_{i2}(0) = 1 \mid G_{i} = 1, S_{i2}(1) = 1),\\
       \pi_{0} & = Pr(S_{i2}(1) = 1 \mid G_{i} = 0, S_{i2}(0) = 1), \\
       F_{Y}(y) & := Pr(Y \leq y) \\
       Q_{Y}(q) & := \inf\{y: F_{Y}(y) \geq q\}
    \end{align*}
    provided that 
    \begin{align*}
        F_{Y_{1}\mid G=0, S_2 = 1}\left(Q_{Y_{1}\mid G=1,S_2 = 1}(q\pi_1 + 1 - \pi_1)\right) &\leq \pi_0 \\
        F_{Y_1 \mid G=0, S_2 = 1}\left(Q_{Y_1\mid G=1, S_2 = 0} (q\pi_1)\right) &\geq 1 - \pi_0.
    \end{align*}
    If $F_{Y_{1}\mid G=0, S_2 = 1}\left(Q_{Y_{1}\mid G=1,S_2 = 1}(q\pi_1 + 1 - \pi_1)\right) > \pi_0$, then 
    \begin{equation*}
         \Lambda^{LB}(q) = Q_{Y_2 \mid G=1, S_{2} = 1}(q\pi_1) - Q_{Y_{2} \mid G=0, S_2 = 1}\left(1\right).
    \end{equation*}
    If $F_{Y_1 \mid G=0, S_2 = 1}\left(Q_{Y_1\mid G=1, S_2 = 0} (q\pi_1)\right) < 1 - \pi_0$, then 
    \begin{equation*}
        \Lambda^{UB}(q) = Q_{Y_2 \mid G=1, S_{2} = 1}(q\pi_1 + 1 - \pi_1) -  Q_{Y_{2} \mid G=0,S_2 = 1}\left(0\right)
    \end{equation*}
    Proof: See Appendix \ref{proof:bounds_qttao}.
\end{proposition}
\begin{remark}
    The two conditions, $$F_{Y_{1}\mid G=0, S_2 = 1}\left(Q_{Y_{1}\mid G=1,S_2 = 1}(q\pi_1 + 1 - \pi_1)\right) \leq \pi_0$$ and $$F_{Y_1 \mid G=0, S_2 = 1}\left(Q_{Y_1\mid G=1, S_2 = 0} (q\pi_1)\right) \geq 1 - \pi_0,$$ ensure that extrapolation is not required outside the support of $Y_{2} \mid G=0$. When these conditions do not hold, the lower bound for $Q_{Y_2(0) \mid G=1, V=AO}(q)$ is given by the infimum of the support of $Y_2 \mid G=0$, and the upper bound by its supremum. This implies that, for extreme quantiles, the bounds on $\qttao(q)$ may be uninformative. Moreover, the smaller the proportion of Always-Observed units, $\pi_1$ and $\pi_0$, the larger the region where these bounds may be uninformative. Note that when $\pi_0 = 1$, both conditions are always satisfied. 
\end{remark}
\begin{remark} \label{prop:remark4}
    Once the distributions of $Y_{2}(1)_{ \mid G = 1, V_{} = AO}$ and $Y_{2}(0)_{ \mid G = 1, V_{} = AO}$ are partially identified, the bounds for the $\attao$ follow immediately:
    $$\int_{0}^{1}  \Lambda^{LB}(q)dq \leq \attao \leq \int_{0}^{1}  \Lambda^{UB}(q)dq.$$
\end{remark}
\begin{remark}
    Proposition \ref{prop:bounds_qttao} builds on continuity and support assumptions. These results can be extended to cases where the support of $Y_{it} \mid G_i=1$ does not fully overlap with that of $Y_{it}\mid G_{i} = 0$; see \textcite[Corollary 3.1]{athey_identification_2006} for a discussion. Appendix \ref{ap:ext_discrete} explores the identification for discrete outcomes. However, the identification is unsuitable when $Y$ is mixed or has non-compact support.
\end{remark}
\begin{remark} \label{prop:remark3}
    When the OT and OC strata are empty, it follows that $\pi_{1} = \pi_{0} = 1$, $\Lambda^{LB} = \Lambda^{UB}$, and the $\qttao(q)$ is point identified. The intuition behind this result is that all the units observed in the post-treatment period belong to the AO stratum. In this scenario, $S$ becomes an indicator of membership in the AO stratum.
\end{remark}

Proposition \ref{prop:bounds_qttao} describes a trimming procedure to bound the $\qttao(q)$. It builds on \textcite{horowitz_identification_1995}, who derive bounds for the components of a mixture distribution when the mixture proportions are known—here, the proportions of Always-Observed units in each group. I combine this with results from \textcite{athey_identification_2006}, who derive the counterfactual distribution of $Y_{2}(0)$ for treated units. Once the distributions of $Y_{2}(1)$ and $Y_{2}(0)$ for the treated AO units are bounded, the identification of the $\qttao(q)$ and the $\attao$ follows. 

The intuition behind Proposition \ref{prop:bounds_qttao} is as follows. Suppose that the proportion of AO units among treated units with observed outcomes, $\pi_1$, is known. For example, if $\pi_1 = 0.9$, the observed distribution for treated units trimmed at the 90\% percentile constitutes a lower bound for the real latent distribution for these 90\% AO treated units. Similarly, the top 90\% of the observed distribution constitutes an upper bound. Analogously, the distribution of outcomes for control AO units can be partially identified using the corresponding proportion, $\pi_0$.
Once the outcome distributions for AO units are partially identified, they can be combined with the results from  \textcite{athey_identification_2006} to partially identify the $\qttao(q)$. Under the CiC model for the AO units, any change in the control group’s outcome distribution must come from changes in the outcome mapping $m(\cdot)$, as the distribution of $\mathcal{U}_{it}$ is assumed to be stable over time within groups. This lets me use the control group to recover how $m(\cdot)$ shifts over time. I can then apply the same shift to the pre-treatment distribution of $\mathcal{U}_{it}$ in the treatment group, inferred from their pre-treatment distribution of the outcome, to construct the counterfactual post-treatment distribution. Figure \ref{fig:ex_inv_dist} provides graphical intuition behind Proposition \ref{prop:bounds_qttao}.

\begin{figure}[h] 
    \caption{Graphical intuition on Proposition \ref{prop:bounds_qttao}.}
    \centering
       \vskip\baselineskip
     \raisebox{0.1cm}{$\Lambda^{LB}$}
    % Box around the first row of plots
    \fbox{
        \begin{minipage}{0.98\textwidth}
            \centering
            \begin{minipage}{0.48\textwidth}
                \centering
                \includegraphics[width=\textwidth]{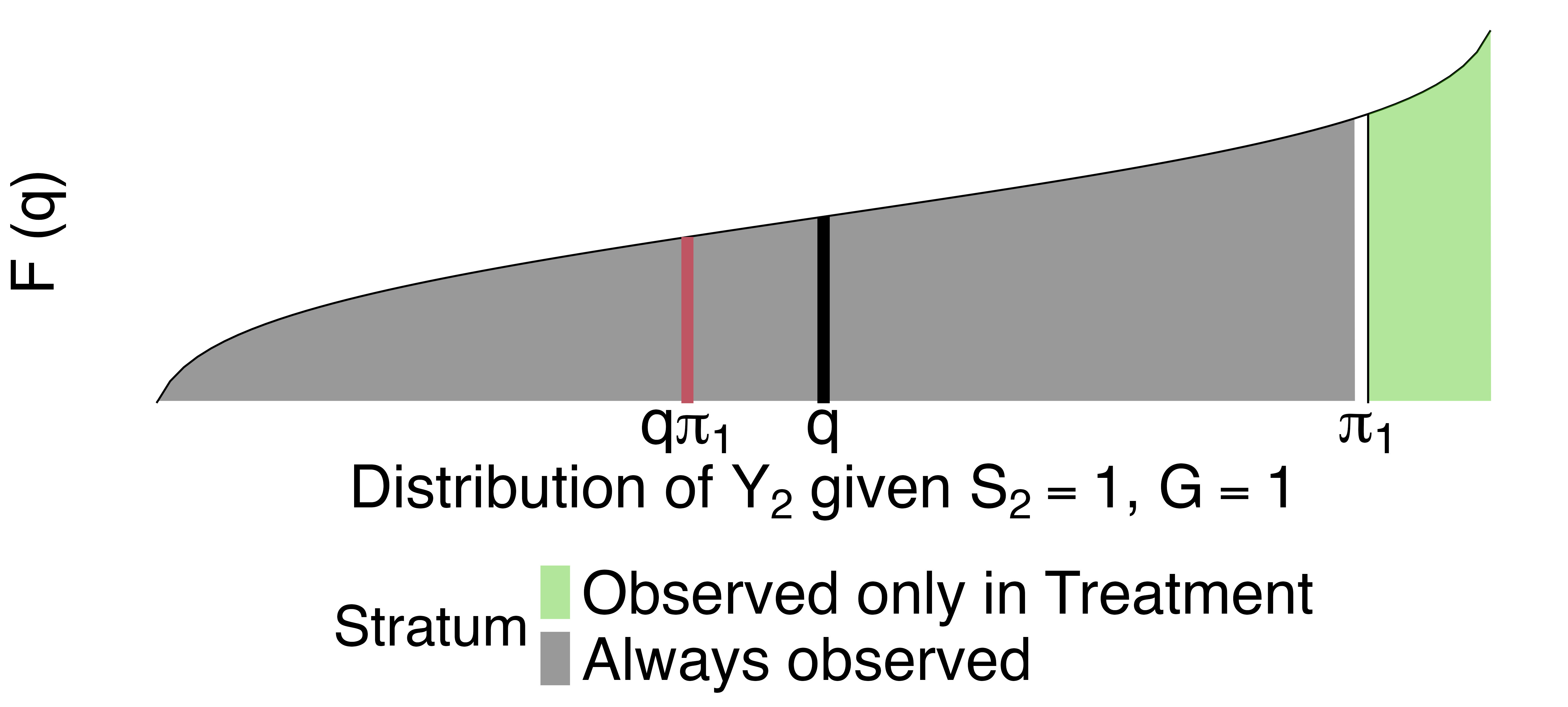}
            \end{minipage}%
            \hfill
            \begin{minipage}{0.48\textwidth}
                \centering
                \includegraphics[width=\textwidth]{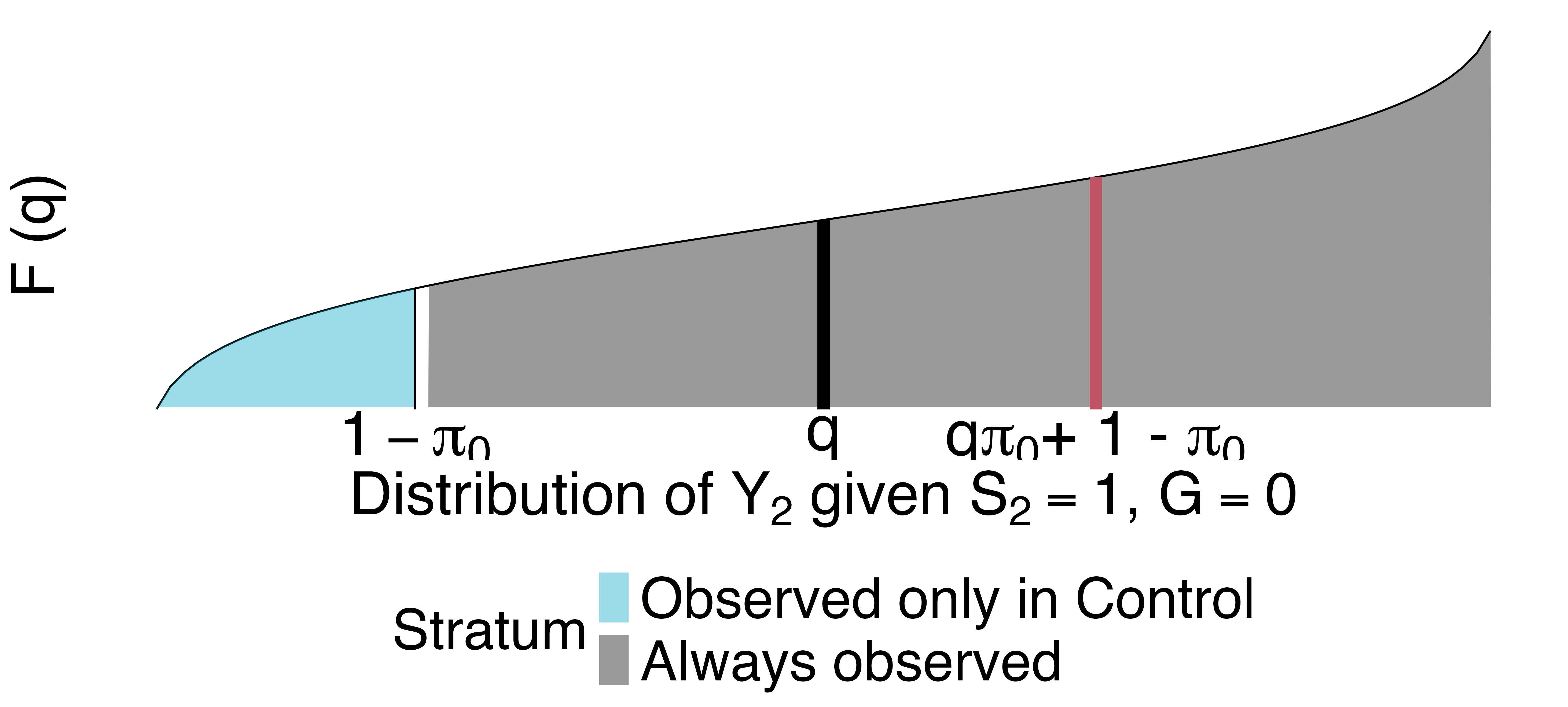}
            \end{minipage}
        \end{minipage}  % End of the first row box
    }

    \vskip\baselineskip

     \raisebox{0.1cm}{$\Lambda^{UB}$}  
    \fbox{
        \begin{minipage}{0.98\textwidth}
            \centering
            \begin{minipage}{0.48\textwidth}
                \centering
                \includegraphics[width=\textwidth]{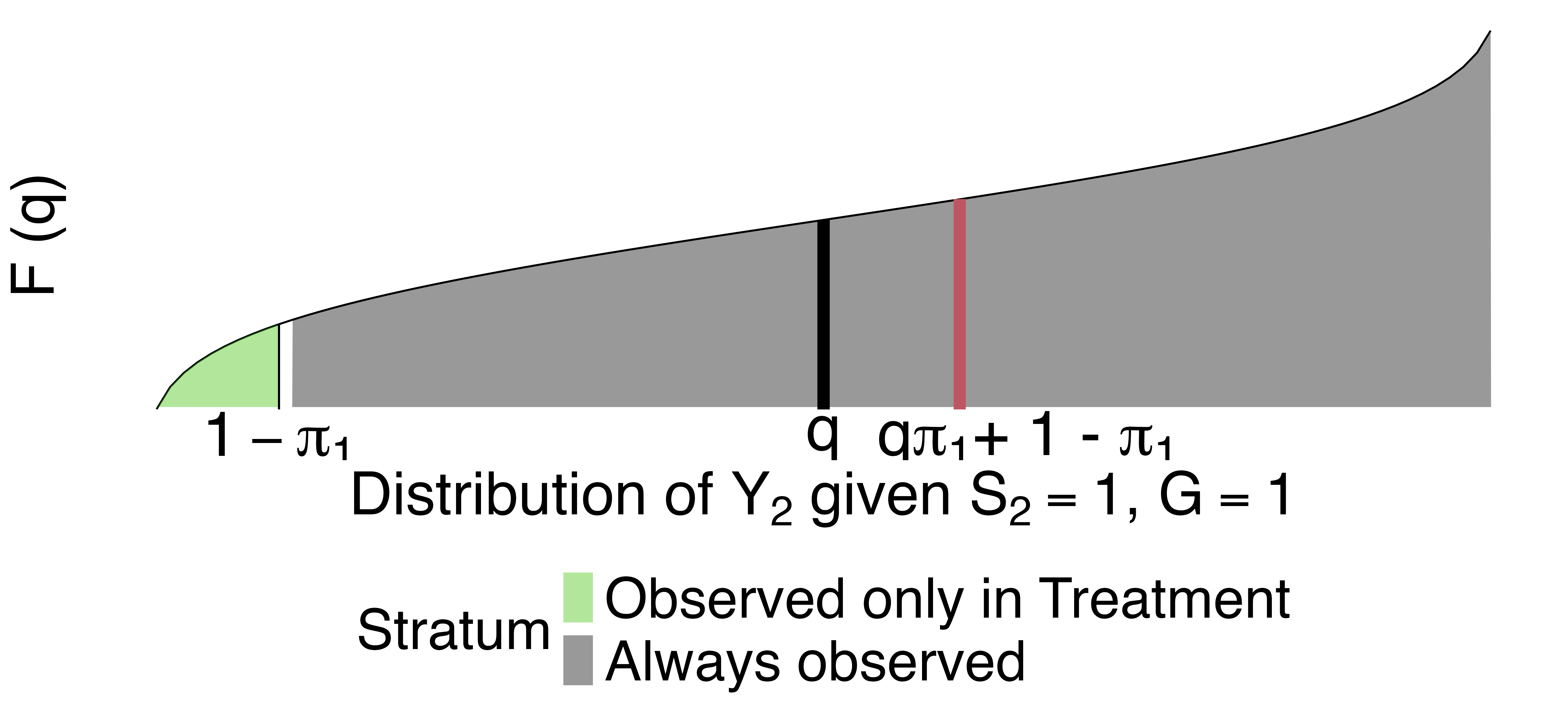}
            \end{minipage}%
            \hfill
            \begin{minipage}{0.48\textwidth}
                \centering
                \includegraphics[width=\textwidth]{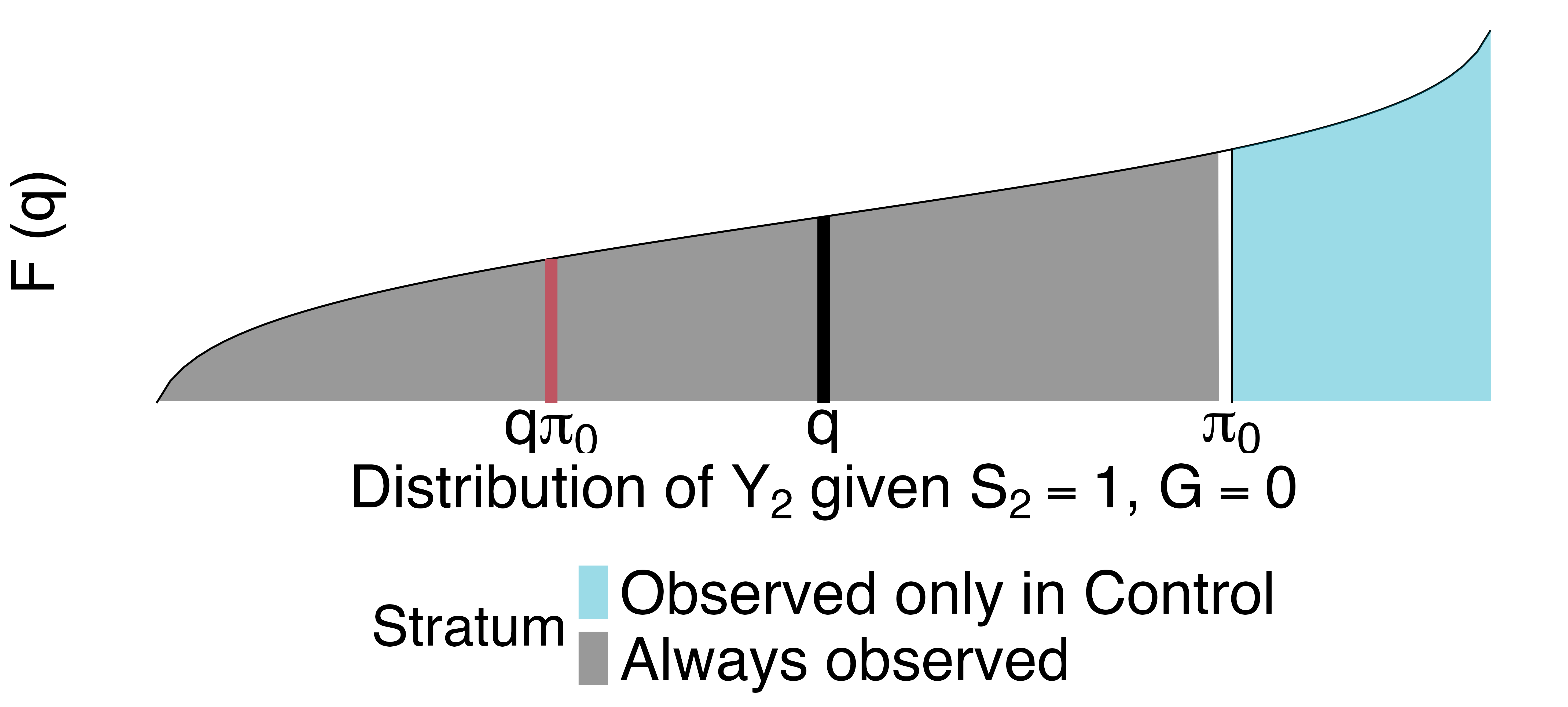}
            \end{minipage}
        \end{minipage}  
    }
    \vskip\baselineskip
\vspace{-0.2cm}
\begin{minipage}{0.99\linewidth  \setstretch{0.75}}
{\scriptsize Notes: This figure plots a hypothetical distribution of $Y_2 \mid S_{2} =1$. The distributions on the left correspond to the Treated group ($G=1$) and the one on the right to the Control group ($G=0$). The shaded areas correspond to the proportion of strata in the given group: black corresponds to the AO units ($\pi_{1}$ on the left and $\pi_{0}$ on the right),  green corresponds to the OT stratum and blue to the OC. The vertical black lines denote where a given quantile $q$ lies in the observed distribution. The vertical red lines denotes where the same quantile $q$ lies in the trimmed distributions, used in Proposition \ref{prop:bounds_qttao}. This figure also illustrates Remark \ref{prop:remark3}: as $\pi_{1}$ and $\pi_{0}$ go to 1, the black shadowed areas expand, and the trimmed (red) quantiles converge to those of the entire distribution (black).
}
 \end{minipage}
 \label{fig:ex_inv_dist}
\end{figure}

\subsection{Identification of principal strata proportions} \label{sec:hiru.prop}
A pivotal factor of Proposition \ref{prop:bounds_qttao} is the proportion of AO units in each group, $\pi_{1}$ and $\pi_{0}$. Next, I impose additional assumptions on the selection mechanism that enable the identification of these proportions.

\begin{assumption} \label{as:mono}
    Monotonicity
    \begin{align*}
        S_{i2}(1) \geq S_{i2}(0) \quad & \forall i \quad \text{Positive Monotonicity} \\
       & \text{or} \\
        S_{i2}(1) \leq S_{i2}(0) \quad & \forall i \quad \text{Negative Monotonicity}         
    \end{align*}
\end{assumption}
Assumption \ref{as:mono} implies that treatment can affect selection only in ‘one direction’ for all the units and rules out one of the four principal strata. Positive monotonicity excludes the OC stratum. This implies that the control units observed in both periods consist solely of Always-Observed units: $S_{i2}(1) \geq S_{i2}(0) \implies Pr(S_{i2}(1) = 0\mid S_{i2}(0) = 1) = 0 \implies \pi_{0} = 1$. Conversely, negative monotonicity rules out the OT stratum, assuming that all the treated units observed in both periods belong to the AO stratum. While restrictive, this assumption is standard in the literature and fundamental to ensure point identification of principal strata proportions. It is analogous to the monotonicity assumption that rules out the existence of defiers in IV settings. In section \ref{sec:hiru.mono}, I propose a novel methodology to relax this assumption when different sources of selection are available in the data.

The preference for positive or negative monotonicity depends on the context. For example, consider a researcher estimating the effect of job training on wages. It is reasonable to assume that a worker employed without training would also be employed with it. This assumption makes positive monotonicity the preferred assumption. Conversely, take the example of a researcher studying the effect of increased tuition fees on university students' performance. In this case, sample selection arises as some students drop out. Under the assumption that higher education costs increase dropouts, it is reasonable to conclude that if a student remains enrolled despite the fee increase, they would also stay enrolled without it. In this case, negative monotonicity is preferred.

When positive (negative) monotonicity is assumed, $\pi_{0}$ ($\pi_{1}$) is equal to one, and only the treatment (control) group needs to be trimmed. I adopt the CiC \parencite{athey_identification_2006} methodology to estimate the proportion of AO units in this group. This identification strategy assumes that an unknown function of individual unobservables determines the potential selection outcomes. This approach has some key advantages over the canonical DiD framework when applied to selection: it allows the identification of the missing potential selection outcomes for both groups. This is crucial when trimming the control group under negative monotonicity. Furthermore, it ensures that all the estimated probabilities lie in the unit interval.

\begin{assumption} \label{as:sele}
    Selection Model \\
    Under Positive Monotonicity:
    \begin{equation*}
    \begin{aligned}
        S_{it}(0) = h^{0}(U_{it},t) .
    \end{aligned}
    \end{equation*}
    Under Negative Monotonicity:
    \begin{equation*}
        \\
        S_{it}(1) = h^{1}(U_{it},t),
    \end{equation*}
    where $U_{it}$ is an unobservable scalar for unit $i$ at time $t$ and $h^{w}(u,t)$ is non decreasing function in $u$ $\forall$ $t \in\{1,2\}$, $w = \{0,1\}$. \\
    The unobservable $U_{}$ is continuously distributed and has the same compact support in both groups. Its distribution is constant over time within groups
    \begin{equation*}
        U_{i1} \mid G_{i} \sim U_{i2} \mid G_{i} .
    \end{equation*}
    Additionally, given the realized selection outcome, the distribution of $U$ is independent of the group in a given time period.
    \begin{equation*}
        U_{it} \perp \! \! \! \perp G_{i} \mid S_{it}
    \end{equation*}
\end{assumption}
Assumption \ref{as:sele} models the selection mechanism as a function of an unobservable individual characteristic, $U_{it}$. Depending on the monotonicity direction in Assumption \ref{as:mono}, a model for $S_{it}(0)$ or $S_{it}(1)$ needs to be assumed. Under Positive (Negative) Monotonicity, the control (treatment) group is composed solely of AO units, and the treatment (control) group is a mixture of AO and OT (OC) units, and therefore $\E[S_{i2}(0)\mid G_{i} =1, S_2 = 1]$ $\left(\E[S_{i2}(1)\mid G_{i} = 0, S_2 = 1]\right)$ needs to be imputed. This selection model shares significant similarities with the outcome model in Assumption \ref{as:outcome}. The main components of both models are the unobservable scalars, $U$ in the case of selection, and $\mathcal{U}$ in the case of the outcome. While they may be independent, the potential correlation between the unobserved factors influencing selection and those determining the outcome is central to the non-ignorable sample selection problem.

There are three noteworthy differences between the outcome model in Assumption \ref{as:outcome} and the selection model in Assumption \ref{as:sele}, albeit both share the general intuition behind a CiC model. First, depending on the direction of the monotonicity assumption, imputing the missing potential selection outcomes may be necessary for the treated or control units. This requires specifying functions for both potential selection outcomes, $S(1)$ in case negative monotonicity holds, and $S(0)$ in case positive monotonicity holds. In contrast, only the missing potential outcome $Y(0)$ for treated units needs to be imputed. Thus, modeling $Y_{it}(1)$ is not required. Second, the unknown function $m(\cdot)$ is assumed to be strictly increasing in the unobservable. Since $S$ is binary, the equivalent assumption cannot be made for $h^{0}(\cdot)$ and $h^{1}(\cdot)$. Nevertheless, the non-decreasing assumption ensures that observed units will not have lower values of the unobservable than the units that leave the sample. Third, Assumption \ref{as:sele} states that $ U_{it} \perp \! \! \! \perp G_{i} \mid S_{it}$. While the corresponding assumption, $ \mathcal{U}_{it} \perp \! \! \! \perp G_{i} \mid Y_{i}, V_{i} = AO$, is not explicitly made, it is embedded in Assumption \ref{as:outcome}. Since the function $m(u,t)$ is strictly increasing in $u$, it follows that $\mathcal{U}_{it} = m^{-1}(Y_{it}, t)$. Thus, when conditioning on $Y$ and $t$, the distribution of $\mathcal{U}$ becomes degenerate and, therefore, identical for both groups. As a result, the conditional independence assumption is trivially satisfied.

\begin{lemma} \label{prop:selection}
    If Assumptions \ref{as:no_anti}, \ref{as:abstate}, \ref{as:rand_samp}, \ref{as:mono}, and \ref{as:sele} hold, then the missing potential selection outcomes are given by:

    Under Positive Monotonicity,
    \begin{align*}
        \E[S_{i2}(0) \mid G_{i} = 1] = \E[S_{i1} \mid G_{i} = 1]\frac{\E[S_{i2} \mid G_{i} =0]}{\E[S_{i1} \mid G_{i} = 0]}.
    \end{align*}
    
    Under Negative Monotonicity,
    \begin{equation*}
        \E[S_{i2}(1) \mid G_{i} = 0] = \E[S_{i1} \mid G_{i} = 0] \frac{\E[S_{i2} \mid G_{i} = 1]}{\E[S_{i1}\mid G_{i} = 1]}.
    \end{equation*}
\end{lemma}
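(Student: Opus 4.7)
The plan is to use the CiC structure on the selection equation to express the missing potential selection outcome in terms of quantities identifiable from the group where that potential outcome is observed. I focus on the first identity; the second follows by the symmetric argument with $h^{1}(\cdot, 2)$ and the treated group in the role of the reference.

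Since $h^{w}(\cdot, t)$ is non-decreasing and binary-valued under Assumption \ref{as:sele}, there exist thresholds $c_{t}^{w}$ such that $h^{w}(u, t) = 1 \iff u \geq c_{t}^{w}$, and by the no-anticipation Assumption \ref{as:no_anti} the baseline selection uses threshold $c_{1}^{0}$ in both groups. I would first apply the within-group stationarity of $U_{it}$ to write
\[
\E[S_{i2}(0) \mid G_{i} = 1] = \Pr(h^{0}(U_{i2}, 2) = 1 \mid G_{i} = 1) = \Pr(U_{i1} \geq c_{2}^{0} \mid G_{i} = 1).
\]
Assumption \ref{as:abstate} combined with the selection model forces $c_{1}^{0} \leq c_{2}^{0}$ (a unit unobserved at baseline must remain unobserved under control at $t = 2$), so $\{U_{i1} \geq c_{2}^{0}\} \subseteq \{S_{i1} = 1\}$ and
\[
\Pr(U_{i1} \geq c_{2}^{0} \mid G_{i} = 1) = \Pr(U_{i1} \geq c_{2}^{0} \mid S_{i1} = 1, G_{i} = 1) \cdot \E[S_{i1} \mid G_{i} = 1].
\]
The conditional-independence assumption $U_{it} \ind G_{i} \mid S_{it}, t$ evaluated at $t = 1$ then transports the first factor to the control group: $\Pr(U_{i1} \geq c_{2}^{0} \mid S_{i1} = 1, G_{i} = 1) = \Pr(U_{i1} \geq c_{2}^{0} \mid S_{i1} = 1, G_{i} = 0)$. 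Running the same chain in the control group (where $S_{i2} = S_{i2}(0)$) gives $\E[S_{i2} \mid G_{i} = 0] = \Pr(U_{i1} \geq c_{2}^{0} \mid S_{i1} = 1, G_{i} = 0) \cdot \E[S_{i1} \mid G_{i} = 0]$, so the transported factor equals $\E[S_{i2} \mid G_{i} = 0] / \E[S_{i1} \mid G_{i} = 0]$. Substituting back yields the first identity; the second is obtained identically with $h^{1}(\cdot, 2)$ (threshold $c_{2}^{1}$) and the treated group playing the role of the reference.

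The hard part will be making precise the threshold ordering $c_{1}^{0} \leq c_{2}^{0}$ (and $c_{1}^{0} \leq c_{2}^{1}$ for the second identity) from Assumption \ref{as:abstate} in a framework where $U_{i1}$ and $U_{i2}$ are only assumed stationary rather than time-invariant; this is where the CiC-style specification of the selection mechanism has to carry enough time persistence for the absorbing-state constraint to translate into an inequality on thresholds. Monotonicity (Assumption \ref{as:mono}) plays the complementary role of ruling out one of the extra strata (OC under positive, OT under negative monotonicity), which is what makes the identified probabilities the proportions $\pi_{0}$ and $\pi_{1}$ of AO units among the observed rather than a mixture; once the threshold ordering is in hand, the rest of the argument is essentially a double application of the absorbing-state decomposition combined with a single invocation of the conditional independence at the baseline period.
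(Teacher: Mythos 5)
Your overall route---a direct threshold representation of the binary CiC selection model, within-group stationarity of $U$, and the conditional independence $U_{it}\ind G_i\mid S_{it},t$ at baseline---is the right one; note the paper itself does not prove this lemma but cites Theorem 4.2 of \textcite{athey_identification_2006}, so you are in effect reconstructing that proof for binary outcomes. The genuine gap is exactly the step you flag: Assumption \ref{as:abstate} does \emph{not} force the threshold ordering $c_1^0\le c_2^0$. The absorbing-state condition restricts realized selection paths, hence the \emph{joint} law of $(U_{i1},U_{i2})$ relative to the thresholds, while only the marginals of $U$ are assumed stationary; $c_2^0<c_1^0$ is compatible with $S_{i1}=0\Rightarrow S_{i2}=0$ so long as the joint law puts no mass on $\{U_{i1}<c_1^0,\,U_{i2}\ge c_2^0\}$. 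A second, related slip: you use the inclusion $\{U_{i1}\ge c_2^0\}\subseteq\{S_{i1}=1\}$ \emph{in the treated group}, but there the observed $S_{i2}$ is $S_{i2}(1)$, so the absorbing state only constrains $c_2^1$; even a repaired ordering argument for $c_2^0$ is available only in the control group.

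The fix is distributional rather than deterministic, and it closes the gap. In the control group, $S_{i1}=0\Rightarrow S_{i2}(0)=0$ gives $\Pr(U_{i2}\ge c_2^0,\,U_{i1}<c_1^0\mid G_i=0)=0$, hence $\Pr(U_{i2}<c_2^0\mid G_i=0)\ge\Pr(U_{i1}<c_1^0\mid G_i=0)$, and by within-group stationarity of the marginal of $U$ this yields $\Pr(c_2^0\le U_{i1}<c_1^0\mid G_i=0)=0$. That kills the $S_{i1}=0$ term in the control-group decomposition, so $\E[S_{i2}\mid G_i=0]=\Pr(U_{i1}\ge c_2^0\mid S_{i1}=1,G_i=0)\,\E[S_{i1}\mid G_i=0]$. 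For the treated group you must transport \emph{both} terms, not only the $S_{i1}=1$ factor: $\Pr(U_{i1}\ge c_2^0\mid S_{i1}=0,G_i=1)=\Pr(U_{i1}\ge c_2^0\mid S_{i1}=0,G_i=0)=0$ by $U_{i1}\ind G_i\mid S_{i1}$ (degenerate cases with $\E[S_{i1}\mid G_i=g]=1$ are handled by the common-support condition). With these two facts your factorization and the ratio formula follow, and the second identity is the mirror image: the absorbing state applied in the treated group gives $\Pr(c_2^1\le U_{i1}<c_1^0\mid G_i=1)=0$, which conditional independence transports to the control group. Monotonicity, as you correctly suspect, is not needed for the lemma's formulas; it matters only when converting them into the proportions $\pi_0$ and $\pi_1$ in Proposition \ref{prop:proportions}.
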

\begin{remark}
    Both $ \E[S_{i2}(0) \mid G_{i} = 1]$ and $\E[S_{i2}(1) \mid G_{i} = 0]$ will always lie in the unit interval. They will always be well-defined, except for the case when $\E[S_{i1} \mid G=g]=0$, which would mean that no unit is observed in the pre-treatment period in group $g$.
\end{remark}
Lemma \ref{prop:selection} is equivalent to Theorem 4.2 in \textcite{athey_identification_2006}, and the proof can be found in \textcite[458]{athey_identification_2006}. Under the CiC model for selection, the missing selection outcome in the post-treatment period for group $g$  is given by their baseline selection in the pre-treatment period scaled by the proportional change for the other group. Table \ref{tab:example_cic_selection} provides a numerical example for illustration purposes. 
\begin{table}[H] \centering
\caption{Numerical example of Lemma \ref{prop:selection}}
\begin{tabular}{|c|c|c|c|}
\hline
Group ($G$) & $\E[S_{1}]$ & $\E[S_{2}(G)]$ & $\E[S_{2}(1-G)]$  \\ \hline 
   Treatment $(G=1)$ & 0.9 & 0.55 & 0.45 \\
   Control $(G = 0)$ & 0.7 & 0.35 & 0.43 \\ \hline
\end{tabular}
\label{tab:example_cic_selection}
%\vspace{-0.2cm}
\begin{minipage}{0.81\linewidth  \setstretch{0.75}}
{\scriptsize  Notes: This table illustrates Lemma \ref{prop:selection} with a numerical example. Columns 2 and 3 display the observed mean in the selection indicator in each of the groups. Column 2 corresponds to the pre-treatment period and column 3 to the post-treatment period. The fourth column presents the missing potential selection outcome for each of the groups, imputed using Lemma \ref{prop:selection}.}
 \end{minipage}
\end{table}

\begin{remark} \label{prop:remark5}
    The treatment effect on selection, $\E[S_{i2}(1) - S_{i2}(0) \mid G_{i} = g]$ will have the same sign for both groups $g \in \{0,1\}$. This is consistent with the monotonicity assumption. Furthermore, these two effects will be the same if the expected selection in the pre-treatment period is the same across groups, i.e., $\E[S_{i1} \mid G_{i}= 1] = \E[S_{i1} \mid G_{i} = 0]$. This aligns with the selection model in Assumption \ref{as:sele} as
    \begin{equation*}
        \E[S_{i1} \mid G_{i}= 1] = \E[S_{i1} \mid G_{i} = 0] \iff U_{it} \ind G_{i} \implies \E[S_{i2}(1) - S_{i2}(0) ] \ind G_{i}
    \end{equation*}
    Proof: See Appendix \ref{proof:remark5}
\end{remark}

\begin{proposition} \label{prop:proportions}
    Under Assumptions \ref{as:no_anti}, \ref{as:abstate}, \ref{as:rand_samp}, \ref{as:mono}, and \ref{as:sele}, the proportion of Always-Observed units in the treatment group ($\pi_{1}$) and control group ($\pi_{0}$) are identified as follows:
    \begin{itemize}
        \item[] If Positive Monotonicity holds:
        \begin{align*}
            \pi_{0} &= 1 \\
            \pi_{1} & = \frac{\E[S_{i2}(0) \mid G_{i} = 1]}{\E[S_{i2} \mid G_{i} = 1]} = \frac{\E[S_{i1} \mid G_{i} = 1]}{\E[S_{i2} \mid G_{i} = 1]}\frac{\E[S_{i2} \mid G_{i} = 0]}{\E[S_{i1} \mid G_{i} = 0]} \in [0  ,1]
        \end{align*} 
        \item[] If Negative Monotonicity holds:
        \begin{align*}
            \pi_{0} &= \frac{\E[S_{i2}(1) \mid G_{i} = 0]}{\E[S_{i2} \mid G_{i} = 0]} = \frac{\E[S_{i1}\mid G_{i} = 0]}{\E[S_{i2} \mid G_{i} = 0 ]}\frac{\E[S_{i2} \mid G_{i} = 1]}{\E[S_{i1}\mid G_{i} = 1]} \in [0,1] \\
            \pi_{1} &= 1
        \end{align*}
    \end{itemize}
    Proof: See Appendix \ref{proof:proportions}.
\end{proposition}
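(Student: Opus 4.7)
\textbf{Proof proposal for Proposition \ref{prop:proportions}.} My plan is to start from the definitions of $\pi_{1}$ and $\pi_{0}$ as conditional probabilities involving the potential selection indicators, use the monotonicity assumption to collapse one of the two joint events into a single event, and then apply Lemma \ref{prop:selection} to express the missing potential selection outcomes in terms of observed means.

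Consider first positive monotonicity, $S_{i2}(1)\geq S_{i2}(0)$. I would note that this inequality implies the event $\{S_{i2}(0)=1\}$ is contained in $\{S_{i2}(1)=1\}$, so the OC stratum has zero mass. Applying this inclusion to the definition of $\pi_{0}$ immediately gives $\pi_{0}=\Pr(S_{i2}(1)=1\mid G_{i}=0,S_{i2}(0)=1)=1$. For $\pi_{1}$, I would write
\begin{equation*}
\pi_{1}=\frac{\Pr(S_{i2}(0)=1,S_{i2}(1)=1\mid G_{i}=1)}{\Pr(S_{i2}(1)=1\mid G_{i}=1)}=\frac{\E[S_{i2}(0)\mid G_{i}=1]}{\E[S_{i2}(1)\mid G_{i}=1]},
\end{equation*}
where the second equality again uses the containment of events from positive monotonicity. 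The denominator is directly observed, since $\E[S_{i2}(1)\mid G_{i}=1]=\E[S_{i2}\mid G_{i}=1]$ by consistency. For the numerator, which is a counterfactual mean, I would plug in the first identity of Lemma \ref{prop:selection}, yielding the claimed closed-form expression.

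The remaining step is to verify that this expression lies in $[0,1]$. Non-negativity is immediate, because every term is a probability. For the upper bound I would invoke positive monotonicity once more at the population level: $S_{i2}(1)\geq S_{i2}(0)$ forces $\E[S_{i2}(0)\mid G_{i}=1]\leq \E[S_{i2}(1)\mid G_{i}=1]$, so the ratio is at most one. Equivalently, Remark \ref{prop:remark5} guarantees that the treatment effect on selection has the same (here, non-negative) sign in both groups under Assumption \ref{as:sele}, and Assumption \ref{as:abstate} ensures $\E[S_{i2}\mid G_{i}=0]\leq \E[S_{i1}\mid G_{i}=0]$, so the CiC scaling factor $\E[S_{i2}\mid G_{i}=0]/\E[S_{i1}\mid G_{i}=0]$ is itself in $[0,1]$. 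Combining these observations delivers $\pi_{1}\in[0,1]$.

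The negative monotonicity case is entirely symmetric: $S_{i2}(1)\leq S_{i2}(0)$ empties the OT stratum, giving $\pi_{1}=1$, and the analogous manipulation combined with the second identity of Lemma \ref{prop:selection} yields the stated expression for $\pi_{0}$, again contained in $[0,1]$. I do not expect any serious obstacle in the proof; the main subtlety is ensuring that the ratio derived from the CiC imputation respects the unit-interval constraint, which is where the interplay between Assumption \ref{as:abstate} (missingness as an absorbing state) and the monotonicity direction in Assumption \ref{as:mono} does the key work. All other steps are manipulations of conditional probabilities and direct appeals to Lemma \ref{prop:selection}.
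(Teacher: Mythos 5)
Your proposal is correct and follows essentially the same route as the paper's proof: you collapse the joint selection event using monotonicity (the paper does this via Bayes' theorem, you via event containment, which is equivalent), then substitute the CiC-imputed counterfactual mean from Lemma \ref{prop:selection}. Your additional verification that the identified ratio lies in $[0,1]$ via population-level monotonicity is a sound bonus step that the paper's appendix proof does not spell out.
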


\subsection{Relaxing the monotonicity assumption} \label{sec:hiru.mono}
Assumption \ref{as:mono} can be restrictive, as it requires the treatment to affect selection in only one direction for all the units, even when selection may respond to treatment through multiple channels. Similarly, Assumption \ref{as:sele} implies that all unobservables affecting selection can be summarized in a single scalar, ruling out different unobservables affecting selection in distinct ways. When multiple sources of selection are observed, the monotonicity assumption can be relaxed by allowing each source to exhibit monotonicity independently, potentially with different signs. For instance, consider a scholarship program designed to improve academic performance. Such a scholarship may affect selection through two distinct mechanisms: reducing dropout rates (positive monotonicity) while increasing graduation rates (negative monotonicity).

Suppose that there are $J$ different sources of sample selection, where $s_{it}^{j}$ is a binary variable equal to 0 if unit $i$'s outcome is not observed because of source $j$ and 1 otherwise. In the previous example, $J=2$, with $s_{it}^{1}$ equal to 0 if the student $i$ drops out at period $t$, and $s_{it}^{2} = 0$ if the student graduates. By definition, these sources are mutually exclusive, meaning that when a unit leaves the sample, it is due to a specific source $j$. For instance, students may leave the sample either because they graduated or because they dropped out before completing the degree, but both events cannot happen simultaneously. This framework allows the expression of the generic selection indicator as the product of the different sources of selection: $S_{it} = \prod_{j=1}^{J}s_{it}^{j}$, where the mutual exclusivity of the sources implies that $\sum_{j = 1}^{J}s_{it}^{j} \in \{J-1, J\}$ $ \forall i,t$. This notation extends to potential selection indicators as well: $S_{it}(w) = \prod_{j=1}^{J}s_{it}^{j}(w)$. 

The four strata defined in Table \ref{tab:strata} exist for each source $j$. I can define analogously a variable $V_{i}^{j}$ indicating the principal stratum of unit $i$ defined by the joint values of the potential selection indicator of the source $j$, $(s_{i2}^{j}(1), s_{i2}^{j}(0))$.
\begin{assumption} \label{as:cond_mono}
    Source-specific monotonicity \\
    For any source $j\in\{1,..., J \}$:
        \begin{align*}
        s_{i2}^{j}(1) \geq s_{i2}^{j}(0) \quad & \forall i \quad \text{Positive Monotonicity} \\
       & \text{or} \\
        s_{i2}^{j}(1) \leq s_{i2}^{j}(0) \quad & \forall i \quad \text{Negative Monotonicity}         
    \end{align*}
\end{assumption}
Assumption \ref{as:cond_mono} is analogous to Assumption \ref{as:mono} defined for each specific source of selection. It rules out the existence of one stratum for each source $j$ ($V_{i}^{j} = OC$ if positive monotonicity is assumed and $V_{i}^{j} = OT$ if negative monotonicity is assumed). Similarly, I can formulate an analogous, source-specific Assumption \ref{as:sele}:
\begin{assumption} \label{as:sele_ss}
    Source-specific Selection Model \\
    For any source $j\in\{1,...,J\}$:\\
    Under Positive Monotonicity,
    \begin{equation*}
    \begin{aligned}
        s_{it}^{j}(0) = h_j^{0}(u_{it}^{j},t) .
    \end{aligned}
    \end{equation*}
    Under Negative Monotonicity,
    \begin{equation*}
        s_{it}^{j}(1) = h_j^{1}(u_{it}^{j},t),
    \end{equation*}
    where $u_{it}^{j}$ is an unobservable scalar for unit $i$ at time $t$ and $h_j^{w}(u,t)$ is non decreasing function in $u$ $\forall$ $t \in\{1,2\}$, $w = \{0,1\}$. \\
    The unobservable $u_{}^{j}$ is continuously distributed and has the same compact support in both groups. Its distribution is constant over time within groups
    \begin{equation*}
        u_{i1}^{j} \mid G_{i} \sim u_{i2}^{j} \mid G_{i} .
    \end{equation*}
    Additionally, given the realized selection outcome, the distribution of $u^{j}$ is independent of the group in a given time period.
    \begin{equation*}
        u^{j}_{it} \perp \! \! \! \perp G_{i} \mid s_{it}^{j}
    \end{equation*}
\end{assumption}
Assumption \ref{as:sele} states that unobservables affecting selection can be summarized by a single scalar that affects the eventual selection in a monotonic way. Assumption \ref{as:sele_ss} relaxes this assumption by allowing for multiple unobservable scalars that may affect selection in different directions. Going back to the college example, one can think of two distinct unobservables affecting selection: ability and motivation. Higher ability increases the likelihood of graduation, whereas lower motivation increases the likelihood of dropping out. Because students with high ability and high motivation may leave the sample through graduation, while those with low ability and low motivation may leave through dropout, it is difficult to summarize these distinct selection forces using a single index.

Let $\mathcal{X}_{j}$ denote the set of units that belong to stratum $X$ according to source $j$. For instance, $\mathcal{AO}_{j} = \{i : V_{i}^{j} = AO\} \equiv \{i:s_{i2}^{j}(1) = 1, s_{i2}^{j}(0) = 1\}$.
\begin{assumption} \label{as:intersec}
    No intersection of OC and OT strata. \\
    \begin{equation*}
       \mathcal{OT}_{j} \cap \mathcal{OC}_{k} = \varnothing \quad \forall j,k \in \{1,...,J\}
    \end{equation*}
\end{assumption}
Assumption \ref{as:intersec} rules out the scenario where a unit belongs to the OT stratum defined by source $j$ and to the OC stratum defined by a different source. Consequently, units that belong to the OC or OT strata according to source $j$ can only belong to the AO stratum according to all the other sources. Formally, $s_{i2}^{j}(1) \neq s_{i2}^{j}(0) \implies s_{i2}^{k}(1) = s_{i2}^{k}(0) = 1$ $\forall k \neq j$.

For instance, consider a student who drops out in the control group but remains enrolled if treated ($s_{i2}^{1}(0) = 0,s_{i2}^{1}(1) = 1$). Since sources are mutually exclusive and $s_{i2}^{1}(0) = 0$, it must be true that this unit does not graduate in the control group, $s_{i2}^{2}(0) = 1$. Assumption \ref{as:intersec} rules out the possibility that this student, who drops out in the control group, graduates in the treatment group, i.e., assumes that $s_{i2}^{2}(1) = 1$. Hence, given that the student belongs to the OT stratum defined by the joint values of the dropout source, $(s_{i2}^{1}(0) , s_{i2}^{1}(1) ) = (0, 1)$, they must belong to the AO stratum defined by the joint values of the graduation source, $(s_{i2}^{2}(1),s_{i2}^{2}(0)) = (1,1)$.

\begin{remark} \label{prop:remark6}
 If sources are mutually exclusive, and Assumption \ref{as:mono} holds, then Assumptions \ref{as:cond_mono} and \ref{as:intersec} also hold, with all the sources of selection having the same sign of monotonicity. On the other hand, when Assumptions \ref{as:cond_mono} and \ref{as:intersec} hold, then:
 \begin{itemize}
     \item If all sources exhibit the same sign of monotonicity, then Assumption \ref{as:mono} also holds.
     \item If some sources exhibit positive monotonicity while others exhibit negative monotonicity, then Assumption \ref{as:mono} is violated.
 \end{itemize}
\end{remark}

While I acknowledge that Assumptions \ref{as:cond_mono} and \ref{as:intersec} are restrictive, they relax Assumption \ref{as:mono} as noted in Remark \ref{prop:remark6}. This relaxation,  which allows the existence of the four principal strata, comes at no cost, as the proportion of AO units in each group remains point-identified. This section can therefore be viewed as a generalization of Section \ref{sec:hiru.prop}: when Assumptions \ref{as:mono} and \ref{as:sele} hold, then Assumptions \ref{as:cond_mono}, \ref{as:sele_ss}, and \ref{as:intersec} also hold; when Assumptions \ref{as:mono} and \ref{as:sele} fail, then Assumptions \ref{as:cond_mono}, \ref{as:sele_ss}, and \ref{as:intersec} may still hold. A significant limitation, though, is that different sources must exist and be identifiable within the data, which is not always guaranteed. Intuitively, if selection is driven by multiple channels operating in different directions, the assumptions in Section~\ref{sec:hiru.prop} are violated. If these channels manifest themselves in distinct, observable sources of selection, the assumptions in this section may still be satisfied. If not, these assumptions are violated as well.

\begin{lemma} \label{prop:strata_multiple_sources}
     If assumptions \ref{as:cond_mono} and \ref{as:intersec} hold, the four principal strata defined by the joint values of $(S_{i2}(1),S_{i2}(0))$ that partition the population are given by:
     \begin{align*}
         \mathcal{AO} = \bigcap\limits_{j} \mathcal{AO}_{j}  \quad ; \quad
         \mathcal{NO} = \bigcup\limits_{j} \mathcal{NO}_{j}  \quad ; \quad
         \mathcal{OT} = \bigcup\limits_{j} \mathcal{OT}_{j}  \quad ; \quad
         \mathcal{OC} = \bigcup\limits_{j} \mathcal{OC}_{j}
     \end{align*}
     Proof:  See Appendix \ref{proof:strata_multiple_sources}.
\end{lemma}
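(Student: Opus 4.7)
The plan is to prove each of the four set equalities separately by showing double inclusion, exploiting the product representation $S_{i2}(w) = \prod_{j=1}^{J} s_{i2}^{j}(w)$ together with the mutual exclusivity $\sum_{j=1}^{J} s_{it}^{j} \in \{J-1, J\}$ and Assumption \ref{as:intersec}. I would first dispose of the $\mathcal{AO}$ case: since each $s_{i2}^{j}(w) \in \{0,1\}$, the product equals one if and only if every factor equals one, so $S_{i2}(1) = S_{i2}(0) = 1$ is literally equivalent to $s_{i2}^{j}(1) = s_{i2}^{j}(0) = 1$ for every $j$, giving $\mathcal{AO} = \bigcap_{j} \mathcal{AO}_{j}$ with no extra assumptions.

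For the $\mathcal{OT}$ and $\mathcal{OC}$ cases the easy direction is again a direct consequence of the product representation: if $s_{i2}^{j}(1) = 1$ and $s_{i2}^{j}(0) = 0$ for some $j$, then $S_{i2}(0) = 0$ and, by mutual exclusivity of sources in the treated state, every other $s_{i2}^{k}(1)$ must equal one, giving $S_{i2}(1) = 1$, hence $i \in \mathcal{OT}$. For the reverse inclusion, I would observe that if $i \in \mathcal{OT}$, mutual exclusivity implies there is exactly one index $j$ with $s_{i2}^{j}(0) = 0$, and then $s_{i2}^{j}(1) = 1$ follows either from positive monotonicity for source $j$ or from $S_{i2}(1) = 1$ together with the fact that $s_{i2}^{j}(1) \in \{0,1\}$; therefore $i \in \mathcal{OT}_{j}$. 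The $\mathcal{OC}$ argument is symmetric (swap the roles of the two potential selection outcomes).

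The main obstacle, as I see it, is the $\mathcal{NO}$ case, because $S_{i2}(1) = S_{i2}(0) = 0$ only tells me that at least one source fires under treatment and at least one (possibly different) source fires under control; it does not automatically hand me a single source that fires in both states. The plan here is a short proof by contradiction. The easy inclusion $\bigcup_{j}\mathcal{NO}_{j} \subseteq \mathcal{NO}$ is immediate from the product representation. For the reverse inclusion, take $i \in \mathcal{NO}$, and let $j$ be the (unique, by mutual exclusivity) source with $s_{i2}^{j}(1) = 0$ and $k$ the source with $s_{i2}^{k}(0) = 0$. If $j = k$ we are done, because then $i \in \mathcal{NO}_{j}$. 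If $j \neq k$, mutual exclusivity under each potential state forces $s_{i2}^{j}(0) = 1$ and $s_{i2}^{k}(1) = 1$, so $i \in \mathcal{OC}_{j}$ and $i \in \mathcal{OT}_{k}$, which contradicts Assumption \ref{as:intersec}. Hence $j = k$ and $i \in \mathcal{NO}_{j} \subseteq \bigcup_{j}\mathcal{NO}_{j}$.

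Finally, I would remark that the four identities jointly recover the usual partition of the population into $\{\mathcal{AO}, \mathcal{NO}, \mathcal{OT}, \mathcal{OC}\}$, which serves as a sanity check and closes the argument. Assumption \ref{as:cond_mono} plays no explicit role in the set-theoretic manipulations but ensures that each source-level stratum $\mathcal{OC}_{j}$ or $\mathcal{OT}_{j}$ (whichever is ruled out by the sign of monotonicity for source $j$) is empty, so that the union representations remain consistent with the subsequent identification of the strata proportions in Section \ref{sec:hiru.prop}.
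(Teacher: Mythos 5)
Your handling of $\mathcal{AO}$, of the reverse inclusions $\mathcal{OT} \subseteq \bigcup_{j}\mathcal{OT}_{j}$ and $\mathcal{OC} \subseteq \bigcup_{j}\mathcal{OC}_{j}$, and of both directions for $\mathcal{NO}$ is correct, and your contradiction argument for $\mathcal{NO} \subseteq \bigcup_{j}\mathcal{NO}_{j}$ invokes Assumption \ref{as:intersec} exactly where the paper does. The gap is in the ``easy direction'' for $\mathcal{OT}$ (and symmetrically for $\mathcal{OC}$): you assert that if $i \in \mathcal{OT}_{j}$, i.e.\ $s_{i2}^{j}(0)=0$ and $s_{i2}^{j}(1)=1$, then ``by mutual exclusivity of sources in the treated state, every other $s_{i2}^{k}(1)$ must equal one.'' Mutual exclusivity only says that at most one source fires within a given potential state; knowing that source $j$ does \emph{not} fire under treatment leaves open that some other source $k \neq j$ fires under treatment, in which case $S_{i2}(1)=0$ and $i \notin \mathcal{OT}$. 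So $\mathcal{OT}_{j} \subseteq \mathcal{OT}$ does not follow from mutual exclusivity alone, and as written this step would fail.

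The inclusion is still true, but it requires Assumption \ref{as:intersec}, via the same device you use for $\mathcal{NO}$: if $s_{i2}^{k}(1)=0$ for some $k \neq j$, then mutual exclusivity under control together with $s_{i2}^{j}(0)=0$ forces $s_{i2}^{k}(0)=1$, so $i \in \mathcal{OC}_{k} \cap \mathcal{OT}_{j}$, contradicting Assumption \ref{as:intersec}; hence $s_{i2}^{k}(1)=1$ for all $k$, $S_{i2}(1)=1$, and $i \in \mathcal{OT}$. Note that Assumption \ref{as:cond_mono} cannot substitute here, since it only pins down $s_{i2}^{k}(1)$ for the positively monotone sources $k$; for negatively monotone sources the stratum $\mathcal{OC}_{k}$ is not ruled out by monotonicity. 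This is precisely the implication $s_{i2}^{j}(1)\neq s_{i2}^{j}(0) \implies s_{i2}^{k}(1)=s_{i2}^{k}(0)=1$ for all $k \neq j$ stated in the text after Assumption \ref{as:intersec}, which the paper's own proof relies on. With that one-line repair (and its mirror image for $\mathcal{OC}$), your argument coincides in structure and substance with the proof in Appendix \ref{proof:strata_multiple_sources}.
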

Lemma \ref{prop:strata_multiple_sources} states that to belong to the AO stratum, a unit must belong to the AO stratum defined by the potential selection indicators of all sources. Conversely, if a unit belongs to the NO, OT, or OC stratum for any source, it will also belong to the corresponding principal stratum defined by the overall selection indicator $S$. Table \ref{tab:example_mult_sourc} provides an example with two different sources of selection.

\begin{proposition} \label{prop:proportions_multiple_sources}
    Let $J^{+}$ denote the set of sources for which positive monotonicity holds and $J^{-}$ the set of sources for which negative monotonicity holds. Under Assumptions \ref{as:no_anti}, \ref{as:abstate}, \ref{as:rand_samp}, \ref{as:cond_mono}, \ref{as:sele_ss}, and \ref{as:intersec}, the proportion of Always-Observed units in the control group ($\pi_{0}$) and the treatment group ($\pi_{1}$) are identified as follows:
        \begin{align*}
            \pi_{0} = \frac{1}{\E[S_{i2} \mid G_{i} = 0]}\left(1 - \sum_{j \in J^{+}}\left(1 - \E[s_{i2}^{j} \mid G_{i} = 0]\right) - \sum_{j \in J^{-}}\left(1 - \E[s_{i1}^{j} \mid G_{i} = 0]\frac{\E[s_{i2}^{j} \mid G_{i} = 1]}{\E[s_{i1}^{j} \mid G_{i} = 1]}\right)\right) \\
             \pi_{1} = \frac{1}{\E[S_{i2} \mid G_{i} = 1]}\left(1 - \sum_{j \in J^{+}}\left(1 - \E[s_{i1}^{j} \mid G_{i} = 1]\frac{\E[s_{i2}^{j} \mid G_{i} = 0]}{\E[s_{i1}^{j} \mid G_{i} = 0]} \right) - \sum_{j \in J^{-}}\left(1 - \E[s_{i2}^{j} \mid G_{i} = 1]\right)\right)
        \end{align*} 
        Proof:  See Appendix \ref{proof:proportions_multiple_sources}.
\end{proposition}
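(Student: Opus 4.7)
The plan is to express $\pi_1$ and $\pi_0$ as ratios via Bayes' rule, decompose the event $\{V_i = AO\}$ into source-specific pieces using Lemma \ref{prop:strata_multiple_sources}, establish that the complements of those pieces are pairwise disjoint, and then identify each piece by applying Lemma \ref{prop:selection} source-by-source under the CiC selection model in Assumption \ref{as:sele}.

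First, since $V_i = AO$ implies $S_{i2}(1) = S_{i2}(0) = 1$ and hence $S_{i2} = 1$ in both groups, a direct application of Bayes' rule gives
\begin{equation*}
\pi_g = \frac{\Pr(V_i = AO \mid G_i = g)}{\E[S_{i2} \mid G_i = g]}, \qquad g \in \{0,1\},
\end{equation*}
which already matches the denominator of the target formulas. By Lemma \ref{prop:strata_multiple_sources}, $\{V_i = AO\} = \bigcap_{j} \{V_i^j = AO_j\}$, so its complement equals $\bigcup_j E_j$ with $E_j := \{V_i^j \neq AO_j\}$.

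The key step is to show that the $E_j$ are pairwise disjoint, so that
\begin{equation*}
\Pr(V_i = AO \mid G_i = g) = 1 - \sum_{j=1}^{J}\bigl(1 - \Pr(V_i^j = AO_j \mid G_i = g)\bigr).
\end{equation*}
I would argue by case analysis on $V_i^j$. If $V_i^j \in \{OT_j, OC_j\}$, then $s_{i2}^j(1) \neq s_{i2}^j(0)$, and the formal version of Assumption \ref{as:intersec} immediately forces $s_{i2}^k(1) = s_{i2}^k(0) = 1$, i.e., $V_i^k = AO_k$, for every $k \neq j$. If instead $V_i^j = NO_j$, then $s_{i2}^j(0) = s_{i2}^j(1) = 0$, and the mutual exclusivity of sources (which implies $\sum_k s_{i2}^k(w) \geq J - 1$ for each $w$) forces $s_{i2}^k(w) = 1$ for all $k \neq j$ and both $w \in \{0,1\}$, so again $V_i^k = AO_k$. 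Hence at most one $E_j$ can occur per unit.

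Next, I would compute $\Pr(V_i^j = AO_j \mid G_i = g)$ case by case using source-specific monotonicity (Assumption \ref{as:cond_mono}). For $j \in J^+$, positive monotonicity gives $\{V_i^j = AO_j\} = \{s_{i2}^j(0) = 1\}$; for $G_i = 0$ this probability equals $\E[s_{i2}^j \mid G_i = 0]$ directly, while for $G_i = 1$ it is the counterfactual $\E[s_{i2}^j(0) \mid G_i = 1]$, point-identified by applying Lemma \ref{prop:selection} source-wise as $\E[s_{i1}^j \mid G_i = 1]\,\E[s_{i2}^j \mid G_i = 0]/\E[s_{i1}^j \mid G_i = 0]$. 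For $j \in J^-$, the symmetric calculation with $\{V_i^j = AO_j\} = \{s_{i2}^j(1) = 1\}$ gives $\E[s_{i2}^j \mid G_i = 1]$ directly in the treatment group and the analogous CiC expression in the control group. Plugging these four possibilities into the previous display and dividing by $\E[S_{i2} \mid G_i = g]$ yields exactly the two formulas stated.

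The main obstacle is the disjointness of the events $E_j$: the formal version of Assumption \ref{as:intersec} only handles $V_i^j \in \{OT_j, OC_j\}$, so the case $V_i^j = NO_j$ must be dispatched separately by invoking mutual exclusivity of the sources. Without this additional care, one would be led to a cumbersome inclusion-exclusion expression rather than the clean additive closed form stated in the proposition.
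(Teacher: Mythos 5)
Your proof is correct and follows essentially the same route as the paper's: Bayes' rule for $\pi_g$, an additive decomposition of the non-AO probability across sources justified by mutual exclusivity together with Assumption \ref{as:intersec}, source-specific monotonicity to collapse each source-$j$ event, and Lemma \ref{prop:selection} applied source-wise to impute the counterfactual selection means. The only difference is bookkeeping: you sum over pairwise-disjoint per-source events $E_j = \{V_i^j \neq AO_j\}$, whereas the paper first splits $1 - \Pr(V_i = AO)$ into the NO, OT, and OC strata and then sums each over sources—the two decompositions are equivalent and use the same assumptions at the same points.
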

Proposition \ref{prop:proportions_multiple_sources} identifies the proportion of always observed in both the treatment and control groups. By Assumption \ref{as:cond_mono}, some sources exhibit positive monotonicity, while others exhibit negative monotonicity. Following the same logic as in Proposition \ref{prop:proportions}, the former can identify the AO units in the treatment group, and the latter can identify the AO units in the control group. Appendix \ref{proof:equivalence} shows how Proposition \ref{prop:proportions} is a specific case of Proposition \ref{prop:proportions_multiple_sources} with only one source of selection, $J = 1$.

%%%%%%%%%%%%%%%%%%%%%%%%%%%%%%%%%%%%%%%%%%%%%%%%%%%%%%%%%%%%%%%%%%%%
%  SECTION 4: 
%%%%%%%%%%%%%%%%%%%%%%%%%%%%%%%%%%%%%%%%%%%%%%%%%%%%%%%%%%%%%%%%%%%%
\section{Estimation and inference} \label{sec:lau}
This section proposes estimators for the bounds of the $\attao$ and $\qttao(q)$. I show that these estimators are consistent and asymptotically normal and discuss how to construct confidence intervals for the estimated bounds. The estimators are constructed using the sample analogs of the objects in Propositions \ref{prop:bounds_qttao} and \ref{prop:proportions_multiple_sources}. For simplicity and without loss of generality, this section considers the case with two different sources of sample selection. For the first one, $j =1$, positive monotonicity holds. For the second one, $j = 2$, negative monotonicity holds. All the summations in this section are over the entire sample.

\subsection{Estimation}
Let $N$ denote the sample size, $N_{0}$ the number of units in the control group, and $N_{1}$ the number of units in the treatment group. The estimators of the proportions of Always-Observed units defined in Proposition \ref{prop:proportions_multiple_sources} are given by:
\begin{align}
    \hat \pi_{0} &= \frac{1}{\frac{\sum_{i}S_{i2}(1-G_{i})}{N_{0}}}\left(1 - \left(1 - \frac{\sum_{i}s_{i2}^{1}(1 - G_{i})}{N_{0}}\right) - \left(1 - \frac{\sum_{i}s_{i1}^{2}(1-G_{i})}{N_{0}}\frac{\sum_{i}s_{i2}^{2}G_{i}}{\sum_{i}s_{i1}^{2}G_{i}}\right)\right) \label{eq:pi0}\\
    \hat \pi_{1} & = \frac{1}{\frac{\sum_{i}S_{i2}G_{i}}{N_{1}}}\left(1 - \left(1 - \frac{\sum_{i}s_{i1}^{1}G_{i}}{N_{1}}\frac{\sum_{i}s_{i2}^{1}(1-G_{i})}{\sum_{i}s_{i1}^{1}(1-G_{i})}\right) - \left(1 - \frac{\sum_{i}s_{i2}^{2}G_{i}}{N_{1}}\right)\right) \label{eq:pi1}
\end{align}
where I have substituted the expectations in Proposition \ref{prop:proportions_multiple_sources} with their sample analogs.

Given the estimated proportions, $\hat \pi_{0}$ and $\hat\pi_{1}$, the estimators for the bounds of the $\qttao(q)$ defined in Proposition \ref{prop:bounds_qttao} are given by:
\begin{align} \label{eq:est_qttao_bat}
    \widehat{ \Lambda^{LB}}(q) &= \widehat Q_{Y_{2} \mid G = 1, S_{2} = 1}^{}(q\hat\pi_{1}) -  \widehat  Q_{Y_{2} \mid G = 0, S_{2} = 1}^{}\left(\hat q_{LB}^{*}\right),  \\
    \hat{ q}_{LB}^{*} &= \min\left\{\widehat F_{Y_{1} \mid G = 0, S_{2} = 1}^{}\left(\widehat Q_{Y_{1} \mid G=1, S_{2} = 1}^{}(q\hat\pi_{1}+1 - \hat\pi_{1}) \right) + 1 - \hat\pi_{0} ,1\right\}, \\
      \widehat{\Lambda^{UB}}(q)& = \widehat Q_{Y_{2} \mid G = 1 , S_{2} = 1}^{}(q\hat\pi_{1}+1-\hat\pi_{1}) -\widehat Q_{Y_{2} \mid G = 0, S_{2} = 1}^{}\left(  \hat q_{UB}^{*}\right), \\
      \hat q_{UB}^{*} &= \max\left\{\widehat F_{Y_{1} \mid G = 0, S_{2} = 1}^{}\left(\widehat Q_{Y_{1} \mid G=1, S_{2} = 1}^{}(q\hat \pi_{1}) \right)-(1 - \hat \pi_{0}), 0\right\}, \label{eq:est_qttao_lau}
\end{align}
where $\widehat F_{Y}$ denotes the empirical distribution of $Y$. For instance,
\begin{align*}
    \widehat F_{Y_{1} \mid G = 0, S_{2} = 1}(y) &= \frac{\sum_{i}S_{i2}(1-G_{i})\indicator(Y_{i1}\leq y)}{\sum_{i}S_{i2}(1-G_{i})} \\
    \widehat Q_{Y_{2} \mid G=0,  S_{2} = 1}^{}(q) &= \inf\left\{y: \widehat F_{Y_{2} \mid G=0, S_{2} = 1}(y) \geq q\right\}.
\end{align*}

The bounds for the $\attao$ can be estimated under the CiC framework using the estimated bounds for the $\qttao$, as stated in Remark \ref{prop:remark4}:

\begin{equation}
    \widehat{ATT}_{AO} \in \left[\int_{0}^{1} \widehat{ \Lambda^{LB}}(q) dq \quad , \quad \int_{0}^{1}\widehat{\Lambda^{UB}}(q) dq \right]
\end{equation}

\subsection{Asymptotic Normality} \label{sec:lau.norm}
The next propositions establish the consistency and asymptotic normality of the proposed estimators. These results are indispensable for conducting valid inference on the estimated bounds.

\begin{proposition} \label{prop:normal_qttao}
    Asymptotic normality of the bounds of the $\qttao$
    \begin{align}
        \sqrt{n}(\widehat{\Lambda^{LB}}(q) - {\Lambda^{LB}}(q)) \xrightarrow{d} \mathcal{N}\left(0,\varsigma_{LB}^{2}\right) \\
        \sqrt{n}(\widehat{\Lambda^{UB}}(q) - {\Lambda^{UB}}(q)) \xrightarrow{d} \mathcal{N}\left(0,\varsigma_{UB}^{2}\right) 
    \end{align}
    Provided that $$q \in (0,1),$$ $$F_{Y_{1}\mid G=0, S_2 = 1}\left(Q_{Y_{1}\mid G=1,S_2 = 1}(q\pi_1 + 1 - \pi_1)\right) +1 - \pi_{0} < 1,$$ and $$F_{Y_1 \mid G=0, S_2 = 1}\left(Q_{Y_1\mid G=1, S_2 = 0} (q\pi_1)\right) -( 1 - \pi_0) > 0.$$ \\
    Proof: See Appendix \ref{proof:normal_qttao}
\end{proposition}
Under Positive Monotonicity, $\pi_{0} = 1 $ and the bounds are asymptotically normal for any $q \in (0,1)$. When $\pi_{0} \neq 1$, there might be some extreme quantiles in which the estimators in equations (\ref{eq:est_qttao_bat}-\ref{eq:est_qttao_lau}) are the sample maximum/minimum of $Y_{2} \mid G=0$. In these cases, the estimator for $Q_{Y_{2}(0) \mid G_{i} = 1, S_2 = 1}(q)$ follows a extreme value distribution.
\subsection{Confidence Intervals}
Provided that the estimators of the bounds are asymptotically normal, \textcite{imbens_confidence_2004} provide an expression to construct confidence intervals around a partially identified parameter:
\begin{align} \label{eq:ci}
     CI_{\alpha} = \left[\widehat{\Lambda^{LB}} - Z_{\alpha}  \frac{\hat \sigma_{LB}}{\sqrt{n}} \quad ,\quad  \widehat{ \Lambda^{UB}} + Z_{\alpha}  \frac{\hat\sigma_{UB}}{\sqrt{n}}\right]
\end{align}
 where $Z_{\alpha}$ is such that:
    \begin{equation*}
        \Phi\left(Z_{\alpha} + \sqrt{n}\frac{\widehat{\Lambda^{UB}} - \widehat{\Lambda^{LB}} }{\max\{\hat\sigma_{UB},\hat\sigma_{LB}\}}\right) - \Phi\left(-Z_{\alpha}\right) = \alpha,
    \end{equation*}
where $\Phi(\cdot)$ is the c.d.f. of the standard normal distribution. The interval described in equation (\ref{eq:ci}) contains the set $[\Lambda^{LB}, \Lambda^{UB}]$ at least $\alpha\%$ of the times. Intuitively, when the bounds are far away from each other and/or the standard errors of the bounds are small, the confidence interval is constructed by adding or subtracting the one-side critical value associated with a confidence level $\alpha$ (e.g., 1.645 for a 95\% confidence interval) times the standard error of the bound. This critical value increases when the bounds are close to each other or have large standard errors. In the extreme case where $\widehat{ \Lambda^{LB}} = \widehat{\Lambda^{UB}}$ and the $\attao$ is point identified, $Z_{0.95} \approx 1.96$ and equation (\ref{eq:ci}) becomes the standard confidence interval constructed around a point-identified parameter.

%%%%%%%%%%%%%%%%%%%%%%%%%%%%%%%%%%%%%%%%%%%%%%%%%%%%%%%%%%%%%%%%%%%%
%  SECTION 5: 
%%%%%%%%%%%%%%%%%%%%%%%%%%%%%%%%%%%%%%%%%%%%%%%%%%%%%%%%%%%%%%%%%%%%
\section{Empirical application} \label{sec:bost}
This section illustrates the methodology using a job training program, \textit{Jóvenes en Acción,} implemented in Colombia between 2002 and 2005. The program provided three months of in-classroom training and three months of on-the-job training to disadvantaged youth to improve their labor market outcomes. The program has been previously studied by \textcite{attanasio_subsidizing_2011}, who found a positive effect on wages and the probability of having paid employment for women. For men, they find a positive effect on job quality but no significant impact on earnings. In a follow-up study, \textcite{attanasio_vocational_2017} examined the program's long-term effects, finding similar and persistent impacts on earnings and job quality for both men and women. More recently, \textcite{possebom_probability_2024} revisited these results using principal stratification analysis. They partially identify the effect of the training on the probability of being employed in the formal sector for the women who are employed, regardless of their treatment status. For this stratum, they cannot reject the null that the program had no impact on formal employment.

In this paper, I use data from \textcite{attanasio_subsidizing_2011}, which consists of survey responses from 3,955 individuals. The participants were interviewed twice: initially in January 2005, before the training, and then between August and October 2006, after the training. Using the baseline data, I construct a panel dataset. %suitable for DiD and CiC analyses. 
Further details on data collection and program implementation are available in \textcite{attanasio_subsidizing_2011}.

I analyze the effect of the training on salaried wages. This outcome is observed only among those with paid employment, as unemployed and self-employed workers do not receive salaried wages. Of the 3,955 individuals in the sample, 1453 have paid employment in the pre-treatment period. Among these, 20\% were no longer employed in salaried positions during the post-treatment period, and 19\% could not be reached in the post-treatment survey. As a result, I observe salaried wages in both periods for 888 individuals. Table \ref{tab:samplesize} presents these numbers separately for each treatment group.

I can define two different sources of sample selection in my sample: transitioning to unemployment/self-employment and survey non-response. For those with missing salaried earnings, I observe the specific source of selection in the data. I assume negative monotonicity for unemployment/self-employment and positive monotonicity for survey non-response.\footnote{
The negative monotonicity assumption implies that all workers who did not transition into unemployment or self-employment after receiving training would also have retained their salaried positions in the absence of treatment. In the context of disadvantaged youth in Colombia, many salaried jobs are informal. As a result, training may reduce the probability of holding a salaried position by increasing treated workers’ reservation wages, enhancing their entrepreneurial skills and likelihood of self-employment, or raising aspirations that lead to longer job-search durations in the short run following program completion. By contrast, the positive monotonicity assumption implies that all control units that were reachable in the post-treatment period would also have responded to the survey had they been treated. This assumption is plausible if program participation increases respondents’ commitment to survey participation and if attrition is correlated with migration, which the treatment may reduce.
} 
Under these assumptions, I estimate the proportion of units that have paid employment in both treatment arms among my final sample of 888 workers. That is, I estimate the proportions of Always-Observed units, $\pi_{1}$ and $\pi_{0}$ as defined in Proposition \ref{prop:bounds_attao}, using the estimators described in equations (\ref{eq:pi0}) and (\ref{eq:pi1}). I estimate $\hat \pi_{1} = 0.93$ and $\hat \pi_{0} = 0.96$.  In other words, of the 501 units observed in both periods in the treatment group and the 387 units in the control group, 93\% of the treated units and 96\% of the control group units would have also been observed in the opposite group.

Using these estimated proportions, I partially identify the $\attao$. Table \ref{tab:attao_cic} reports the estimates of the bounds with and without the inclusion of covariates.\footnote{The covariate adjustment is described in Appendix \ref{ap:ext_covariates}.} For comparison, columns 3 and 4 present results from a complete-case analysis, which discards observations with missing outcomes and applies the Changes-in-Changes estimator to the subsample of units with observed outcomes.
\begin{table}[H]
    \caption{Estimates of the bounds for the $\attao$}
    \label{tab:attao_cic}
\vspace{-0.1cm} 
\begin{center}
\begin{tabular}{r*{8}{c}}
\toprule

Outcome & \multicolumn{8}{c}{Log of salaried earnings} \\ 

\cmidrule(lr){2-9} 

 Estimand & \multicolumn{4}{c}{$\attao$ } & \multicolumn{4}{c}{Complete Case} \\
 
\cmidrule(lr){2-5}  \cmidrule(lr){6-9} 

Estimate &  [-0.11  ,&  \hspace{-0.3cm} 0.429] & [-0.095 ,& \hspace{-0.3cm} 0.319] & \multicolumn{2}{c}{0.129} & \multicolumn{2}{c}{0.123} \\ 

95\% CI & \footnotesize(-0.169  ,& \footnotesize \hspace{-0.3cm} 0.527) &\footnotesize (-0.158 ,& \footnotesize\hspace{-0.3cm} 0.416) & \footnotesize(0.037 ,&\footnotesize  \hspace{-0.3cm} 0.22) &\footnotesize (0.031 ,& \footnotesize\hspace{-0.3cm} 0.215)  \\

Covariates & \multicolumn{2}{c}{No} &\multicolumn{2}{c}{Yes} & \multicolumn{2}{c}{No} & \multicolumn{2}{c}{Yes} \\

N & \multicolumn{2}{c}{888}&\multicolumn{2}{c}{888}&\multicolumn{2}{c}{888}&\multicolumn{2}{c}{888} \\

\bottomrule
\end{tabular}
 
\end{center}
\vspace{-0.2cm}
\begin{minipage}{1\linewidth \setstretch{0.75} } 
{\scriptsize Notes: Columns 1 and 2 present the estimates of the bounds for the $\attao$ using the estimated proportions $\hat \pi_{1} = 0.93$ and $\hat \pi_{0} = 0.96$. Column 1 reports the bounds estimated without covariates with the estimators presented in Section \ref{sec:lau}, while Column 2 includes education as a covariate as explained in Appendix \ref{ap:ext_covariates}. Columns 3 and 4 show estimates using the complete case analysis without and with covariates respectively. The 95\% confidence intervals are computed using equation (\ref{eq:ci}).
}
 \end{minipage}
\end{table}

Table \ref{tab:attao_cic} illustrates the importance of the methodology developed in this paper. Disregarding the problem of sample selection and estimating the ATT using units with observed outcomes suggests that the training had a positive and significant effect on salaried earnings, increasing them by 12\% for treated units. However, this increase may be driven by differences in the distribution of outcomes across strata. Instead, I estimate the intensive-margin effect of the training by targeting treated units with paid employment regardless of their treatment status. The bounds for this effect include 0. Accordingly, I cannot reject the null hypothesis that the program had no impact on wages at any significance level. 

In this context, it is relevant to examine the distributional effects. Suppose the treatment effects are heterogeneous along the outcome distribution. In that case, the bounds for the average effect may include zero even if the effect is positive and significant in some parts of the distribution. For instance, consider two different training programs. The first only benefits workers at the top of the salaried earnings distribution. The other only benefits workers at the bottom. These two policies may generate the same bounds for the $\attao$. However, their policy implications change drastically: while the first program increases wage dispersion, the second helps reduce earnings inequality. Figure \ref{fig:qttao} plots the $\qttao(q)$ for different quantiles $q$.
\begin{figure}[H]
    \caption{$\qttao(q)$. Outcome: log of salaried earnings}
    \label{fig:qttao}
    \centering
\includegraphics[width = \linewidth]{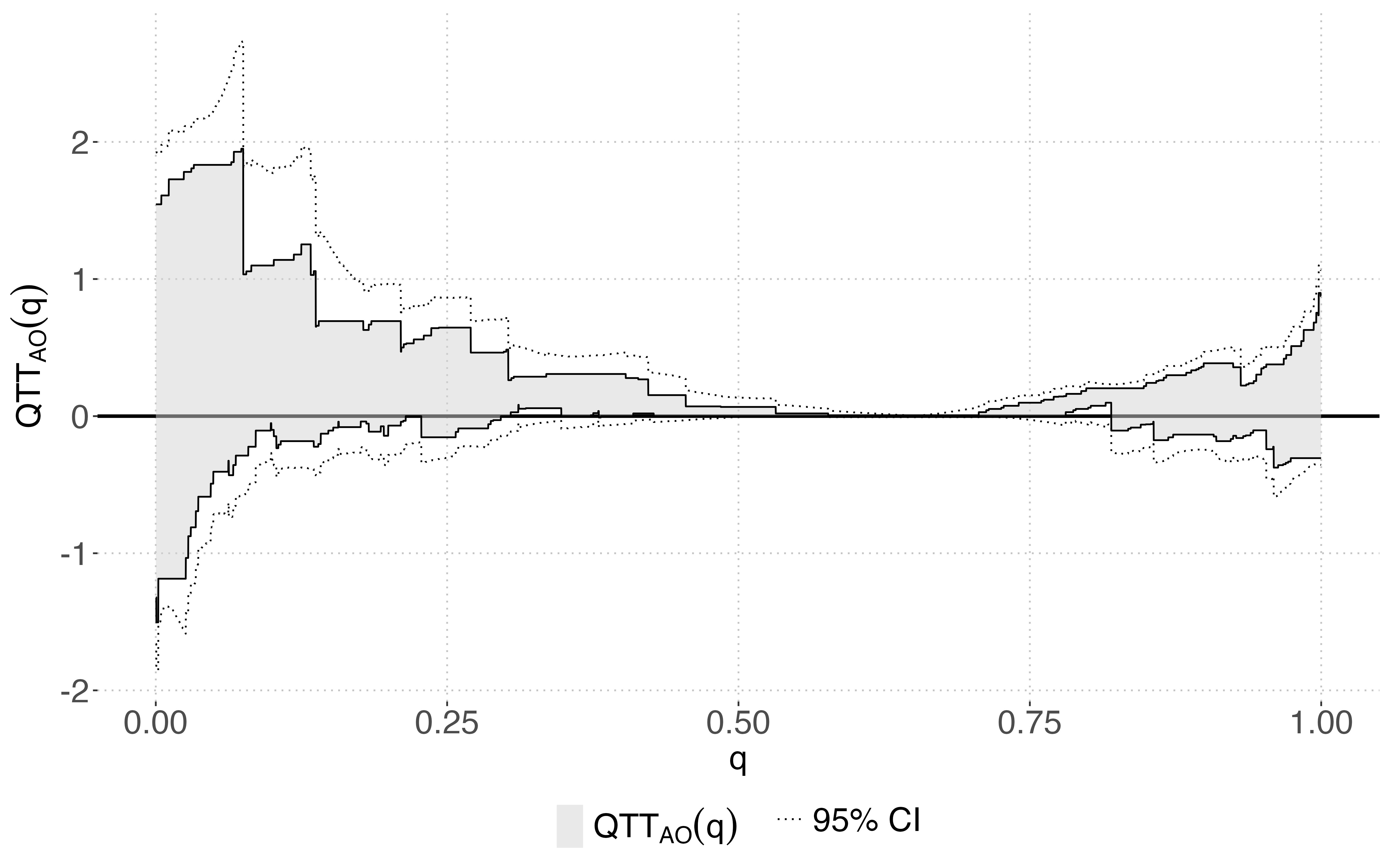}    
\vspace{-0.2cm}
\begin{minipage}{0.99\linewidth  \setstretch{0.75}}
{\scriptsize Notes: Quantile Treatment effect on the Treated Always Observed units ($\qttao(q)$) for different values of $q$. The number of units used to compute these bounds is 888.}
 \end{minipage}
\end{figure}

Figure \ref{fig:qttao} shows that the $\qttao(q)$ bounds vary across different quantiles, reflecting heterogeneity in the treatment effects. The 95\% confidence interval includes zero for all the quantiles. However, their widths and ranges differ substantially across the outcome distribution. The bounds are vast and not very informative at the lower quantiles. However, as $q$ approaches the median, the bounds tighten and the lower bound approaches or even exceeds zero. A precise zero effect is estimated between the median and the third quartile. Finally, a moderate but positive effect is estimated around the eighth decile, before the bounds widen again towards the upper tail of the distribution.
%%%%%%%%%%%%%%%%%%%%%%%%%%%%%%%%%%%%%%%%%%%%%%%%%%%%%%%%%%%%%%%%%%%%
%  SECTION 6: 
%%%%%%%%%%%%%%%%%%%%%%%%%%%%%%%%%%%%%%%%%%%%%%%%%%%%%%%%%%%%%%%%%%%%
\section{Conclusion and extensions} \label{sec:sei}
This paper develops a novel identification strategy for estimating the intensive margin effect in panel data settings. I propose two new estimands, the $\attao$ and the $\qttao(q)$, which can be interpreted as the average and the quantile intensive-margin effects, respectively. I explore partial identification of these causal quantities within a CiC framework, which allows the treatment to be confounded with unobservable unit-level characteristics. 
Appendix \ref{ap:extensions} extends the methodology presented in the main text in different directions. Appendix \ref{ap:ext_did} replaces the Changes-in-Changes outcome model with the canonical Difference-in-Differences framework. Appendix \ref{ap:ext_as2} revisits the methodology without Assumption \ref{as:abstate}, which treats missingness as an absorbing state. Appendix \ref{ap:ext_covariates} studies how to incorporate covariates into the identification and estimation strategies. Appendix \ref{ap:ext_rcs} extends the identification to repeated cross-sectional data. Appendix \ref{ap:ext_discrete} explores the identification when the outcome is discrete. Yet several extensions remain for future research.

The intensive margin effects are instrumental for policy evaluation. Beyond estimating the intensive margin effect, I encourage practitioners to examine the extensive margin effect by characterizing all the principal strata, including those in the Observed-if-Control and Observed-if-Treated strata. Furthermore, the two estimands proposed in this paper focus on the treatment group. Since the treatment is not unconfounded, its effect will likely differ across the treatment and control groups. However, the CiC framework can be extended to impute the missing potential outcome for control units, thereby enabling the identification of the average treatment effect for the control group. As a result, the proposed methodology can accommodate the identification of two additional estimands, the $\text{ATE}_{\text{AO}}$ and the $\text{QTE}_{\text{AO}}(q)$, enhancing the external validity of the conclusions drawn from the analysis.

This paper focuses on the two-group, two-period case. However, many applications involve multiple pre- and post-treatment periods. Incorporating this information is essential for both assessing the identification assumptions and gaining a deeper understanding of the policy's impacts. Moreover, many policies are implemented in a staggered fashion. While the proposed methodology can be smoothly extended to the multiple groups, multiple periods setting through the estimation and aggregation of pairwise comparisons \parencite{athey_identification_2006,sun_estimating_2021}, a more ambitious and valuable direction for research would be to integrate it with doubly robust estimators that combine outcome modeling with propensity score weighting \parencite{arkhangelsky_doubly_2022, santanna_doubly_2020}. Another promising direction for future research is the extension to time-varying estimands, as proposed by \textcite{comment_survivor_2025,lin_longitudinal_2008}, which may be particularly policy-relevant in dynamic settings.

One limitation of the proposed methodology is that the derived bounds may be uninformative. In general, lower proportions of Always-Observed units lead to wider bounds, and even when this proportion is close to one, certain outcome distributions may still generate wide identification regions. Adopting a Bayesian perspective may help infer where the target estimand is most likely to lie within the bounds. Another approach is to tighten the bounds by incorporating covariates, as discussed in Appendix~\ref{ap:ext_covariates}. Under treatment unconfoundedness, covariate-adjusted bounds are weakly tighter than unadjusted bounds \parencite{grilli_nonparametric_2008, mealli_using_2013, lee_training_2009, long_sharpening_2013}. However, in the setting studied here, where treatment is confounded and covariate distributions can differ substantially across groups, there is no guarantee that covariate adjustment will tighten the bounds. Determining whether, how, and which covariates should be included in panel data settings remains an important topic for future research.

Finally, this paper relaxes the monotonicity assumptions while maintaining the point identification of principal strata proportions. This extension may also be of interest in settings under unconfoundedness. Nevertheless, it relies on multiple sources of sample selection. Furthermore, it maintains the monotonicity assumption for each available source.  If the researcher does not observe multiple sources of selection, believes that these multiple sources do not capture all the unobservables that may affect selection in different directions, and/or does not want to rely on monotonicity of any form, \textcite{shin_difference--differences_2024,rathnayake_difference--differences_2024} provide identification of the $\attao$ without monotonicity for the DiD case. However, they replace the monotonicity assumption with the assumption that the average treatment effect on selection is the same in the treatment and control groups\footnote{In the case of \textcite{rathnayake_difference--differences_2024}, conditional on the pre-treatment selection outcome.}, as in settings where treatment is unconfounded. An interesting approach to address the monotonicity assumption is the automated method proposed by \textcite{duarte_automated_2024}, which can be used to assess the plausibility of the assumption and derive sharp bounds that relax or even remove it. 

\end{spacing}

\begin{spacing}{1}
{ \small
\nocite{*}
\newrefcontext[sorting=nyt]
\printbibliography
}
\end{spacing}

%===================================================================
%===================================================================
%APPENDIX
%===================================================================
%===================================================================
\newpage

\appendix

\renewcommand{\theequation}{A\arabic{equation}}
\renewcommand{\thetable}{A\arabic{table}}
\renewcommand{\thefigure}{A\arabic{figure}}
\renewcommand{\thelemma}{A\arabic{lemma}}
\renewcommand{\theproposition}{A\arabic{proposition}}
\renewcommand{\thecorollary}{A\arabic{corollary}}
\setcounter{equation}{0}
\setcounter{table}{0}
\setcounter{figure}{0}
\setcounter{lemma}{0}
\linespread{1}

\newpage
\vspace{-0.4 cm}

\begin{spacing}{1.5}

%%%%%%%%%%%%%%%%%%%%%%%%%%%%%%%%%%%%%%%%%%%%%%%%%%%%%%%%%%%%%%%%%%%%
%  APPENDIX A: Additional results 
%%%%%%%%%%%%%%%%%%%%%%%%%%%%%%%%%%%%%%%%%%%%%%%%%%%%%%%%%%%%%%%%%%%%
\section{Extensions and additional results} \label{ap:extensions}
%%%%%%%%%%%%%%%%%%%%%%%%%%%%%%%%%%%%%%%%%%%%%%%%%%%%%%%%%%%%%%%%%%%%
%  APPENDIX A: DiD
%%%%%%%%%%%%%%%%%%%%%%%%%%%%%%%%%%%%%%%%%%%%%%%%%%%%%%%%%%%%%%%%%%%%
\subsection{Identification in the DiD setup} \label{ap:ext_did}
\begin{assumption} \label{as:ppt} 
    Principal Parallel Trends (PPT)
    \begin{equation*}
        \E[Y_{i2}(0) - Y_{i1} \mid G_{i} = 1, V_{i} = AO] = \E[Y_{i2}(0) - Y_{i1} \mid G_{i} = 0, V_{i} = AO]
    \end{equation*}
\end{assumption}
Assumption \ref{as:ppt} states that, in the absence of treatment, the difference in the outcome between groups remains constant over time for the AO units. It is a conditional version of the canonical parallel trends assumption. Like other conditional parallel trends assumptions, it is neither a sufficient nor a necessary condition for the unconditional parallel trend assumption. Under Assumptions \ref{as:no_anti}, \ref{as:abstate} and \ref{as:ppt}, the missing potential outcome for treated Always-Observed units is given by:
\begin{equation} \label{eq:missing_po_ao}
     \E[Y_{i2}(0)\mid G_{i} = 1, V_{i} = AO] =\E[ Y_{i1} \mid G_{i} = 1, V_{i} = AO] + \E[Y_{i2}(0) - Y_{i1} \mid G_{i} = 0, V_{i} = AO]
\end{equation}
Then, I could plug in the term in equation (\ref{eq:missing_po_ao}) into the Estimand \ref{es:attao} to obtain:
\begin{align} 
    \text{ATT}_{\text{AO}} &=  \E[Y_{i2}(1) - Y_{i1} \mid G_{i} = 1, V_{i} = AO] - \E[Y_{i2}(0) - Y_{i1} \mid G_{i} = 0, V_{i} = AO] \notag \\
&= \E[Y_{i2} - Y_{i1} \mid G_{i} = 1, V_{i} = AO] - \E[Y_{i2} - Y_{i1} \mid G_{i} = 0, V_{i} = AO].   \label{eq:princ_did}
\end{align}

\begin{proposition} \label{prop:bounds_attao}
    Let $Y_{it}(1)$ and $Y_{it}(0)$ be continuous. If Assumptions \ref{as:no_anti}, \ref{as:abstate}, \ref{as:rand_samp} and \ref{as:ppt} hold, then $\Delta^{LB}$ and $\Delta^{UB}$ are lower and upper bounds for the Average Treatment Effect on the Treated Always-Observed units ($\attao$), where
    \begin{align*}
       \Delta^{LB} & =  \E[\ddot Y_{i} \mid G_{i} = 1, S_{i2} = 1, \ddot Y_{i} \leq \ddot y_{\pi_{1}}^{1} ] - \E[\ddot Y_{i} \mid G_{i} = 0, S_{i2} = 1, \ddot Y_{i} \geq \ddot y_{1 - \pi_{0}}^{0} ] \\
       \Delta^{UB} & =  \E[\ddot Y_{i} \mid G_{i} = 1, S_{i2} = 1, \ddot Y_{i} \geq \ddot y_{1 -\pi_{1}}^{1} ] - \E[\ddot Y_{i} \mid G_{i} = 0, S_{i2} = 1, \ddot Y_{i} \leq \ddot y_{\pi_{0}}^{0} ] \\
       \ddot Y_{i} & = Y_{i2} - Y_{i1} \\
       \ddot y_{q}^{g} &= \inf \{\ddot y: F(\ddot y) \geq q\}, \text{ with }F \text{ the c.d.f. of }\ddot Y\text{ conditional on } S_{2} = 1 \text{ and } G = g \\
       \pi_{1} & = Pr(S_{i2}(0) = 1 \mid G_{i} = 1, S_{i2}(1) = 1)\\
       \pi_{0} & = Pr(S_{i2}(1) = 1 \mid G_{i} = 0, S_{i2}(0) = 1).
    \end{align*}
    Proof: See Appendix \ref{proof:bounds_attao}.
\end{proposition}

Proposition \ref{prop:bounds_attao} describes a trimming procedure to bound the $\attao$, as proposed by \textcite{lee_training_2009}. If the proportion of Always-Observed units in each group is known, one can use them to trim the observed distribution of $\ddot Y = Y_{2} - Y_{1}$ and partially identify $\E[\ddot Y_{i} \mid G_{i} = g, V_{i} = AO]$. For instance, suppose 90\% of the units in the treatment group belong to the AO stratum. In this case, the expected value of $\ddot Y$ using only the bottom 90\% of the distribution of $\ddot Y$ provides a lower bound of the expected value of $\ddot Y$ for these units. Similarly, using the top 90\% of the distribution provides an upper bound for the expected value of $\ddot Y$ for the AO units in the treatment group.

Figure \ref{fig:ex_dist} provides graphical intuition for Proposition \ref{prop:bounds_attao}. The distributions on the left correspond to the treatment group and are trimmed using $\pi_{1}$. The ones on the right correspond to the control group and are trimmed using $\pi_{0}$. The lower bound is computed using the bottom $\pi_{1}$ units from the treatment group and the top $\pi_{0}$ units from the control group. Likewise, the upper bound is obtained using the top $\pi_{1}$ units from the treatment group and the bottom $\pi_{0}$ units from the control group. 

\begin{figure}[h] 
    \caption{Graphical intuition on Proposition \ref{prop:bounds_attao}.}
    \centering
        \vskip\baselineskip
     \raisebox{0.1cm}{$\Delta^{LB}$}
    % Box around the first row of plots
    \fbox{
        \begin{minipage}{0.98\textwidth}
            \centering
            \begin{minipage}{0.48\textwidth}
                \centering
                \includegraphics[width=\textwidth]{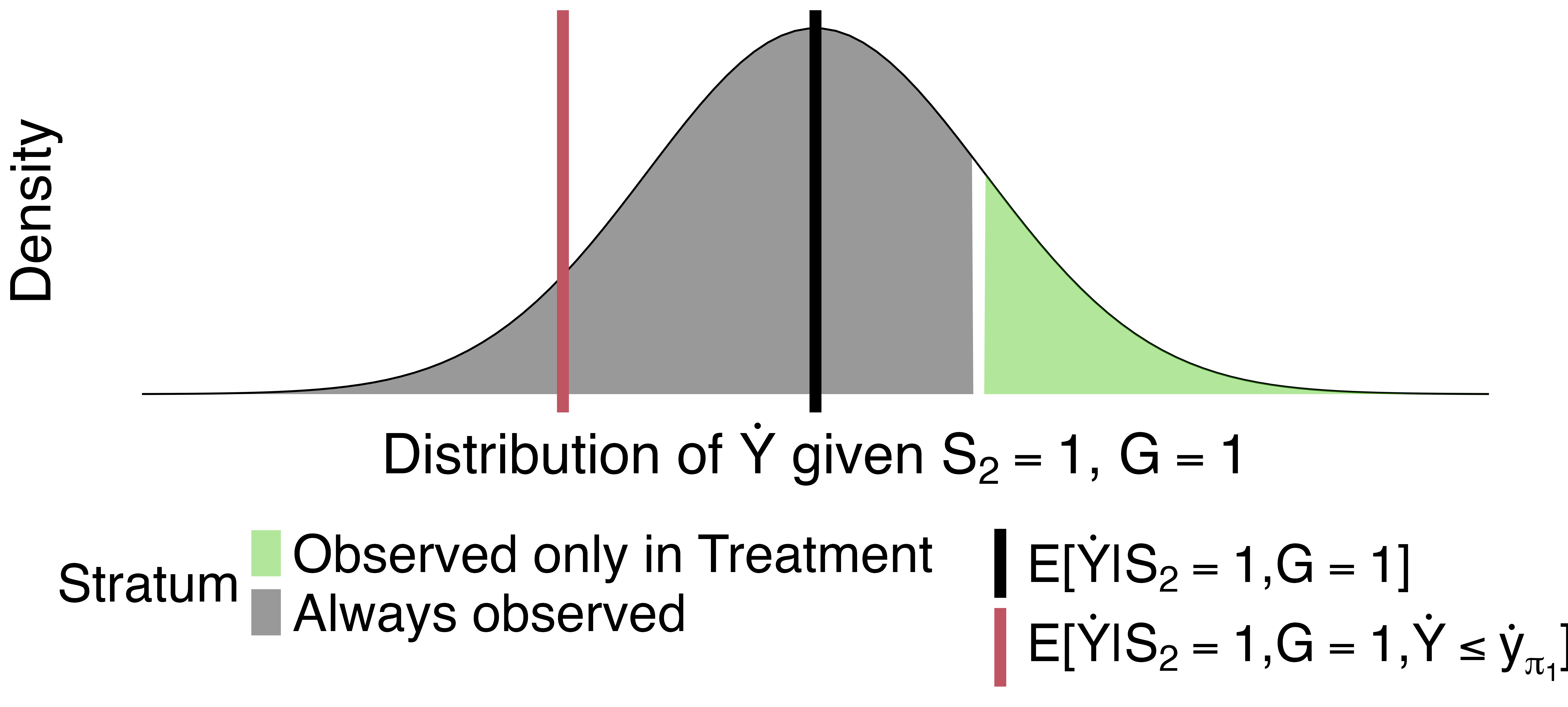}
            \end{minipage}%
            \hfill
            \begin{minipage}{0.48\textwidth}
                \centering
                \includegraphics[width=\textwidth]{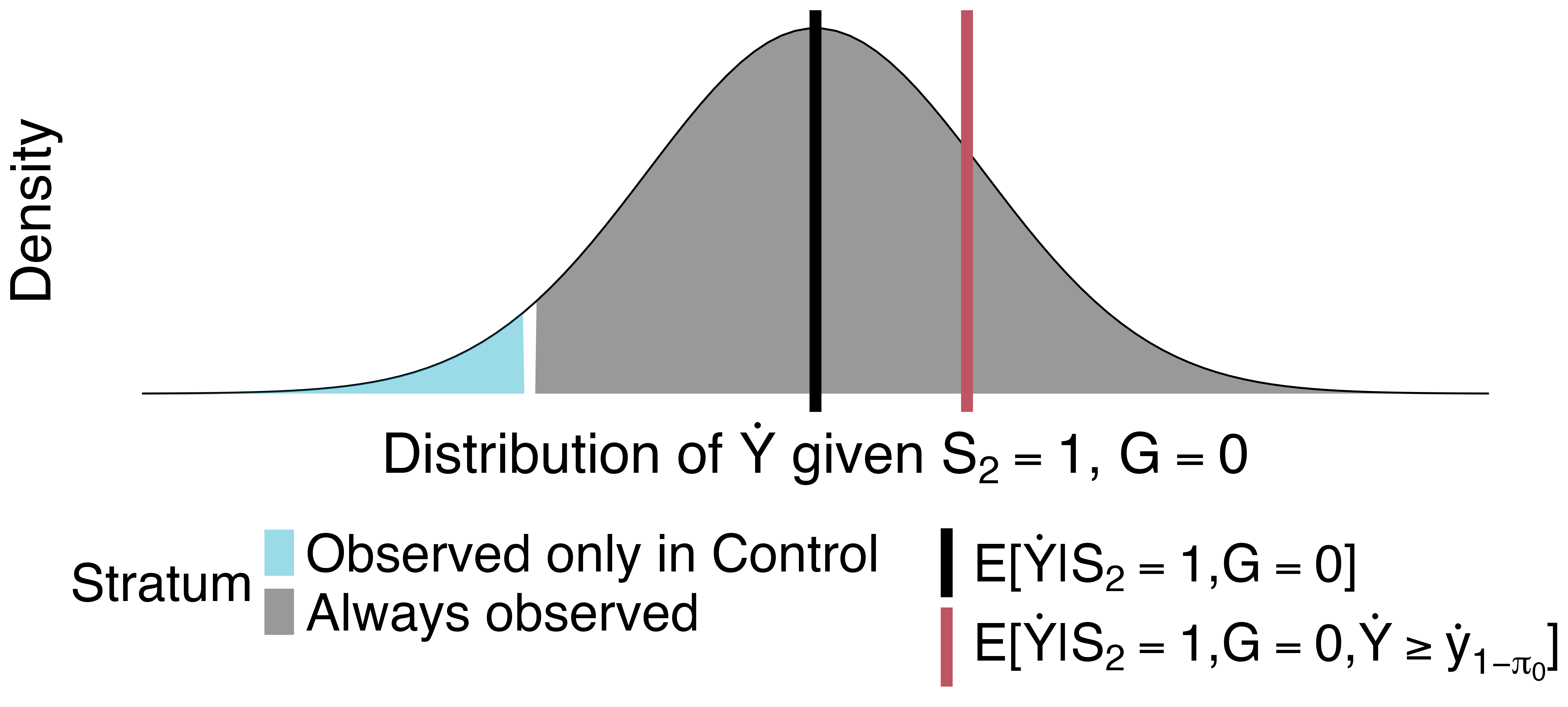}
            \end{minipage}
        \end{minipage}  % End of the first row box
    }

    \vskip\baselineskip

     \raisebox{0.1cm}{$\Delta^{UB}$}  
    \fbox{
        \begin{minipage}{0.98\textwidth}
            \centering
            \begin{minipage}{0.48\textwidth}
                \centering
                \includegraphics[width=\textwidth]{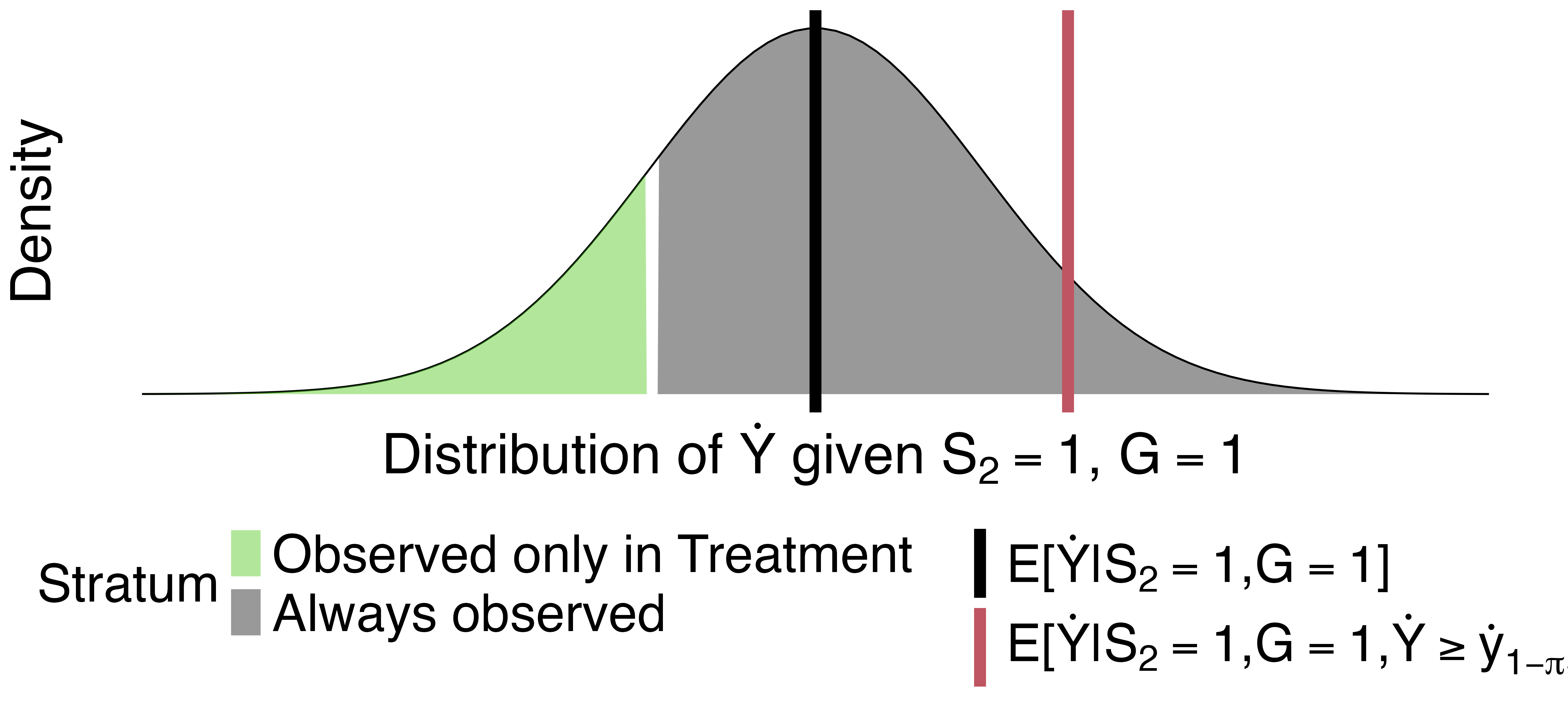}
            \end{minipage}%
            \hfill
            \begin{minipage}{0.48\textwidth}
                \centering
                \includegraphics[width=\textwidth]{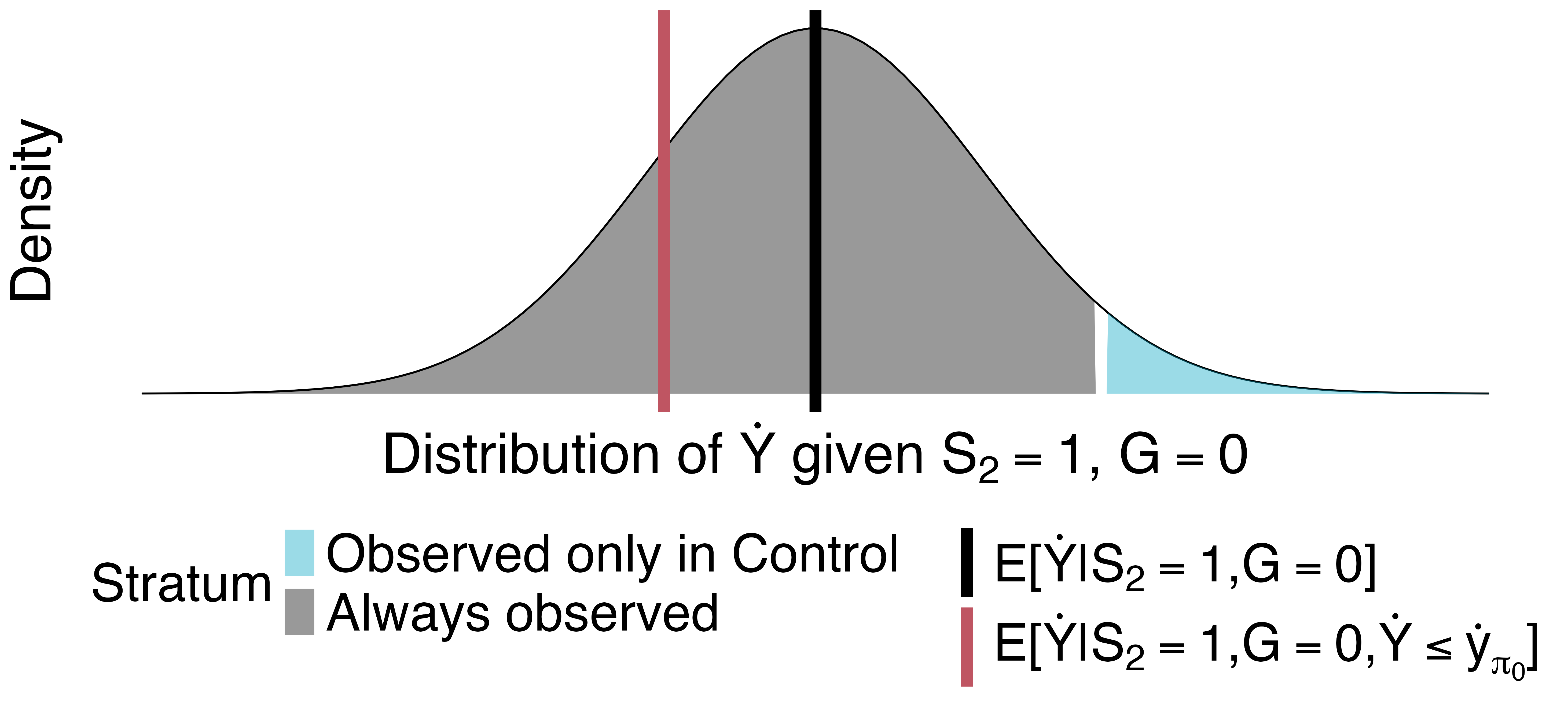}
            \end{minipage}
        \end{minipage}  % End of the second row box
    }
\vspace{-0.2cm}
\begin{minipage}{0.99\linewidth  \setstretch{0.75}}
{\scriptsize Notes: This figure plots a hypothetical distribution of $\ddot Y$. The distributions on the left correspond to the Treated group ($G=1$) and the one on the right to the Control group ($G=0$). The shaded areas correspond to the proportion of strata in the given group: black corresponds to the AO units ($\pi_{1}$ on the left and $\pi_{0}$ on the right),  green corresponds to the OT stratum and blue to the OC. The vertical black lines are the expectation of the plotted distribution. The vertical red lines are the trimmed distributions used in Proposition \ref{prop:bounds_attao}. This figure also illustrates Remark \ref{prop:remark3}: as $\pi_{1}$ and $\pi_{0}$ go to 1, the black shadowed areas expand, and the trimmed (red) expectations converge to those of the entire distribution (black).}
 \end{minipage}
 \label{fig:ex_dist}
\end{figure}

\subsubsection{Estimation and inference in the DiD setup} \label{ap:ext_did_est}
The estimators for the bounds of the $\attao$ defined in Proposition \ref{prop:bounds_attao} are given by:
\begin{align}
    \hat{\ddot y}_{q}^{1} & = \min\left\{\ddot y: \frac{\sum_{i}S_{i2}G_{i}\indicator(\ddot Y_{i} \leq \ddot y)}{\sum_{i}S_{i2}G_{i}} \geq q \right\} \label{est:trimt}\\
    \hat{\ddot y}_{q}^{0} & = \min\left\{\ddot y: \frac{\sum_{i}S_{i2}(1-G_{i})\indicator(\ddot Y_{i} \leq \ddot y)}{\sum_{i}S_{i2}(1-G_{i})} \geq q \right\}. \label{est:trimc} \\
    \widehat{ \Delta^{LB}} &= \frac{\sum_{i} S_{i2}G_{i} \indicator(\ddot Y_{i} \leq \hat{ \ddot y}_{\hat \pi_{1}}^{1})\ddot Y_{i} }{\sum_{i}S_{i2}G_{i}\indicator(\ddot Y_{i} \leq \hat{ \ddot y}_{\hat \pi_{1}}^{1})} - \frac{\sum_{i}S_{i2}(1-G_{i})\indicator(\ddot Y_{i} \geq \hat{\ddot y}_{1 - \hat \pi_{0}}^{0})\ddot Y_{i}}{\sum_{i}S_{i2}(1-G_{i})\indicator(\ddot Y_{i} \geq \hat{\ddot y}_{1 - \hat \pi_{0}}^{0})} \label{est:lb}\\
    \widehat{\Delta^{UB}} &= \frac{\sum_{i} S_{i2}G_{i} \indicator(\ddot Y_{i} \geq \hat{ \ddot y}_{1 - \hat \pi_{1}}^{1})\ddot Y_{i} }{\sum_{i}S_{i2}G_{i}\indicator(\ddot Y_{i} \geq \hat{ \ddot y}_{1 - \hat \pi_{1}}^{1})} - \frac{\sum_{i}S_{i2}(1-G_{i})\indicator(\ddot Y_{i} \leq \hat{\ddot y}_{\hat \pi_{0}}^{0})\ddot Y_{i}}{\sum_{i}S_{i2}(1-G_{i})\indicator(\ddot Y_{i} \leq \hat{\ddot y}_{ \hat \pi_{0}}^{0})} \label{est:ub}
\end{align}
Equations (\ref{est:trimt}) and (\ref{est:trimc}) estimate the trimming threshold for a given trimming proportion. Equations (\ref{est:lb}) and (\ref{est:ub}) estimate the expectations with their sample analogs, taking the trimming threshold and proportions as given. 

\begin{proposition} \label{prop:normal_attao}
    Asymptotic normality of the bounds of the $\attao$
    \begin{align}
        \sqrt{n}(\widehat{\Delta^{LB}} - {\Delta^{LB}}) \xrightarrow{d} \mathcal{N}\left(0,\sigma_{LB}^{2}\right) \\
        \sqrt{n}(\widehat{\Delta^{UB}} - {\Delta^{UB}}) \xrightarrow{d} \mathcal{N}\left(0,\sigma_{UB}^{2}\right) ,
    \end{align}
    Proof: See Appendix \ref{proof:normal_attao}
\end{proposition}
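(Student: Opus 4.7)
The plan is to express each bound as a smooth functional of empirical distributions and sample means, then combine a standard delta method argument for the strata proportions with the classical asymptotic theory for trimmed means (as in \textcite{lee_training_2009}), taking care of the additional variance contribution from the fact that the trimming proportions are themselves estimated. The targets $\Delta^{LB}$ and $\Delta^{UB}$ are differences of two trimmed means, one on the $G=1, S_2=1$ subsample and one on the $G=0, S_2=1$ subsample. Since the two subsamples are independent by random sampling (conditional on the group indicator), the asymptotic variances $\sigma_{LB}^2$ and $\sigma_{UB}^2$ will decompose into a sum of contributions from the treated and control groups, and within each group into contributions from the trimmed mean estimator and from the estimated trimming proportion.

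First I would establish joint asymptotic normality of the vector of sample means underlying $\hat\pi_0$ and $\hat\pi_1$ in equations (\ref{eq:pi0})--(\ref{eq:pi1}). Each of $\hat\pi_0$ and $\hat\pi_1$ is a smooth (rational) function of group-specific means of $s^{j}_{it}$ and $S_{it}$, so a direct application of the central limit theorem and the delta method yields
\begin{equation*}
\sqrt{n}(\hat\pi_g - \pi_g) \xrightarrow{d} \mathcal{N}(0, V_{\pi_g}),
\end{equation*}
with an influence function $\psi_{\pi_g}(\cdot)$ that can be written explicitly via the gradient of the mapping in Proposition \ref{prop:proportions_multiple_sources}. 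Independence of the two subsamples (through $G_i$) implies the joint influence functions for $(\hat\pi_0,\hat\pi_1)$ are orthogonal across groups.

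Next I would analyze a single trimmed-mean component, say $\hat m_1(\hat\pi_1) := \frac{\sum_i S_{i2}G_i \indicator(\ddot Y_i \leq \hat{\ddot y}^{1}_{\hat\pi_1}) \ddot Y_i}{\sum_i S_{i2}G_i \indicator(\ddot Y_i \leq \hat{\ddot y}^{1}_{\hat\pi_1})}$. Conditional on the subsample with $G_i=1, S_{i2}=1$, this is exactly the trimmed-mean functional evaluated at an estimated quantile $\hat{\ddot y}^{1}_{\hat\pi_1}$ where $\hat\pi_1$ is itself random. A Taylor expansion of $\hat m_1(\hat\pi_1)$ around $\pi_1$ gives
\begin{equation*}
\hat m_1(\hat\pi_1) - m_1(\pi_1) = \bigl[\hat m_1(\pi_1) - m_1(\pi_1)\bigr] + \frac{\partial m_1(\pi_1)}{\partial \pi_1}(\hat\pi_1 - \pi_1) + o_p(n^{-1/2}),
\end{equation*}
where the first bracket is the fixed-trimming error (handled by the results in \textcite{lee_training_2009}, which give an influence function combining $\ddot Y_i$, $\indicator(\ddot Y_i \leq \ddot y^{1}_{\pi_1})$, and the correction term for the estimated quantile via the density of $\ddot Y$ at $\ddot y^{1}_{\pi_1}$), and the second term captures the effect of $\hat\pi_1$ being estimated. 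The derivative $\partial m_1/\partial \pi_1$ can be computed in closed form from the definition of the trimmed mean. Substituting the influence function for $\hat\pi_1$ from the previous step yields an expression for the overall influence function of $\hat m_1(\hat\pi_1)$ as a sum of two mean-zero, finite-variance terms, and hence its asymptotic normality by the Lindeberg--L\'evy CLT. An analogous argument handles the control-group trimmed mean.

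The main obstacle is bookkeeping the correlation between $\hat\pi_1$ and $\hat m_1(\pi_1)$, and likewise for $\pi_0$: $\hat\pi_1$ is computed from the same treated units (and from the control sample via the CiC imputation of $S_{i2}(0)$) that enter $\hat m_1$, so the two influence functions are not independent and their covariance must be retained when assembling $\sigma_{LB}^2$. I would write a single joint influence function $\psi_{LB}(\ddot Y_i, Y_{i1}, S_{i1}, s^{1}_{i1}, s^{1}_{i2}, s^{2}_{i1}, s^{2}_{i2}, G_i)$ by stacking the contributions from (i) the treated trimmed mean at the true $\pi_1$, (ii) the quantile-correction term, (iii) $\hat\pi_1$ multiplied by $\partial m_1/\partial \pi_1$, and the analogous three terms for the control group via $\pi_0$. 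Because these pieces are all averages of i.i.d.\ terms with finite variance (the support assumptions in Assumptions \ref{as:outcome} and \ref{as:sele} plus continuity of $F_{\ddot Y\mid G=g, S_2=1}$ at the trimming thresholds ensure the quantile correction is well defined), applying the CLT to $\psi_{LB}$ and $\psi_{UB}$ delivers the claimed normal limits with $\sigma_{LB}^2 = \mathrm{Var}(\psi_{LB})$ and $\sigma_{UB}^2 = \mathrm{Var}(\psi_{UB})$. I would defer the explicit formulas for these variances to the appendix, noting that they can be consistently estimated by their sample analogs.
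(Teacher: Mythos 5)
Your proposal is correct in substance but follows a genuinely different route from the paper. You build the result ``from the inside out'': CLT plus delta method for $\hat\pi_0,\hat\pi_1$ as smooth functions of group-specific sample means, an asymptotically linear representation of each trimmed mean with an estimated quantile, a further expansion in the estimated trimming proportion with the derivative $\partial m_g/\partial\pi_g$, and then explicit stacking of influence functions with the cross-covariance between $\hat\pi_g$ and the trimmed means retained. The paper instead stacks everything into a single just-identified moment vector $g(z_i,\theta)$ (trimmed mean, trimming quantile, the source-specific selection means, and the CiC relation defining $\pi_1$) and invokes \textcite{newey_chapter_1994}, Theorems 2.6 and 7.2, following \textcite{lee_training_2009}: consistency first, then asymptotic normality with the sandwich variance, so the dependence between the estimated proportions and the trimmed means is handled automatically through the joint moment system rather than through explicit covariance bookkeeping. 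Your approach buys a transparent variance decomposition (trimmed-mean term, proportion-estimation term, and their covariance), which mirrors what the paper does for the QTT bounds; the paper's approach buys a one-shot treatment of the non-smoothness of the indicator and the estimated quantile, since condition (v) of Newey--McFadden's Theorem 7.2 (stochastic equicontinuity) is exactly what licenses the step you state as ``$+\,o_p(n^{-1/2})$'' when you Taylor-expand the random, piecewise-constant map $\pi\mapsto\hat m_1(\pi)$ at the random point $\hat\pi_1$. To make your argument complete you should either verify such a stochastic-equicontinuity/Bahadur-type representation for the trimmed-mean functional (requiring a continuous, positive density of $\ddot Y$ at the trimming thresholds) or, as the paper and \textcite{lee_training_2009} do, appeal directly to the extremum-estimator theorems; you should also state consistency of the estimators (the paper establishes it first via Theorem 2.6) before the expansion around the true $(\pi_1,\pi_0)$ is justified.
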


\subsubsection{Empirical Application}
\begin{table}[H]
    \caption{Estimates of the bounds for the $\attao$}
    \label{tab:attao_did}
\vspace{-0.1cm} 
\begin{center}
\begin{tabular}{r*{4}{c}}
\toprule

Outcome & \multicolumn{4}{c}{Log of salaried earnings} \\ 

\cmidrule(lr){2-5} 

 Estimand & \multicolumn{2}{c}{$\attao$ } & \multicolumn{2}{c}{Complete Case} \\
 
\cmidrule(lr){2-3}  \cmidrule(lr){4-5} 

Estimate &  [-0.013  ,&  \hspace{-0.3cm} 0.322] &  \multicolumn{2}{c}{0.158} \\ 

95\% CI & \footnotesize(-0.082  ,& \footnotesize \hspace{-0.3cm} 0.39) &\footnotesize (0.075 ,& \footnotesize\hspace{-0.3cm} 0.241)  \\

Covariates & \multicolumn{2}{c}{No} & \multicolumn{2}{c}{No}  \\

N & \multicolumn{2}{c}{888}&\multicolumn{2}{c}{888} \\

\bottomrule
\end{tabular}
 
\end{center}
\vspace{-0.2cm}
\begin{minipage}{1\linewidth \setstretch{0.75} } 
{\scriptsize Notes: Columns 1 presents the estimates of the bounds for the $\attao$ using the estimated proportions $\hat \pi_{1} = 0.93$ and $\hat \pi_{0} = 0.96$, with the estimators presented in Section \ref{ap:ext_did_est}, while Column 2 shows estimates using the complete case analysis. The 95\% confidence intervals are computed using equation (\ref{eq:ci}).
}
 \end{minipage}
\end{table}
%%%%%%%%%%%%%%%%%%%%%%%%%%%%%%%%%%%%%%%%%%%%%%%%%%%%%%%%%%%%%%%%%%%%
%  APPENDIX A: Absorbing State
%%%%%%%%%%%%%%%%%%%%%%%%%%%%%%%%%%%%%%%%%%%%%%%%%%%%%%%%%%%%%%%%%%%%
\subsection{Identification without Assumption \ref{as:abstate}, Missingness as an absorbing state.} \label{ap:ext_as2}

Assumption \ref{as:abstate} states that once a unit’s outcome is unobserved, it remains unobserved in
all subsequent periods. There are settings in which this assumption is not plausible. For instance, a job-training program in which a worker is unemployed during the pre-treatment period but later finds a job. In this case, there would be some $i$ in the data such that $S_{i1} = 0$ but $S_{i2} = 1$, violating Assumption \ref{as:abstate}. Next, I discuss the identification for settings in which missigness is not an absorbing state.

Consider the Estimand \ref{es:attao}, $\attao$. Using the Law of Total Probability, I can write it as:
\begin{equation}
\begin{aligned}
    \attao =& \E[Y_{i2}(1) - Y_{i2}(0) \mid G_{i} = 1, V_{i} = AO, S_{i1}= 1] Pr(S_{i1} = 1 \mid G_{i} = 1, V_{i} = AO) + \\
    &\E[Y_{i2}(1) - Y_{i2}(0) \mid G_{i} = 1, V_{i} = AO, S_{i1}= 0] Pr(S_{i1} = 0 \mid G_{i} = 1, V_{i} = AO)
\end{aligned}
\end{equation}
Under Assumption \ref{as:abstate},  $$V_{i} = AO \implies S_{i2} = 1 \implies S_{i1} \neq0$$
and therefore,
$Pr(S_{i1} = 0 \mid G_{i} = 1, V_{i} = AO) = 0$. Hence, under Assumption \ref{as:abstate}, the following equality is true:
\begin{equation}
    \attao = \E[Y_{i2}(1) - Y_{i2}(0) \mid G_{i} = 1, V_{i} = AO, S_{i1}= 1]. \label{eq:attao_observed_t1}
\end{equation}

Sections \ref{sec:hiru} and \ref{sec:lau} discuss identification, estimation, and inference for the RHS of equation \eqref{eq:attao_observed_t1}. The conditioning on $S_{1} = 1$ is not being made explicit as under Assumption \ref{as:abstate} it is implied by the conditioning on $V_{i} = AO$.

When missigness is not an absorbing state, some units may have observed outcomes only in the post-treatment period, i.e., $S_{i1} = 0$ and $S_{i2} = 1$. Such units may belong to the Always-Observed stratum. In that case, $Pr(S_{i1} = 0 \mid V_{i} = AO)$ is no longer 0 and the equality in equation \eqref{eq:attao_observed_t1} is no longer true.\footnote{Unless one is willing to assume that pre-treatment selection is ignorable, that is, $Y_{i2}(w_{i2}) \mid G_{i } = 1, V_i = AO \ind S_{i1} \quad \quad \forall \quad w_{i2}  $.} Importantly, without Assumption \ref{as:abstate}, the identification strategy developed in Sections \ref{sec:hiru} and \ref{sec:lau} still identifies a causal effect, since $S_{1}$ is not affected by the treatment.\footnote{\label{fn:ap1}Albeit the conditioning on $S_{i1} = 1$ must be made explicit in all the assumptions, estimands, and identification results.} However, the resulting estimand is no longer the $\attao$ but rather the $\attao$ for units observed in the first period, as in the RHS of equation \eqref{eq:attao_observed_t1}.

In some contexts, researchers may want to estimate the causal effect for all the treated units in the AO stratum, regardless of whether they were observed in the pre-treatment period. Consider the case of the $\attao$:
\begin{align}
    \attao &= \E[Y_{i2}(1) - Y_{i2}(0) \mid G_{i} = 1, V_{i} = AO] \\
    &=\E[Y_{i2}(1) \mid G_{i} = 1, V_{i} = AO]-\E[Y_{i2}(0) \mid G_{i} = 1, V_{i} = AO] \label{eq:full_attao}
\end{align}
The first term in equation \eqref{eq:full_attao} can be partially identified using the identification results from Section \ref{sec:hiru} using all the units observed at $t = 2$ instead of conditioning on those with $S_{i1} = 1$. However, because now $\E[S_{i2}]$ may be larger than $\E[S_{i1}]$, Proposition \ref{prop:proportions} needs to be more general:
\begin{proposition} \label{prop:proportions_general}
    Under Assumptions \ref{as:no_anti}, \ref{as:rand_samp}, \ref{as:mono}, and \ref{as:sele}, the proportion of Always-Observed units in the treatment group ($ \pi_{1}$) and control group ($\pi_{0}$) are identified as follows:
    \begin{itemize}
        \item[] If Positive Monotonicity holds:
        {\small
        \begin{align*} 
            \pi_{0} &= 1 \\
            \pi_{1} & = \frac{\E[S_{i2}(0) \mid G_{i} = 1]}{\E[S_{i2} \mid G_{i} = 1]} = \frac{\E[S_{i1} \mid G_{i} = 1]}{\E[S_{i2} \mid G_{i} = 1]}\frac{\E[S_{i2} \mid G_{i} = 0]}{\E[S_{i1} \mid G_{i} = 0]}  \quad  &\text{if}\quad \E[S_{i2}\mid G_{i} = 0 ] \leq \E[S_{i1}\mid G_{i}= 0] \\
            \pi_{1} & = \frac{\E[S_{i2}(0) \mid G_{i} = 1]}{\E[S_{i2} \mid G_{i} = 1]} = \frac{1-\left(1-\E[S_{i1} \mid G_{i} = 1]\right)\frac{1-\E[S_{i2} \mid G_{i} = 0]}{1-\E[S_{i1} \mid G_{i} = 0]}}{\E[S_{i2} \mid G_{i} = 1]}  \quad  &\text{if}\quad \E[S_{i2}\mid G_{i} = 0 ] > \E[S_{i1}\mid G_{i}= 0] \\
        \end{align*} 
        }%
        \item[] If Negative Monotonicity holds:
        {\small
        \begin{align*}
            \pi_{0} &= \frac{\E[S_{i2}(1) \mid G_{i} = 0]}{\E[S_{i2} \mid G_{i} = 0]} = \frac{\E[S_{i1}\mid G_{i} = 0]}{\E[S_{i2} \mid G_{i} = 0 ]}\frac{\E[S_{i2} \mid G_{i} = 1]}{\E[S_{i1}\mid G_{i} = 1]}  \quad \quad &\text{if}\quad \E[S_{i2}\mid G_{i} = 1 ] \leq \E[S_{i1}\mid G_{i}= 1] \\
            \pi_{0} &= \frac{\E[S_{i2}(1) \mid G_{i} = 0]}{\E[S_{i2} \mid G_{i} = 0]} = \frac{1-\left(1-\E[S_{i1}\mid G_{i} = 0]\right)\frac{1-\E[S_{i2} \mid G_{i} = 1]}{1-\E[S_{i1}\mid G_{i} = 1]}}{\E[S_{i2} \mid G_{i} = 0 ]}  \quad \quad &\text{if}\quad \E[S_{i2}\mid G_{i} = 1 ] > \E[S_{i1}\mid G_{i}= 1] \\
            \pi_{1} &= 1
        \end{align*}
        }%
    \end{itemize}
    Proof: The proof is analogous to the proof of Proposition \ref{prop:proportions} (Appendix \ref{proof:proportions}), complementing Lemma \ref{prop:selection} with the full results in Theorem 4.2 in \textcite{athey_identification_2006} when $\E[S_{i2} \mid G_{i} = g] \geq \E[S_{i1}\mid G_{i} = g]$.
\end{proposition}

Notice that these proportions are not the ones used to partially identify $\E[Y_{i2}(1) - Y_{i2}(0) \mid G_{i} = 1, V_{i} = AO, S_{i1}= 1]$. In that case, as indicated in footnote \ref{fn:ap1}, we would need to condition on $S_{i1} = 1$ in Proposition \ref{prop:proportions}, that is, 
\begin{remark} \label{prop:proportions_general_obs}
    Under Assumptions \ref{as:no_anti}, \ref{as:rand_samp}, \ref{as:mono}, and \ref{as:sele}, the proportion of Always-Observed units that were also observed in the pre-treatment period in the treatment group ($\tilde \pi_{1}$) and control group ($\tilde \pi_{0}$) are identified as follows:

\begin{itemize}
        \item[] If Positive Monotonicity holds:
        \begin{align*}
            \tilde\pi_{0} &= 1 \\
            \tilde\pi_{1} & = \frac{\E[S_{i2}(0) \mid G_{i} = 1, S_{i1} = 1]}{\E[S_{i2} \mid G_{i} = 1, S_{i1} = 1]} = \frac{\E[S_{i2} \mid G_{i} = 0, S_{i1} = 1]}{\E[S_{i2} \mid G_{i} = 1, S_{i1} = 1]}
        \end{align*} 
        \item[] If Negative Monotonicity holds:
        \begin{align*}
            \tilde\pi_{0} &= \frac{\E[S_{i2}(1) \mid G_{i} = 0, S_{i1} = 1]}{\E[S_{i2} \mid G_{i} = 0, S_{i1} = 1]} = \frac{\E[S_{i2} \mid G_{i} = 1, S_{i1} = 1]}{\E[S_{i2} \mid G_{i} = 0 , S_{i1} = 1]}\\
           \tilde \pi_{1} &= 1
        \end{align*}
\end{itemize}
\end{remark}
\begin{remark}
    Under Assumption \ref{as:abstate}, $\E[S_{i2}\mid G_{i} = g ] \leq \E[S_{i1}\mid G_{i}= g]$ and $\pi_{g} = \tilde \pi_{g}$ $\forall g$.
\end{remark}

The second term in equation \eqref{eq:full_attao} can be decomposed as follows using the Law of Total Expectation:
\begin{equation}
\begin{aligned}
    \E[Y_{2}(0)\mid G_{i} = 1, V_{i} = AO] =&  \E[Y_{2}(0)\mid G_{i} = 1, V_{i} = AO, S_{i1} = 1]Pr(S_{i1} = 1 \mid G_i = 1, V_i = AO) +\\
    &\E[Y_{2}(0)\mid G_{i} = 1, V_{i} = AO, S_{i1} = 0]Pr(S_{i1} = 0 \mid G_i = 1, V_i = AO), 
\end{aligned}
\end{equation}
where the partial identification of $\E[Y_{2}(0)\mid G_{i} = 1, V_{i} = AO, S_{i1} = 1]$ is discussed in Sections \ref{sec:hiru} and \ref{sec:lau}, $Pr(S_{i1} = 1 \mid G_i = 1, V_i = AO)$ is identified from Baye's rule\footnote{
\begin{align*}\tilde \pi_1 &= Pr(S_{2}(0) = 1 \mid G_{i} = 1, S_{2}(1) = 1, S_{1} =1) = Pr(V_{i} = AO \mid G_{i} = 1, S_{i2}= 1,S_{i1} =1) \\
&= \frac{Pr(S_{i1} =1 \mid V_{i} =AO, G_{i}=1, S_{2} =1)Pr(V_{i} = AO \mid G_{i} =1, S_{i2}=1)}{Pr(S_{i1} =1 \mid G_{i} =1, S_{i2}=1)} =  \frac{Pr(S_{i1} =1 \mid V_{i} =AO, G_{i}=1, S_{2} =1)\pi_{1}}{Pr(S_{i1} =1 \mid G_{i} =1, S_{i2}=1)}
\end{align*}
} by 
$$Pr(S_{i1} = 1 \mid G_i = 1, V_i = AO) = \frac{\tilde \pi_{1}\E[S_{i1}\mid G_{i}= 1, S_{i2} = 1]}{\pi_{1}},$$
and $E[Y_{2}(0)\mid G_{i} = 1, V_{i} = AO, S_{i1} = 0]$ can be bounded using the infimum and supremum of the distribution of $Y$ \parencite{manski_nonparametric_1990}, provided once again that $Y$ has compact support.

All the discussion in this section directly applies to the Estimand \ref{es:qttao} ($\qttao(q)$) as well.
%%%%%%%%%%%%%%%%%%%%%%%%%%%%%%%%%%%%%%%%%%%%%%%%%%%%%%%%%%%%%%%%%%%%
%  APPENDIX A: Covariates
%%%%%%%%%%%%%%%%%%%%%%%%%%%%%%%%%%%%%%%%%%%%%%%%%%%%%%%%%%%%%%%%%%%%
\subsection{Covariates} \label{ap:ext_covariates}
Up to this point, this paper abstracts from the role of covariates. However, they have substantial potential to strengthen the identification strategy 
\parencites{grilli_nonparametric_2008}{lee_training_2009}{mealli_using_2013}{long_sharpening_2013}{semenova_generalized_2025}. This section incorporates covariates into the analysis presented in the main text.

Introducing covariates can improve inference in two different ways. First, conditioning on covariates can make identifying assumptions more credible by allowing them to hold locally rather than globally. For example, the monotonicity assumption can be restrictive, as it requires treatment to affect selection in the same direction for all units. Conditioning monotonicity on covariates relaxes this restriction by allowing treatment to affect selection in one direction for some covariate cells and in the opposite direction for others.

Second, covariates can be used to tighten the bounds, making them more informative. As proposed by \textcite{grilli_nonparametric_2008}, researchers can partition the covariate space, $X$, and estimate the Conditional $\attao$ in each cell. Then, the unconditional $\attao$ could be recovered by aggregating the conditional $\attao$, marginalizing over the distribution of $X$. \textcite{lee_training_2009} and \textcite{mealli_using_2013} show that the resulting bounds will never be wider than the unconditional ones. \textcite{long_sharpening_2013} characterize the conditions under which they will be strictly narrower. Whether these results extend to settings without treatment unconfoundedness remains an open question for future research.

To be explicit, I next introduce the conditional analogs of estimands, assumptions, and identification results for the identification of the $\attao$ in the DiD framework. Identical reasoning applies to the CiC framework and the $\qttao(q)$ estimand. Let $X\subseteq \mathbb{R}^{d}$ be the covariate space, and $X_{i}\in X$ be the vector of covariates of unit $i$. Let $\{x^{(k)}\}_{k = 1}^{K}$ be a partition of the covariate space, so that $X_{i} \in x \implies X_{i}\notin \{x^{(k)}\}_{k = 1}^{K}$\textbackslash $\{x\}$.

\begin{estimand} \label{ce:attaox}
    Conditional Average Treatment Effect on the Treated ($C\attao(x)$) 
    $$\E[Y_{i2}(1) - Y_{i2}(0) \mid G_{i}=1, V_{i} = AO, X_{i} = x]$$
\end{estimand}

\begin{assumption} \label{as:pptx} 
    Conditional Principal Parallel Trends (CPPT)
    \begin{equation*}
        \E[Y_{i2}(0) - Y_{i1} \mid G_{i} = 1, V_{i} = AO, X_{i} = x] = \E[Y_{i2}(0) - Y_{i1} \mid G_{i} = 0, V_{i} = AO, X_{i} = x]
    \end{equation*}
\end{assumption}

\begin{assumption} \label{as:mono_conditional}
    Conditional Monotonicity
    \begin{align*}
        S_{i2}(1) \geq S_{i2}(0) \quad & \forall i:X_{i} = x \quad \text{Positive Monotonicity} \\
       & \text{or} \\
        S_{i2}(1) \leq S_{i2}(0) \quad & \forall i: X_{i} = x \quad \text{Negative Monotonicity}         
    \end{align*}
\end{assumption}

\begin{assumption} \label{as:sele_covariates}
    Conditional Selection Model \\
    Under Positive Monotonicity:
    \begin{equation*}
    \begin{aligned}
       ( S_{it}(0) \mid X_{i} = x) = h_x^{0}(U^{x}_{it},t) .
    \end{aligned}
    \end{equation*}
    Under Negative Monotonicity:
    \begin{equation*}
        \\
        (S_{it}(1)\mid X_{i}  =x ) = h_x^{1}(U^{x}_{it},t),
    \end{equation*}
    where $U^{x}_{it}$ is an unobservable scalar for unit $i$ at time $t$ and $h_x^{w}(u,t)$ is non decreasing function in $u$ $\forall$ $t \in\{1,2\}$, $w = \{0,1\}$. \\
    The unobservable $U_{}^{x}$ is continuously distributed and has the same compact support in both groups. Its distribution is constant over time within groups
    \begin{equation*}
        (U_{i1}^{x} \mid G_{i}, X_{i} = x) \sim (U_{i2}^{x} \mid G_{i}, X_{i} = x ).
    \end{equation*}
    Additionally, given the realized selection outcome, the distribution of $U$ is independent of the group in a given time period.
    \begin{equation*}
        U_{it}^{x} \perp \! \! \! \perp G_{i} \mid S_{it}, X_{i} = x
    \end{equation*}
\end{assumption}

\begin{remark} \label{prop:bounds_attao_covariates}
    Let $Y_{it}(1)$ and $Y_{it}(0)$ be continuous. If Assumptions \ref{as:no_anti}, \ref{as:abstate}, \ref{as:rand_samp}, \ref{as:pptx}, \ref{as:mono_conditional} and \ref{as:sele_covariates} hold, then $\Delta^{LB}(x)$ and $\Delta^{UB}(x)$ are lower and upper bounds for the Conditional Average Treatment Effect on the Treated Always-Observed units ($\attao(x)$), where
    \begin{align*}
       \Delta^{LB}(x) & =  \E[\ddot Y_{i} \mid G_{i} = 1, S_{i2} = 1, \ddot Y_{i} \leq \ddot y_{\pi_{1}}^{1}, X_{i} = x ] - \E[\ddot Y_{i} \mid G_{i} = 0, S_{i2} = 1, \ddot Y_{i} \geq \ddot y_{1 - \pi_{0}}^{0}, X_{i} = x ] \\
       \Delta^{UB}(x) & =  \E[\ddot Y_{i} \mid G_{i} = 1, S_{i2} = 1, \ddot Y_{i} \geq \ddot y_{1 -\pi_{1}}^{1} , X_{i} = x] - \E[\ddot Y_{i} \mid G_{i} = 0, S_{i2} = 1, \ddot Y_{i} \leq \ddot y_{\pi_{0}}^{0}, X_{i} = x ] \\
       \ddot Y_{i} & = Y_{i2} - Y_{i1} \\
       \ddot y_{q}^{g} &= \inf \{\ddot y: F(\ddot y) \geq q\}, \text{ with }F \text{ the c.d.f. of }\ddot Y\text{ conditional on } S_{2} = 1, X_{i}= x \text{ and } G = g 
    \end{align*}
     \begin{itemize}
        \item[] If Conditional Positive Monotonicity holds:
        \begin{align*}
            \pi_{0} &= 1 \\
            \pi_{1} & = \frac{\E[S_{i1} \mid G_{i} = 1, X_{i} = x]}{\E[S_{i2} \mid G_{i} = 1, X_{i} = x]}\frac{\E[S_{i2} \mid G_{i} = 0, X_{i} = x]}{\E[S_{i1} \mid G_{i} = 0, X_{i} = x]}
        \end{align*} 
        \item[] If Conditional Negative Monotonicity holds:
        \begin{align*}
            \pi_{0} & = \frac{\E[S_{i1}\mid G_{i} = 0, X_{i} = x]}{\E[S_{i2} \mid G_{i} = 0 , X_{i} = x]}\frac{\E[S_{i2} \mid G_{i} = 1, X_{i} = x]}{\E[S_{i1}\mid G_{i} = 1, X_{i} = x]} \\
            \pi_{1} &= 1
        \end{align*}
    \end{itemize}
    The proof follows verbatim by conditioning throughout and using Propositions \ref{prop:proportions} and \ref{prop:bounds_attao}.
\end{remark}

The adjusted $\overline \attao$ is equal to the weighted average of $C \attao(x)$, marginalizing over the distribution of $X$ for the treated Always-Observed units: $$\overline \attao \in \left[\overline{ \Delta^{LB}} , \overline{\Delta^{UB}}\right],$$ where
\begin{align*}
    \overline{ \Delta^{LB}} & = \sum_{x \in X}\Delta^{LB}(x) p(x), \\
    \overline{\Delta^{UB}} & = \sum_{x \in X}\Delta^{UB}(x) p(x), \\
    p(x) &= Pr(X_{i} = x \mid G_{i} = 1, V_{i} = AO)
\end{align*}
\begin{conjecture}
    The adjusted $\overline \attao$ will always lie inside the unadjusted $\attao$ $$\left[\overline{ \Delta^{LB}} , \overline{\Delta^{UB}}\right] \subseteq \left[\Delta^{LB}, \Delta^{UB}\right]$$
    This statement is proved in \textcite[Proposition 1]{long_sharpening_2013} under the assumption of treatment unconfoundedness.
\end{conjecture}

When it comes to estimation, each of the $\text{C}\attao(x)$ can be estimated as described in Section \ref{sec:lau} in each of the cells that partition the covariate space. These estimates can be aggregated as expressed in \eqref{eq:agg_cattao}, where each of the $\text{C}\attao(x)$ is weighted by the number of treated units and the estimated proportion of AO units in this cell.

\begin{equation}
    \label{eq:agg_cattao}
    \begin{aligned}
        \widehat{\Delta^{LB}} &= \sum_{x\in X} \omega_{x}\widehat{\Delta^{LB}} (x) \\
         \widehat{\Delta^{UB}} &=\sum_{x\in X} \omega_{x}\widehat{\Delta^{UB}} (x) \\
         \omega_{x} &= \frac{\widehat \pi_{1}^{x}}{\Pi_{1}}\frac{\sum_{i}\indicator{(G_{i}= 1, S_{i2}= 1,X_{i} = x)}}{N_{1}} \\
         \Pi_{1} &= \sum_{x\in X}\widehat \pi_{1}^{x}
    \end{aligned}
\end{equation}

The discretized, cell-based procedure just described can help make the identification strategy more credible and informative. This procedure goes in the direction of \textcite{semenova_generalized_2025} and could be generalized and improved by adapting the Generalized Lee Bounds in \textcite{semenova_generalized_2025} to the panel data setting considered in this paper.

%%%%%%%%%%%%%%%%%%%%%%%%%%%%%%%%%%%%%%%%%%%%%%%%%%%%%%%%%%%%%%%%%%%%
%  APPENDIX A: Cross-Sections
%%%%%%%%%%%%%%%%%%%%%%%%%%%%%%%%%%%%%%%%%%%%%%%%%%%%%%%%%%%%%%%%%%%%
\subsection{Repeated Cross-Sections} \label{ap:ext_rcs}
 Throughout the paper, I have conditioned on both $S_{i1}$ and $S_{i2}$ and, for the DiD case, used $\ddot Y_{i} = Y_{i2} - Y_{i1}$. When panel data are unavailable, and only repeated cross-sections are observed, these quantities cannot be constructed. This section extends the results to the repeated cross-section setting.

\begin{lemma} \label{prop:cross_qttao}
    Let $Y_{it}(1)$ and $Y_{it}(0)$ be continuous with compact support. Furthermore, let the support of $Y_{it}(0)$ for the treatment group be contained in the support of $Y_{it}(0)$ for the control group. Then, if Assumptions \ref{as:no_anti}, \ref{as:abstate}, \ref{as:rand_samp} and \ref{as:outcome} hold, then
    \begin{align*}
         \qttao(q) = Q_{Y_{2} \mid G=1, V = AO}(q) -  Q_{Y_{2} \mid G=0, V=AO}\left(F_{Y_{1}\mid G=0, V=AO}\left(Q_{Y_{1}\mid G=1,V=AO}(q)\right) \right)
    \end{align*}
    Proof: See equation \eqref{eq:ap_qttao_observed}.
\end{lemma}
\begin{lemma} \label{prop:cross_attao}
     Let $Y_{it}(1)$ and $Y_{it}(0)$ be continuous. If Assumptions \ref{as:no_anti}, \ref{as:abstate}, \ref{as:rand_samp} and \ref{as:ppt} hold, then 
    \begin{align*}
       \attao =& \E[Y_{i2} \mid G_{i} = 1, V_{i} = AO]-\E[Y_{i1} \mid G_{i} = 1, V_{i} = AO]\\
       &+\E[Y_{i2} \mid G_{i} = 0, V_{i} = AO]-\E[Y_{i1} \mid G_{i} = 0, V_{i} = AO]
    \end{align*}
    Proof: See equations \eqref{eq:missing_po_ao} and \eqref{eq:princ_did}.
\end{lemma}
Lemmas \ref{prop:cross_qttao} and \ref{prop:cross_attao} identify the two causal estimands as functions of the expectations or c.d.f.s of observed outcomes, conditional on a latent stratum membership. Proposition \ref{prop:proportions} could be directly applied to the repeated cross-section case to identify the proportion of AO units in each group at $t=2$, that is, $Pr(V_{i} = AO \mid G=g, S_{2} = 1)$. These proportions can be used to partially identify the distribution of $Y_{2} \mid G=g, S_{2} = 1$ and its expectation.

In the panel setting, these proportions can also be used to partially identify the distribution of $Y_{i1} \mid G=g, V_{i} = AO$ and its expectation, exploiting the observed distribution of $Y_{i1} \mid G=g, V_{i} = AO, S_{i2} = 1$. However, in the repeated cross-sections setting, this is not feasible, and identification must instead rely on the distribution of $Y_{i1} \mid G=g,S_{1} = 1$.

\begin{proposition} \label{prop:cross_bounds_qttao}
    Let $Y_{it}(1)$ and $Y_{it}(0)$ be continuous with compact %This ensures the quantile function is right continuous. See Athey and Imbens (2006) equation (8) and the paragraph below.
    support. Furthermore, let the support of $Y_{it}(0)$ for the treatment group be contained in the support of $Y_{it}(0)$ for the control group. Then, if Assumptions \ref{as:no_anti}, \ref{as:abstate}, \ref{as:rand_samp} and \ref{as:outcome} hold, then $\Lambda^{LB}(q)$ and $\Lambda^{UB}(q)$ are lower and upper bounds for the Quantile Treatment Effect on the Treated Always-Observed units ($\qttao(q)$), where:
    \begin{align*}
        \Lambda^{LB}(q) &= Q_{Y_2 \mid G=1, S_{2} = 1}(q\pi_1) - Q_{Y_{2} \mid G=0, S_2 = 1}\left(F_{Y_{1}\mid G=0, S_1 = 1}\left(Q_{Y_{1}\mid G=1,S_1 = 1}(q\varpi_1 + 1 - \varpi_1)\right)\frac{\pi_{0}}{\varpi_{0}} + 1 - \pi_0 \right), \\
         \Lambda^{UB}(q) &= Q_{Y_2 \mid G=1, S_{2} = 1}(q\pi_1 + 1 - \pi_1) -  Q_{Y_{2} \mid G=0,S_2 = 1}\left(\left(F_{Y_{1}\mid G=0, S_1 = 1}\left(Q_{Y_{1}\mid G=1,S_1=1}(q\varpi_1)\right)-(1-\varpi_0)\right)\frac{\pi_{0}}{\varpi_{0}}\right), \\
        \pi_{1} & := Pr(V_{i} = AO \mid G_{i} = 1, S_{i2} = 1)\\
       \pi_{0} & := Pr(V_{i} = AO \mid G_{i} = 0, S_{i2} = 1) \\
       \varpi_{1} & := Pr(V_{i} = AO \mid G_{i} = 1, S_{i1} = 1)\\
       \varpi_{0} & := Pr(V_{i} = AO \mid G_{i} = 0, S_{i1} = 1) \\
       F_{Y}(y) & := Pr(Y \leq y) \\
       Q_{Y}(q) & := \inf\{y: F_{Y}(y) \geq q\}
    \end{align*}
    provided that 
    \begin{align*}
        F_{Y_{1}\mid G=0, S_2 = 1}\left(Q_{Y_{1}\mid G=1,S_2 = 1}(q\varpi_1 + 1 - \varpi_1)\right) &\leq \varpi_0 \\
        F_{Y_1 \mid G=0, S_2 = 1}\left(Q_{Y_1\mid G=1, S_2 = 0} (q\varpi_1)\right) &\geq 1 - \varpi_0.
    \end{align*}
    If $F_{Y_{1}\mid G=0, S_2 = 1}\left(Q_{Y_{1}\mid G=1,S_2 = 1}(q\varpi_1 + 1 - \varpi_1)\right) > \varpi_0$, then 
    \begin{equation*}
         \Lambda^{LB}(q) = Q_{Y_2 \mid G=1, S_{2} = 1}(q\pi_1) - Q_{Y_{2} \mid G=0, S_2 = 1}\left(1\right).
    \end{equation*}
    If $F_{Y_1 \mid G=0, S_2 = 1}\left(Q_{Y_1\mid G=1, S_2 = 0} (q\varpi_1)\right) < 1 - \varpi_0$, then 
    \begin{equation*}
        \Lambda^{UB}(q) = Q_{Y_2 \mid G=1, S_{2} = 1}(q\pi_1 + 1 - \pi_1) -  Q_{Y_{2} \mid G=0,S_2 = 1}\left(0\right)
    \end{equation*}
\end{proposition}
\begin{proposition} \label{prop:cross_bounds_attao}
    Let $Y_{it}(1)$ and $Y_{it}(0)$ be continuous. If Assumptions \ref{as:no_anti}, \ref{as:abstate}, \ref{as:rand_samp} and \ref{as:ppt} hold, then $\Delta^{LB}$ and $\Delta^{UB}$ are lower and upper bounds for the Average Treatment Effect on the Treated Always-Observed units ($\attao$), where
    \begin{align*}
        \Delta^{LB} &=  \E[Y_{i2} \mid G_{i} = 1, Y_{i2} \leq y^{1}_{2,\pi_{1}}]-\E[Y_{i1} \mid G_{i} = 1, Y_{i1} \geq y^{1}_{1, 1-\varpi_{1}}] \\
        &+\E[Y_{i2} \mid G_{i} = 0, Y_{i2} \leq y^{0}_{2, \pi_{0}}]-\E[Y_{i1} \mid G_{i} = 0, Y_{i1} \geq y^{1}_{1, 1 - \varpi_{0}}]\\
       \Delta^{UB} &=  \E[Y_{i2} \mid G_{i} = 1, Y_{i2} \geq y^{1}_{2,1-\pi_{1}}]-\E[Y_{i1} \mid G_{i} = 1, Y_{i1} \leq y^{1}_{1, \varpi_{1}}]\\
       & +\E[Y_{i2} \mid G_{i} = 0, Y_{i2} \geq y^{0}_{2, 1-\pi_{0}}]-\E[Y_{i1} \mid G_{i} = 0, Y_{i1} \leq y^{1}_{1, \varpi_{0}}]\\
       \ddot y_{t, q}^{g} &:= \inf \{\ddot y: F_{Y_{t} \mid G=g, S_{t} =1}(\ddot y) \geq q\}, \text{ with }F_{Y_{t}\mid G =g, S_{t} = 1} \text{ the c.d.f. of } Y_{t}\text{ conditional on } S_{t} = 1 \text{ and } G = g \\
       \pi_{1} & := Pr(V_{i} = AO \mid G_{i} = 1, S_{i2} = 1)\\
       \pi_{0} & := Pr(V_{i} = AO \mid G_{i} = 0, S_{i2} = 1) \\
       \varpi_{1} & := Pr(V_{i} = AO \mid G_{i} = 1, S_{i1} = 1)\\
       \varpi_{0} & := Pr(V_{i} = AO \mid G_{i} = 0, S_{i1} = 1).
    \end{align*}
\end{proposition}
The proof of Propositions \ref{prop:cross_bounds_qttao} and \ref{prop:cross_bounds_attao} are similar to the proofs of Propositions \ref{prop:bounds_qttao} and \ref{prop:bounds_attao} respectively and are omitted. 

The remaining challenge is the identification of $\varpi_{0}$ and $\varpi_{1}$, used to partially identify the distribution of $Y_{1} \mid G=g, V_{i} =AO$ from the distribution of $Y_{1} \mid G_{i} =g, S_{1} =1$.

\begin{proposition} \label{prop:cross_proportion}
    Under Assumptions \ref{as:no_anti}, \ref{as:abstate}, and \ref{as:rand_samp}, the proportions of AO units in the pre-treatment period are given by
    \begin{align*}
        \varpi_{1} &= \pi_{1}\frac{\E[S_{i2} \mid G_{i} = 1]}{\E[S_{i1} \mid G_{i} = 1]} \\
        \varpi_{0} &= \pi_{0}\frac{\E[S_{i2} \mid G_{i} = 0]}{\E[S_{i1} \mid G_{i} = 0]} 
    \end{align*}
    Proof: See Appendix \ref{proof:cross_proportion}.
\end{proposition}

Finally, notice that without panel data, Assumption \ref{as:abstate} cannot be tested since the pair $(S_{i1},S_{i2})$ is never jointly observed for any unit. Nevertheless, if $\E[S_{i1}\mid G_{i}=g] \geq \E[S_{i2}\mid G_{i} =g]$, this could be interpreted as evidence against this assumption. When Assumption \ref{as:abstate} does not hold, $\pi_{1}$ and $\pi_{0}$ can still be identified as described in Proposition \ref{prop:proportions_general}. On the other hand, $\varpi_{1}$ and $\varpi_{0}$ can only be partially identified as follows:
\begin{proposition} \label{prop:cross_proportion_bounds}
    Under Assumptions \ref{as:no_anti} and \ref{as:rand_samp}, the proportions of AO units in the pre-treatment period are bounded by
    \begin{align*}
        \max\left\{0, \frac{\pi_{1}\E[S_{i2}\mid G_{i}= 1]-\left(1 - \E[S_{i1}\mid G_{i} = 1]\right)}{\E[S_{i1}\mid G_{i} = 1]}\right\}\leq \varpi_{1} \leq \min\left\{1,\pi_{1}\frac{\E[S_{i2}\mid G_{i}=1]}{\E[S_{i1}\mid G_{i}=1]}\right\} \\
        max\left\{0, \frac{\pi_{0}\E[S_{i2}\mid G_{i}= 0]-\left(1 - \E[S_{i1}\mid G_{i} = 0]\right)}{\E[S_{i1}\mid G_{i} = 0]}\right\}\leq \varpi_{0} \leq \min\left\{1,\pi_{0}\frac{\E[S_{i2}\mid G_{i}=0]}{\E[S_{i1}\mid G_{i}=0]}\right\}
    \end{align*}
    Proof: See Appendix \ref{proof:cross_proportion_bounds}.
\end{proposition}
Intuitively, once the proportion of AO units in the observed post-treatment period is identified ($\pi_{g}$), then the total proportion of AO units in the group $g$ is identified as well. Because $V_{i}$ is defined using the post-treatment potential selection outcomes, under Assumption \ref{as:rand_samp}, the proportion of units with $V_{i} =AO$ is the same in the pre and post-treatment periods. Under Assumption \ref{as:abstate}, all the units such that $V_{i}=AO$ would also be observed in the pre-treatment period and hence $Pr(V_{i} = AO \mid G_{i} = g, S_1 = 1)$ is identified. When Assumption \ref{as:abstate} does not hold, it could be that some of the missing units in the pre-treatment period have $V_{i} = AO$. In the worst-case scenario, all the missing units in the pre-treatment period belong to the $AO$ stratum. In the best-case scenario, all the units with $V_{i} =AO$ are observed at $t = 1$. These extreme cases yield bounds on $Pr(V_{i} =AO \mid G_{i} =g, S_{i1}= 1)$ as described in Proposition \ref{prop:cross_proportion_bounds} and Appendix \ref{proof:cross_proportion_bounds}. These bounds can then be used in Propositions \ref{prop:cross_bounds_qttao} and \ref{prop:cross_bounds_attao}.

\begin{example}
    Consider the case where 90\% of the units are observed in the pre-treatment period in group $G=g$, 80\% of the units are observed in the post-treatment period in the same group, out of which 90\% are AO. That is, $\E[S_{i1}\mid G_i=g] = 0.9$, $\E[S_{i2}\mid G_i=g] = 0.8$ and $Pr(V_{i} = AO \mid G_i = g, S_{i2} = 1) = 0.9$.
    
    90\% of the observed units in the post-treatment period are AO, meaning that $0.9\times 0.8 = 0.72$ of the total units in group $g$ belong to this stratum. The remaining $0.1 \times 0.8 = 0.08$ belongs to the OT stratum, and the 0.2 not observed belongs to the OC and NO strata.

    This means that 72\% of the units belong to the AO stratum. If $V_{i} = AO \implies S_{i1} = 1$, it follows that $0.72/0.9 = 0.8$ units observed in the pre-treatment period would belong to the AO stratum in the post-treatment period. On the other hand, if all the missing units in the pre-treatment period (10\%) belong to the AO stratum, then it follows that there are still 62\% of units who have $S_{i1} = 1$ and $V_{i} = AO$, which is $0.62/0.9\approx0.68$ of the observed treated group at $t=1$. 
\end{example}

%%%%%%%%%%%%%%%%%%%%%%%%%%%%%%%%%%%%%%%%%%%%%%%%%%%%%%%%%%%%%%%%%%%%
%  APPENDIX A: Binary
%%%%%%%%%%%%%%%%%%%%%%%%%%%%%%%%%%%%%%%%%%%%%%%%%%%%%%%%%%%%%%%%%%%%
\subsection{Discrete outcomes} \label{ap:ext_discrete}
So far, I have assumed that the outcome $Y$ is continuously distributed over a compact support. When the outcome is discrete, it is still possible to define the quantile function $Q_{Y}(q) = \{\inf y : F(y) \geq q\}$. However, because discrete data typically exhibit ties, this function is no longer strictly increasing in $q$; instead, it becomes a step function. As a result, the identification strategy developed for continuous outcomes is no longer directly applicable. In the DiD case, the trimming procedure developed in Proposition \ref{prop:bounds_attao} relies on the identification of the trimming values, $\ddot y_{q}^{} $. However, with discrete outcomes (or continuous outcomes with ties), it is possible that $F(\ddot y_{q}) \neq q$. In the CiC case, the strict monotonicity of the function $m(\cdot)$ in Assumption \ref{as:outcome} has too strong consequences. Both points are illustrated in Example \ref{ex:discrete}.

\begin{example} \label{ex:discrete}
    Assume that $ Y_{i} \sim Bernoulli(0.7)$ and that $\pi_{} = 0.9$. That is, I have a binary outcome, and I know that 90\% of the observed units belong to the AO stratum. In this case, $y_{0.9} = Q(0.9) = 1$, and the trimming procedure described in Proposition \ref{prop:bounds_attao} would not trim any unit, with $\E[Y_{i} \mid Y_{i} \leq y_{0.9}] = 0.7$. If, instead, one were to apply a strict inequality, the procedure would trim 30\% of the units, with $\E[Y_{i} \mid y_{0.9}< 1] = 0$. It is clear that neither approach delivers a satisfactory trimming rule in this setting.

    Consider now the CiC outcome model in Assumption \ref{as:outcome}. According to this model, $\mathcal{U}_{i} > \mathcal{U}_{j} \iff Y_{i} > Y_{j}$. With a binary outcome, this implies that the unobservable $\mathcal{U}_{i}$ can take only two values: $\mathcal{U}_{0}$ for all the units with $Y_{i} = 0$ and $\mathcal{U}_{1}$ for all the units with $Y_i = 1$. Furthermore, since this distribution is constant over time, it implies that the distribution of $Y$ is also constant, which is restrictive.
\end{example}

\subsubsection{Identification in the CiC setup with Discrete outcomes}
For the CiC case, I extend the methodoloy developed in this paper to the Discrete Changes-in-Changes (DCIC) design \textcite[Section 4]{athey_identification_2006}.

\begin{assumption} \label{as:outcome_discrete}
    Outcome model \\
    \begin{equation*}
        (Y_{it}(0) \mid V_{i} = AO )\quad =\quad  m(\mathcal{U}_{it}, t),
    \end{equation*}
    where $m(u,t)$ is non-decreasing in $u$ $\forall t$, and $\mathcal{U}_{it}$ is an unobservable scalar for unit $i$ at time $t$ with continuous and constant distribution over time within groups,
    \begin{equation*}
        \mathcal{U}_{i1} \mid G_{i}, V_{i} = AO  \sim \mathcal{U}_{i2} \mid G_{i}, V_{i} = AO.
    \end{equation*}
\end{assumption}
In Assumption \ref{as:outcome}, the function $m(\mathcal{U}_{it},t)$ was assumed to be strictly increasing in $\mathcal{U}_{it}$. In Assumption \ref{as:outcome_discrete}, this function is assumed to be non-decreasing. This assumption is weaker and eliminates the strong implications of the CiC model with a discrete outcome highlighted earlier. The downside is that the distribution of the missing potential outcome for the treated group can only be partially identified \parencite{athey_identification_2006}. Furthermore, the unobservable $\mathcal{U}_{it}$ is assumed to be continuous.

\begin{proposition} \label{prop:bounds_qttao_discrete}
    Let $Y_{it}(1)$ and $Y_{it}(0)$ be discrete with compact support. Furthermore, let the support of $Y_{it}(0)$ for the treatment group be contained in the support of $Y_{it}(0)$ for the control group. Then, if Assumptions \ref{as:no_anti}, \ref{as:abstate}, \ref{as:rand_samp} and \ref{as:outcome_discrete} hold, then $\Lambda^{LB}(q)$ and $\Lambda^{UB}(q)$ are lower and upper bounds for the Quantile Treatment Effect on the Treated Always-Observed units ($\qttao(q)$), where:
    \begin{align*}
        \Lambda^{LB}(q) &= Q_{Y_2 \mid G=1, S_{2} = 1}(q\pi_1) -  Q_{Y_{2} \mid G=0, S_2 = 1}\left(F_{Y_{1}\mid G=0, S_2 = 1}\left( Q_{Y_{1}\mid G=1,S_2 = 1}(q\pi_1 + 1 - \pi_1)\right) + 1 - \pi_0 \right), \\
         \Lambda^{UB}(q) &= Q_{Y_2 \mid G=1, S_{2} = 1}(q\pi_1 + 1 - \pi_1) -  \widetilde Q_{Y_{2} \mid G=0,S_2 = 1}\left(F_{Y_{1}\mid G=0, S_2 = 1}\left(\widetilde Q_{Y_{1}\mid G=1,S=1}(q\pi_1)\right)-(1-\pi_0)\right), \\
        \pi_{1} & = Pr(S_{i2}(0) = 1 \mid G_{i} = 1, S_{i2}(1) = 1),\\
       \pi_{0} & = Pr(S_{i2}(1) = 1 \mid G_{i} = 0, S_{i2}(0) = 1), \\
       F_{Y}(y) & := Pr(Y \leq y), \\
       Q_{Y}(q) & := \inf\{y: F_{Y}(y) \geq q\} , \\
        \widetilde Q_{Y}(q) & := \sup\{y\cup\{-\infty\}: F_{Y}(y) \leq q\} ,
    \end{align*}
    provided that 
    \begin{align*}
        F_{Y_{1}\mid G=0, S_2 = 1}\left(Q_{Y_{1}\mid G=1,S_2 = 1}(q\pi_1 + 1 - \pi_1)\right) &\leq \pi_0 \\
        F_{Y_1 \mid G=0, S_2 = 1}\left(\widetilde Q_{Y_1\mid G=1, S_2 = 0} (q\pi_1)\right) &\geq 1 - \pi_0.
    \end{align*}
    If $F_{Y_{1}\mid G=0, S_2 = 1}\left( Q_{Y_{1}\mid G=1,S_2 = 1}(q\pi_1 + 1 - \pi_1)\right) > \pi_0$, then 
    \begin{equation*}
         \Lambda^{LB}(q) = Q_{Y_2 \mid G=1, S_{2} = 1}(q\pi_1) - Q_{Y_{2} \mid G=0, S_2 = 1}\left(1\right).
    \end{equation*}
    If $F_{Y_1 \mid G=0, S_2 = 1}\left(\widetilde Q_{Y_1\mid G=1, S_2 = 0} (q\pi_1)\right) < 1 - \pi_0$, then 
    \begin{equation*}
        \Lambda^{UB}(q) = Q_{Y_2 \mid G=1, S_{2} = 1}(q\pi_1 + 1 - \pi_1) -  \widetilde Q_{Y_{2} \mid G=0,S_2 = 1}\left(0\right).
    \end{equation*}
    Where I use the convention that $F_{Y}(-\infty) = 0$. \\
    
    Proof: Similar as proof in Appendix \ref{proof:bounds_qttao}, replacing Lemma \ref{prop:AIprop3} by Theorem 4.1 in \textcite{athey_identification_2006}.
\end{proposition}
Proposition \ref{prop:bounds_qttao_discrete} partially identifies the $\qttao(q)$ under Assumption \ref{as:outcome_discrete}. The key difference relative to Proposition \ref{prop:bounds_qttao} concerns the lower bound, $\Delta^{LB}$, where the left-continuous quantile function, $Q_{Y}(q)$, is replaced by the right-continuous quantile function, $\widetilde{Q}_{Y}(q)$. This modification follows from the results in Section 4 in \textcite{athey_identification_2006}. When the outcome is discrete, or more generally, there are mass points that generate ties in the data, the distribution $F_{Y_{2}(0)\mid G=1}(y)$ can only be partially identified. The upper bound is given using the left-continuous quantile function, $Q$, as it `jumps' at the first $y$ where $F_{Y}(y)$ reaches or exceeds $q$. On the other hand, the right-continuous quantile function, $\widetilde Q$, `jumps' at the last value of $y$ such that $F_{Y}(y)$ is not larger than $q$. When $F_{Y}(y)$ is strictly increasing, then $Q_{Y}(q) = \widetilde Q_{Y}(q)$. When $F_{Y}(y)$ is flat, $Q_{Y}(q) > \widetilde Q_{Y}(q)$. Consequently, $Q$ is used to compute the upper bound for AO units, $\Delta^{LB}$, while $\widetilde Q$ is used for the lower bound, $\Delta^{LB}$. 

\subsubsection{Identification in the DiD setup with Discrete outcomes}
The solution to the problem illustrated in Example \ref{ex:discrete} is to reweight units so that the weighted average of the outcome successfully captures the upper or lower bound of the latent distribution for the AO units. That is, define a new p.m.f. that is equivalent to the original p.m.f. after trimming $\pi\times 100$ percent of the units. To that end, define:

\begin{equation} \label{eq:discrete_lb}
     \mathcal{\underline P}_{Y}^{\pi}( y) = \begin{cases}
         \frac{Pr(Y = y)}{\pi} \quad \quad \text{if} \quad y < Q_{Y}(\pi) \\
         \frac{\pi -Pr(y < Q_{Y}(\pi))}{\pi} \quad \quad \text{if} \quad y = Q_{Y}(\pi) \\
         0 \quad \quad \text{if} \quad y > Q_{Y}(\pi)
     \end{cases}  
\end{equation}
\begin{equation} \label{eq:discrete_ub}
    \mathcal{\overline P }_{Y}^{\pi} (y) = \begin{cases}
        0 \quad \quad \text{if} \quad y < Q_{Y}(1-\pi) \\
        \frac{\pi - Pr(Y > Q_{Y}(1-\pi))}{\pi} \quad \quad \text{if} \quad y =  Q_{Y}(1-\pi)\\
        \frac{Pr(Y = y)}{\pi} \quad \quad \text{if} \quad y > Q_{Y}(1-\pi)
    \end{cases}
\end{equation}
Equations \eqref{eq:discrete_lb} and \eqref{eq:discrete_ub} describe a trimming function for discrete outcomes given a trimming proportion $\pi$. In equation \eqref{eq:discrete_lb}, the trimming threshold is given by $Q_{Y}(\pi)$. All the outcomes strictly above this threshold receive zero weight, and all the units strictly below receive a weight equal to their probability. Because $Y$ is discrete, there could be a mass at $Y = Q_{y}(\pi)$, implying that $Pr(Y < Q_{Y}(\pi)) < \pi < Pr(Y \leq Q_{Y}(\pi))$. If that is the case, the units with $Y = Q_{Y}(\pi)$ are reweighted, using $\pi - Pr(Y<Q_{Y}(\pi))$ as weight instead of $Pr(Y = Q(\pi)) = Pr(Y \leq Q_{Y}(\pi)) - Pr(Y>Q_{Y}(\pi))$. Finally, the weights are normalized by $\pi$, ensuring that $\mathcal{\underline P}_{Y}^{\pi}(y)$ is a proper probability mass function. Exactly the same intuition is behind equation \eqref{eq:discrete_ub}.

\begin{proposition} \label{prop:bounds_attao_discrete}
    Let $Y_{it}(1)$ and $Y_{it}(0)$ be discrete. If Assumptions \ref{as:no_anti}, \ref{as:abstate}, \ref{as:rand_samp} and \ref{as:ppt} hold, then $\Delta^{LB}$ and $\Delta^{UB}$ are lower and upper bounds for the Average Treatment Effect on the Treated Always-Observed units ($\attao$), where
    \begin{align*}
        \Delta^{LB} & = \sum_{i:G_{i} = 1, S_{i2}= 1}\ddot Y_{i}\mathcal{\underline P}_{\ddot Y_{i} \mid G_{i} = 1, S_{i2} = 1}^{\pi_{1}}(y)-\sum_{i:G_{i} = 0, S_{i2}= 1}\ddot Y_{i}\mathcal{\overline P}_{\ddot Y_{i} \mid G_{i} = 0, S_{i2} = 1}^{\pi_{0}}(y) \\
         \Delta^{UB} & = \sum_{i:G_{i} = 1, S_{i2}= 1}\ddot Y_{i}\mathcal{\overline P}_{\ddot Y_{i} \mid G_{i} = 1, S_{i2} = 1}^{\pi_{1}}(y)-\sum_{i:G_{i} = 0, S_{i2}= 1}\ddot Y_{i}\mathcal{\underline P}_{\ddot Y_{i} \mid G_{i} = 0, S_{i2} = 1}^{\pi_{0}}(y) \\
       \pi_{1} & = Pr(S_{i2}(0) = 1 \mid G_{i} = 1, S_{i2}(1) = 1)\\
       \pi_{0} & = Pr(S_{i2}(1) = 1 \mid G_{i} = 0, S_{i2}(0) = 1).
    \end{align*}
    
\end{proposition}

%%%%%%%%%%%%%%%%%%%%%%%%%%%%%%%%%%%%%%%%%%%%%%%%%%%%%%%%%%%%%%%%%%%%
%  APPENDIX B: Proofs
%%%%%%%%%%%%%%%%%%%%%%%%%%%%%%%%%%%%%%%%%%%%%%%%%%%%%%%%%%%%%%%%%%%%
\renewcommand{\theequation}{B\arabic{equation}}
\renewcommand{\thetable}{B\arabic{table}}
\renewcommand{\thefigure}{B\arabic{figure}}
\renewcommand{\theexample}{B\arabic{example}}
\renewcommand{\thelemma}{B\arabic{lemma}}
\setcounter{equation}{0}
\setcounter{table}{0}
\setcounter{figure}{0}
\setcounter{example}{0}
\setcounter{lemma}{0}

\section{Proofs} \label{ap:proofs}
%%%%%%%%%%%%%%%%%%%%%%%%%%%%%%%%%%%%%%%%%%%%%%%%%%%%%%%%%%%%%%%%%%%%
\subsection{Proof of Proposition \ref{prop:bounds_qttao}} \label{proof:bounds_qttao}
The structure of the proof is as follows. First, I draw on results from \textcite{horowitz_identification_1995} to partially identify the outcome distribution for the Always-Observed units in all the periods and groups. Next, I use results from \textcite{athey_identification_2006} to express the Causal Estimand \ref{es:qttao}, $\qttao$, as a function of distributions of observed outcomes. Finally, I combine these results to partially identify the $\qttao(q)$.

\begin{proof}
    \begin{lemma} \label{prop:HMprop2}
    Let $Y$ be a continuous random variable and a mixture of two random variables, $Y_{0}$ and $Y_{1}$, with known mixing probability, $p$. The c.d.f. of the observed $Y$ is $F(y) = (1-p) F_{0}(y) + pF_{1}(y)$, where $F_{0}$ and $F_{1}$ are the c.d.f.s of $Y_{0}$ and $Y_{1}$ respectively. Then, the c.d.f. of $Y_{1}$ is bounded as follows:
    \begin{equation*}
        F_{1}(y) \in \left[\max\left\{0, \frac{F(y) - (1-p)}{p}\right\}, \min\left\{\frac{F(y)}{p},1\right\}\right]
    \end{equation*}
    These bounds are sharp.\\
    Proof: See \textcite[Proposition 2]{horowitz_identification_1995}.
\end{lemma}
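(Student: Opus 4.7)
The plan is to derive the bounds directly from the mixture identity and then establish sharpness by constructing extremal pairs of distributions. Starting from $F(y) = (1-p)F_{0}(y) + p F_{1}(y)$ and solving for $F_{1}$ gives
\[
F_{1}(y) = \frac{F(y) - (1-p)F_{0}(y)}{p}.
\]
Since $F_{0}$ is a valid CDF, $F_{0}(y) \in [0,1]$ for every $y$. Plugging in $F_{0}(y) = 0$ produces the upper candidate $F(y)/p$, while plugging in $F_{0}(y) = 1$ produces the lower candidate $(F(y) - (1-p))/p$. These are the values of $F_{1}(y)$ consistent with the mixture identity at the extremes of what $F_{0}(y)$ can be. Imposing in addition the requirement that $F_{1}(y)$ itself lies in $[0,1]$ gives the envelope in the statement: the lower bound is $\max\{0, (F(y)-(1-p))/p\}$ and the upper bound is $\min\{F(y)/p, 1\}$.

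For sharpness, I need to exhibit, at each fixed $y_{0}$, valid CDFs $F_{0}$ and $F_{1}$ consistent with the mixture decomposition such that $F_{1}(y_{0})$ attains the upper bound (and, separately, the lower bound). The natural construction for the upper bound places as much of the mass of $Y_{0}$ as possible above $y_{0}$, i.e.\ sets $F_{0}(y_{0})$ as small as feasible; when $F(y_{0}) \leq p$ this allows $F_{0}(y_{0}) = 0$ and hence $F_{1}(y_{0}) = F(y_{0})/p$, and when $F(y_{0}) > p$ the extremum is forced to $F_{1}(y_{0}) = 1$, which requires $F_{0}(y_{0}) = (F(y_{0})-p)/(1-p)$. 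The lower bound is constructed symmetrically, pushing the mass of $Y_{0}$ below $y_{0}$.

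The main obstacle is verifying that the extremal candidate $F_{1}$ constructed to attain the bound at a particular $y_{0}$ is globally a valid CDF (nondecreasing, right-continuous, with limits $0$ and $1$) rather than only pointwise consistent. Sharpness is a pointwise claim, so a distinct extremal pair $(F_{0}, F_{1})$ may be used for each $y_{0}$; the construction proceeds by defining $F_{1}$ piecewise — equal to the bound on the appropriate side of $y_{0}$ and extended monotonically on the other side — and then recovering $F_{0}$ from the mixture identity, checking monotonicity across the kink at $y_{0}$. Since the statement is attributed to Proposition 2 of \textcite{horowitz_identification_1995}, the detailed verification of admissibility of the constructed extremal distributions can be imported from that reference; the derivation of the envelope itself is the short algebraic step above.
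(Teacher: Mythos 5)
Your derivation is correct, and since the paper itself offers no argument beyond the citation to Horowitz and Manski (1995, Proposition 2), you are essentially reconstructing the argument behind that citation: solve the mixture identity for $F_{1}$, use $F_{0}(y)\in[0,1]$ to get the two candidates, and intersect with $[0,1]$. One simplification worth noting for the sharpness step: you do not need a separate piecewise construction for each $y_{0}$, because the envelope functions are themselves globally admissible. For the upper bound, take $F_{1}(y)=\min\{F(y)/p,\,1\}$ for all $y$; this is nondecreasing, right-continuous, with the correct limits, and the implied $F_{0}(y)=\bigl(F(y)-pF_{1}(y)\bigr)/(1-p)=\max\{0,\,(F(y)-p)/(1-p)\}$ is also a valid c.d.f., so the upper bound is attained simultaneously at every $y$ by a single admissible pair. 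The symmetric choice $F_{1}(y)=\max\{0,\,(F(y)-(1-p))/p\}$ with $F_{0}(y)=\min\{F(y)/(1-p),\,1\}$ attains the lower bound. This removes the only step you flagged as delicate (checking monotonicity across the kink at $y_{0}$), so the whole proof can be made self-contained without importing the admissibility verification from the reference.
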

Consider the observed outcome for the treatment group in the post-treatment period. Denote its c.d.f. by $F_{Y_{2} \mid G = 1, S_2 = 1}(y)$. This distribution is a mixture of the outcome distribution for AO and OT units with mixing probability $\pi_{1}$. Accordingly, I can write
\begin{equation} \label{eq:proof_mixture}
    F_{Y_{2} \mid G = 1,S_2 = 1}(y) = \pi_{1}F_{Y_2 \mid G=1, V = AO}(y) + (1-\pi_{1})F_{Y_{2}\mid G = 1, V = OT}(y).
\end{equation}
From Lemma \ref{prop:HMprop2}, it follows that 
\begin{equation}
    F_{Y_2 \mid G=1, V = AO}(y) \in \left[\max\left\{0,\frac{F_{Y_{2} \mid G = 1, S_2 = 1}(y) - (1- \pi_{1})}{\pi_{1}}\right\},\min\left\{\frac{F_{Y_{2} \mid G = 1, S_2 = 1}(y)}{\pi_{1}},1\right\}\right].
\end{equation}

Consider now the observed outcome for the treatment group in the pre-treatment period, restricted to units also observed in the post-treatment period. Denote its c.d.f. by $F_{Y_{1}\mid G=1, S_{2} = 1}(y)$. Because I am conditioning on being observed in the post-treatment period, this distribution is also a mixture of the outcome distribution for AO and OT units with mixing probability $\pi_{1}$. Therefore, I can write
\begin{equation} \label{eq:proof_mixture_bi}
    F_{Y_{1} \mid G = 1, S_{2}= 1}(y) = \pi_{1}F_{Y_1 \mid G=1, V = AO}(y) + (1-\pi_{1})F_{Y_{1}\mid G = 1, V = OT}(y).
\end{equation}
From Lemma \ref{prop:HMprop2}, it follows that 
\begin{equation}
    F_{Y_1 \mid G=1, V = AO}(y) \in \left[\max\left\{0,\frac{F_{Y_{1} \mid G = 1,S_{2} = 1}(y) - (1- \pi_{1})}{\pi_{1}}\right\},\min\left\{\frac{F_{Y_{1} \mid G = 1,S_{2}= }(y)}{\pi_{1}},1\right\}\right].
\end{equation}

Symmetric arguments can be used to partially identify the outcome distribution for AO units in the control group. Table \ref{tab:distribution_AO} summarizes the resulting bounds for the outcome distribution of AO units across all periods and groups.
\begin{table}[H] \centering
\caption{Bounds for the outcome distribution for Always-Observed units.}
\begin{tabular}{|c|c|c|}
\hline
\textbf{Distribution}  & \textbf{Lower Bound} & \textbf{Upper Bound} \\ \hline 
   $F_{Y_1 \mid G=1, V = AO}(y)$ & $\max\left\{0,\frac{F_{Y_{1} \mid G = 1,S_{2} = 1}(y) - (1- \pi_{1})}{\pi_{1}}\right\}$  & $\min\left\{\frac{F_{Y_{1} \mid G = 1,S_{2}= }(y)}{\pi_{1}},1\right\}$ \\
    $F_{Y_2 \mid G=1, V = AO}(y) $ & $\max\left\{0,\frac{F_{Y_{2} \mid G = 1, S_2 = 1}(y) - (1- \pi_{1})}{\pi_{1}}\right\}$  & $\min\left\{\frac{F_{Y_{2} \mid G = 1, S_2 = 1}(y)}{\pi_{1}},1\right\}$ \\
    
    $F_{Y_1 \mid G=0, V = AO}(y)$ & $\max\left\{0,\frac{F_{Y_{1} \mid G = 0,S_{2} = 1}(y) - (1- \pi_{0})}{\pi_{0}}\right\}$  & $\min\left\{\frac{F_{Y_{1} \mid G = 0,S_{2}= }(y)}{\pi_{0}},1\right\}$ \\
    $F_{Y_2 \mid G=0, V = AO}(y) $ & $\max\left\{0,\frac{F_{Y_{2} \mid G = 0, S_2 = 1}(y) - (1- \pi_{0})}{\pi_{0}}\right\}$  & $\min\left\{\frac{F_{Y_{2} \mid G = 0, S_2 = 1}(y)}{\pi_{0}},1\right\}$ \\
    \hline
\end{tabular}
\label{tab:distribution_AO}
%\vspace{-0.2cm}
\begin{minipage}{0.81\linewidth  \setstretch{0.75}}
{\scriptsize  Notes: Sharp bounds on the outcome distribution for the Always-Observed units in all periods ans groups. $F_{Y_{t} \mid G=g, V= AO}(y)$ denotes the c.d.f. of the outcome at period $t$ in group $g$ for the AO principal stratum. $F_{Y_{t}\mid G= g, S_2 = 1}(y)$ denotes the c.d.f. of the outcome at period $t$ in group $g$ for units observed at $t = 2$. Results follow from Lemma \ref{prop:HMprop2}.}
 \end{minipage}
\end{table}

\begin{lemma} \label{prop:HMprop3}
    Let $Y$ be a continuous random variable and a mixture of two random variables, $Y_{0}$ and $Y_{1}$, with known mixing probability, $p$. The c.d.f. of the observed $Y$ is $F(y) = (1-p)F_{0}(y) + p F_{1}(y)$, where $F_{0}$ and $F_{1}$ are the c.d.f.s of $Y_{0}$ and $Y_{1}$ respectively. The $q$ quantile of the distribution $F_{1}$ is bounded as follows:
    \begin{equation*}
        Q_{1}^{}(q) \in \left[Q^{}(qp), Q^{}(qp+1-p)\right],
    \end{equation*}
    where $Q(\cdot)$ is the quantile function, $$
    Q(q) = \inf\{y: F(y) \geq q\}.$$
    These bounds are sharp.\\
    Proof: See \textcite[Proposition 3]{horowitz_identification_1995}.
\end{lemma}

As shown in equations \eqref{eq:proof_mixture} and \eqref{eq:proof_mixture_bi}, I can write the observed outcome in the treatment group as a mixture of the distributions for AO and OT units. Therefore, by Lemma \ref{prop:HMprop3}, the quantile functions for the AO units are bounded as follows:
\begin{equation}
    Q_{Y_{2} \mid G=1, V = AO}(q) \in \left[Q_{Y_{2}\mid G=1, S_2 = 1}(q \pi_{1})\quad ,\quad Q_{Y_{2}\mid G=1, S_2 = 1}(q\pi_{1} + 1-\pi_{1}) \right],
\end{equation}
\begin{equation}
     Q_{Y_{1} \mid G=1, V = AO}(q) \in \left[Q_{Y_{1}\mid G=1, S_2 = 1}(q \pi_{1})\quad ,\quad Q_{Y_{1}\mid G=1, S_2 = 1}(q\pi_{1} + 1-\pi_{1}) \right].
\end{equation}
Note that, because both $q$ and $\pi_1$ lie in the unit interval, it follows that $0 \leq q\pi_1 \leq 1$ and $0 \leq q\pi_1 + 1- \pi_1 \leq 1$.

Symmetric arguments can be used to partially identify the quantiles of the distribution for AO units in the control group. Table \ref{tab:distribution_quant_AO} summarizes the resulting bounds for the $q$ quantile of the outcome distribution for AO units across all periods and groups.
\begin{table}[H] \centering
\caption{Bounds for the quantile function for Always-Observed units.}
\begin{tabular}{|c|c|c|}
\hline
\textbf{Quantile function}  & \textbf{Lower Bound} & \textbf{Upper Bound} \\ \hline 
   $ Q_{Y_{1} \mid G=1, V = AO}(q)$ & $Q_{Y_{1}\mid G=1, S_2 = 1}(q \pi_{1})$  & $Q_{Y_{1}\mid G=1, S_2 = 1}(q\pi_{1} + 1-\pi_{1})$ \\
    $Q_{Y_{2} \mid G=1, V = AO}(q) $ & $Q_{Y_{2}\mid G=1,S_2 = 1}(q \pi_{1})$  & $Q_{Y_{2}\mid G=1,S_2=1}(q\pi_{1} + 1-\pi_{1}) $ \\
    
    $ Q_{Y_{1} \mid G=0, V = AO}(q)$ & $Q_{Y_{1}\mid G=0, S_2 = 1}(q \pi_{0})$  & $Q_{Y_{1}\mid G=0, S_2 = 1}(q\pi_{0} + 1-\pi_{0})$ \\
    $Q_{Y_{2} \mid G=0, V = AO}(q) $ & $Q_{Y_{2}\mid G=0, S_2 = 1}(q \pi_{0})$  & $Q_{Y_{2}\mid G=0, S_2 = 1}(q\pi_{0} + 1-\pi_{0}) $ \\ 
    \hline
\end{tabular}
\label{tab:distribution_quant_AO}
%\vspace{-0.2cm}
\begin{minipage}{0.81\linewidth  \setstretch{0.75}}
{\scriptsize  Notes: Sharp bounds on the quantiles of the outcome distribution for the Always-Observed units in all periods ans groups. $Q_{Y_{t}\mid G=g,S_2 = 1}(p) := \inf\{y : F_{Y_{t} \mid G=g, S_{2} = 1} \geq p\}$, and $F_{Y_{t} \mid G=g, S_{2} = 1}$ denotes the c.d.f. of the outcome at period $t$ in group $g$ for units observed in the post-treatment outcome. Results follow from Lemma \ref{prop:HMprop3}.}
 \end{minipage}
\end{table}

Now, recall the expression for the $\qttao(q)$, $$\qttao(q) = Q_{Y_{2}(1) \mid G = 1, V_{} = AO}^{}(q)-Q_{Y_{2}(0) \mid G = 1, V_{} = AO}^{}(q).$$ I can substitute the first term in the estimand by the realized quantile for AO units, $Q_{Y_{2} \mid G=1, V = AO}(q)$, which is partially identified as expressed in the second row of Table \ref{tab:distribution_quant_AO}. The second term, $Q_{Y_{2}(0) \mid G = 1, V_{} = AO}^{}(q)$, involves the quantile of the distribution of the missing potential outcome, and needs to be imputed.

\begin{lemma} \label{prop:AIprop3}
    Under Assumption \ref{as:outcome},
    \begin{equation*}
        F_{Y_{2}(0) \mid G = 1, V=AO}(y) = F_{Y_{1}\mid G=1,V=AO }\left(Q_{Y_{1}\mid G = 0,V=AO}^{}\left(F_{Y_{2}\mid G=0, V=AO}(y)\right)\right).
    \end{equation*}
    Proof: See \textcite[Theorem 3.1]{athey_identification_2006}.
\end{lemma}
Note that \textcite{athey_identification_2006} use different notation: the distribution $F_{Y_t \mid G = g}$ corresponds to $F_{Y, gt}$ in their notation, and they index time as $t \in \{0,1\}$ rather than $t \in \{1,2\}$. Furthermore, they write $Y(0)$ and $Y(1)$ as $Y^{N}$ and $Y^{I}$ respectively. For example, $F_{Y_1 \mid G = 1}$ is written as $F_{Y,10}$ in their notation. Finally, they write $Q(\cdot)$ as $F^{-1}(\cdot)$.

Notice that Assumption \ref{as:outcome} is the principal counterpart to assumptions 3.1-3.4 in \textcite{athey_identification_2006}, under which Theorem 3.1 is derived. However, as argued in \textcite[443]{athey_identification_2006}, by conditioning all the assumptions on a covariate $X$, Theorem 3.1 can be used to identify the distribution of $Y_{2}(0) \mid G=1, X$, where $X$ in this case is the latent principal stratum, $V$.

 Because $Y$ is continuous and has compact support, it follows that $F(Q(q)) = q$ and $Q(F(y)) = y$ for any $y$ in the support of $Y$. Then, from Lemma \ref{prop:AIprop3},
 \begin{equation}
     F_{Y_{2}(0) \mid G = 1, V=AO}(y) = q = F_{Y_{1}\mid G=1,V=AO }\left(Q_{Y_{1}\mid G = 0,V=AO}^{}\left(F_{Y_{2}\mid G=0, V=AO}(y)\right)\right) \end{equation}\begin{equation}
     Q_{Y_{1}\mid G=1,V=AO}(q) = Q_{Y_{1}\mid G=1,V=AO}\left(F_{Y_{1}\mid G=1,V=AO }\left(Q_{Y_{1}\mid G = 0,V=AO}^{}\left(F_{Y_{2}\mid G=0, V=AO}(y)\right)\right)\right) \end{equation}\begin{equation}
     Q_{Y_{1}\mid G=1,V=AO}(q) = Q_{Y_{1}\mid G = 0,V=AO}^{}\left(F_{Y_{2}\mid G=0, V=AO}(y)\right) \end{equation}\begin{equation}
     F_{Y_{1}\mid G=0, V=AO}\left(Q_{Y_{1}\mid G=1,V=AO}(q)\right) = F_{Y_{1}\mid G=0, V=AO}\left(Q_{Y_{1}\mid G = 0,V=AO}^{}\left(F_{Y_{2}\mid G=0, V=AO}(y)\right)\right)
 \end{equation}\begin{equation}
     F_{Y_{1}\mid G=0, V=AO}\left(Q_{Y_{1}\mid G=1,V=AO}(q)\right) =F_{Y_{2}\mid G=0, V=AO}(y)
 \end{equation}\begin{equation}
     Q_{Y_{2} \mid G=0, V=AO}\left(F_{Y_{1}\mid G=0, V=AO}\left(Q_{Y_{1}\mid G=1,V=AO}(q)\right) \right)=Q_{Y_2\mid G=0, V=AO}\left(F_{Y_{2}\mid G=0, V=AO}(y)\right)
 \end{equation}\begin{equation}
     Q_{Y_{2} \mid G=0, V=AO}\left(F_{Y_{1}\mid G=0, V=AO}\left(Q_{Y_{1}\mid G=1,V=AO}(q)\right) \right)=y = Q_{Y_{2}(0) \mid G = 1, V=AO}(q),
 \end{equation}
 where the last equality comes from
 \begin{equation}
     F_{Y_{2}(0) \mid G = 1, V=AO}(y) = q 
 \end{equation}
 \begin{equation}
     Q_{Y_{2}(0) \mid G = 1, V=AO}\left(F_{Y_{2}(0) \mid G = 1, V=AO}(y)\right) = Q_{Y_{2}(0) \mid G = 1, V=AO}(q)
 \end{equation}
 \begin{equation}
     y = Q_{Y_{2}(0) \mid G = 1, V=AO}(q).
 \end{equation}
 
Hence, I can replace the potential outcomes in the Causal Estimand \ref{es:qttao} by observed outcomes as follows:
\begin{equation}
    \qttao(q) = Q_{Y_{2} \mid G=1, V = AO}(q) -  Q_{Y_{2} \mid G=0, V=AO}\left(F_{Y_{1}\mid G=0, V=AO}\left(Q_{Y_{1}\mid G=1,V=AO}(q)\right) \right), \label{eq:ap_qttao_observed}
\end{equation}
which is equivalent to equation (18) in \textcite{athey_identification_2006} conditioning on $V = AO$. All these quantities can be partially identified as noted above in Tables \ref{tab:distribution_AO} and \ref{tab:distribution_quant_AO}. Therefore, the partial identification of $\qttao(q)$ follows. Specifically, the upper (lower) bound of the $\qttao(q)$ is given by the upper (lower) bound of $ Q_{Y_{2} \mid G=1, V = AO}(q)$ and the lower (upper) bound of $Q_{Y_{2} \mid G=0, V=AO}\left(F_{Y_{1}\mid G=0, V=AO}\left(Q_{Y_{1}\mid G=1,V=AO}(q)\right) \right)$. The bounds of the former can be found in Table \ref{tab:distribution_quant_AO}. I derive the bounds of the latter next.

First, I derive the lower bound of $Q_{Y_{2} \mid G=0, V=AO}\left(F_{Y_{1}\mid G=0, V=AO}\left(Q_{Y_{1}\mid G=1,V=AO}(q)\right) \right)$. From Lemma \ref{prop:HMprop3}, as shown in Table \ref{tab:distribution_quant_AO}, the lower bound of $Q_{Y_2\mid G=0, V=AO}(q)$ is given by $Q_{Y_2 \mid  G=0, S_2 =1 }(q \pi_0)$. So it follows that
\begin{equation}
\begin{aligned}
    Q_{Y_{2} \mid G=0, V=AO}\left(F_{Y_{1}\mid G=0, V=AO}\left(Q_{Y_{1}\mid G=1,V=AO}(q)\right) \right) \geq \\ Q_{Y_{2} \mid G=0,S_2 = 1}\left(F_{Y_{1}\mid G=0, V=AO}\left(Q_{Y_{1}\mid G=1,V=AO}(q)\right) \pi_0\right).
    \end{aligned}
\end{equation}

From Lemma \ref{prop:HMprop2}, as shown in Table \ref{tab:distribution_AO}, the lower bound of $F_{Y_{1}\mid G=0, V=AO}(y)$ is given by $\max\left\{0,\frac{F_{Y_{1}\mid G=0, S_2 = 1}(y) - (1- \pi_0)}{\pi_0}\right\}$. Then, it follows that:
\begin{equation}
\begin{aligned}
    Q_{Y_{2} \mid G=0, V=AO}\left(F_{Y_{1}\mid G=0, V=AO}\left(Q_{Y_{1}\mid G=1,V=AO}(q)\right) \right) \geq \\
    Q_{Y_{2} \mid G=0,S_2 = 1}\left(\max \left\{0, \frac{F_{Y_{1}\mid G=0, S_2 = 1}\left(Q_{Y_{1}\mid G=1,V=AO}(q)\right)-(1-\pi_0)}{\pi_{0}} \right\} \pi_0\right).
\end{aligned}
\end{equation}

From Lemma \ref{prop:HMprop3}, as shown in Table \ref{tab:distribution_quant_AO}, the lower bound of $Q_{Y_{1}\mid G=1,V=AO}(q)$ is given by $Q_{Y_{1}\mid G=1,S_2 = 1}(q\pi_1)$. So it follows that 
\begin{equation}
\begin{aligned}
    Q_{Y_{2} \mid G=0, V=AO}\left(F_{Y_{1}\mid G=0, V=AO}\left(Q_{Y_{1}\mid G=1,V=AO}(q)\right) \right) \geq \\
    Q_{Y_{2} \mid G=0,S_2 = 1}\left(\max \left\{0, \frac{F_{Y_{1}\mid G=0, S_2 = 1}\left(Q_{Y_{1}\mid G=1,S_{2}=1}(q\pi_1)\right)-(1-\pi_0)}{\pi_{0}} \right\} \pi_0\right).
\end{aligned}
\end{equation}

Because I am assuming that $Y$ has compact support and the support of $Y_{t} \mid G=1 $ is contained in the support of $Y_{t} \mid G=0$, then $F_{Y_1 \mid G = 0, S_2 = 1}\left(Q_{Y_1 \mid G=1, S_2 = 1}(q)\right)$ is identified for any $q$. Because both $q$ and $\pi_g$ lie in the unit interval, their product also lies in the unit interval. All in all, the lower bound for $Q_{Y_{2} \mid G=0, V=AO}\left(F_{Y_{1}\mid G=0, V=AO}\left(Q_{Y_{1}\mid G=1,V=AO}(q)\right) \right)$ is given by:
\begin{equation} \label{eq:ap_lb_rhs}
    \begin{cases}
        Q_{Y_{2} \mid G=0,S_2 = 1}\left(F_{Y_{1}\mid G=0, S_2 = 1}\left(Q_{Y_{1}\mid G=1,S_2=1}(q\pi_1)\right)-(1-\pi_0)\right) \quad \quad \text{if} \quad \quad F_{Y_1 \mid G=0, S_2 = 1}\left(Q_{Y_1\mid G=1, S_2 = 0} (q\pi_1)\right) \geq 1 - \pi_0  \\
        Q_{Y_{2} \mid G=0,S_2 = 1}\left(0\right) \hspace{7.4cm}\text{otherwise} 
    \end{cases}
\end{equation}

I can follow the same steps to derive the upper bound of $Q_{Y_{2} \mid G=0, V=AO}\left(F_{Y_{1}\mid G=0, V=AO}\left(Q_{Y_{1}\mid G=1,V=AO}(q)\right) \right)$:
\begin{equation}
    \begin{aligned}
        Q_{Y_{2} \mid G=0, V=AO}\left(F_{Y_{1}\mid G=0, V=AO}\left(Q_{Y_{1}\mid G=1,V=AO}(q)\right) \right) \leq \\
        Q_{Y_{2} \mid G=0, S_2 = 1}\left(F_{Y_{1}\mid G=0, V=AO}\left(Q_{Y_{1}\mid G=1,V=AO}(q)\right)\pi_0 + 1 - \pi_0 \right)
    \end{aligned}
\end{equation}
\begin{equation}
    \begin{aligned}
        Q_{Y_{2} \mid G=0, V=AO}\left(F_{Y_{1}\mid G=0, V=AO}\left(Q_{Y_{1}\mid G=1,V=AO}(q)\right) \right) \leq \\
        Q_{Y_{2} \mid G=0, S_2 = 1}\left(\min\left\{\frac{F_{Y_{1}\mid G=0, S_2 = 1}\left(Q_{Y_{1}\mid G=1,V=AO}(q)\right)}{\pi_0},1\right\}\pi_0 + 1 - \pi_0 \right)
    \end{aligned}
\end{equation}

\begin{equation}
    \begin{aligned}
        Q_{Y_{2} \mid G=0, V=AO}\left(F_{Y_{1}\mid G=0, V=AO}\left(Q_{Y_{1}\mid G=1,V=AO}(q)\right) \right) \leq \\
        Q_{Y_{2} \mid G=0, S_2 = 1}\left(\min\left\{\frac{F_{Y_{1}\mid G=0, S_2 = 1}\left(Q_{Y_{1}\mid G=1,S_2 = 1}(q\pi_1 + 1 - \pi_1)\right)}{\pi_0},1\right\}\pi_0 + 1 - \pi_0 \right).
    \end{aligned}
\end{equation}
So the upper bound of $Q_{Y_{2} \mid G=0, V=AO}\left(F_{Y_{1}\mid G=0, V=AO}\left(Q_{Y_{1}\mid G=1,V=AO}(q)\right) \right)$ is given by:
\begin{equation} \label{eq:ap_ub_rhs}
    \begin{cases}
        Q_{Y_{2} \mid G=0, S_2 = 1}\left(F_{Y_{1}\mid G=0, S_2 = 1}\left(Q_{Y_{1}\mid G=1,S_2 = 1}(q\pi_1 + 1 - \pi_1)\right) + 1 - \pi_0 \right)   \text{ if }   F_{Y_{1}\mid G=0, S_2 = 1}\left(Q_{Y_{1}\mid G=1,S_2 = 1}(q\pi_1 + 1 - \pi_1)\right) \leq \pi_0 \\
        Q_{Y_{2} \mid G=0, S_2 = 1}(1) \hspace{8cm}  \text{otherwise}
    \end{cases}
\end{equation}
To conclude, equation \eqref{eq:ap_qttao_observed} presents the $\qttao(q)$ as a function of the distributions of observed outcomes for the AO units. These distributions are partially identified. The upper bound of the $\qttao(q)$ is given by the upper bound of the first term (second row in Table \ref{tab:distribution_quant_AO}) and the lower bound of the second term (equation \eqref{eq:ap_lb_rhs}). The lower bound of the $\qttao(q)$ is given by the lower bound of the first term (second row in Table \ref{tab:distribution_quant_AO}) and the upper bound of the second term (equation \eqref{eq:ap_ub_rhs}). That is,
\begin{equation}
    \Lambda^{LB}(q) = Q_{Y_2 \mid G=1, S_{2} = 1}(q\pi_1) - Q_{Y_{2} \mid G=0, S_2 = 1}\left(F_{Y_{1}\mid G=0, S_2 = 1}\left(Q_{Y_{1}\mid G=1,S_2 = 1}(q\pi_1 + 1 - \pi_1)\right) + 1 - \pi_0 \right)
\end{equation}
if $$F_{Y_{1}\mid G=0, S_2 = 1}\left(Q_{Y_{1}\mid G=1,S_2 = 1}(q\pi_1 + 1 - \pi_1)\right) \leq \pi_0$$ and 
\begin{equation}
    \Lambda^{LB}(q) = Q_{Y_2 \mid G=1, S_{2} = 1}(q\pi_1) - Q_{Y_{2} \mid G=0, S_2 = 1}\left(1\right)
\end{equation}
otherwise, and 
\begin{equation}
    \Lambda^{UB}(q)= Q_{Y_2 \mid G=1, S_{2} = 1}(q\pi_1 + 1 - \pi_1) -  Q_{Y_{2} \mid G=0,S_2 = 1}\left(F_{Y_{1}\mid G=0, S_2 = 1}\left(Q_{Y_{1}\mid G=1,S_2=1}(q\pi_1)\right)-(1-\pi_0)\right)
\end{equation}
if $$F_{Y_1 \mid G=0, S_2 = 1}\left(Q_{Y_1\mid G=1, S_2 = 0} (q\pi_1)\right) \geq 1 - \pi_0 $$ and
\begin{equation}
     \Lambda^{UB}(q)= Q_{Y_2 \mid G=1, S_{2} = 1}(q\pi_1 + 1 - \pi_1) -  Q_{Y_{2} \mid G=0,S_2 = 1}\left(0\right)
\end{equation}
otherwise.
\end{proof}

%%%%%%%%%%%%%%%%%%%%%%%%%%%%%%%%%%%%%%%%%%%%%%%%%%%%%%%%%%%%%%%%%%%%
\subsection{Proof of Remark \ref{prop:remark5}} \label{proof:remark5}
The treatment effect on selection, $\E[S_{i2}(1) - S_{i2}(0) \mid G_{i} = g]$ will have the same sign for both groups $g \in \{0,1\}$.

\begin{proof}
    Assume without loss of generality that the treatment effect on selection for the treated units is positive:
    \begin{align}
        \E[S_{i2}(1) - S_{i2}(0) \mid G_{i} = 1] & > 0 \\
        \E[S_{i2} \mid G_{i} = 1] -\E[S_{i1} \mid G_{i} = 1]\frac{\E[S_{i2} \mid G_{i} =0]}{\E[S_{i1} \mid G_{i} = 0]} & > 0 \\
         \E[S_{i2} \mid G_{i} = 1] & > \E[S_{i1} \mid G_{i} = 1]\frac{\E[S_{i2} \mid G_{i} =0]}{\E[S_{i1} \mid G_{i} = 0]} \\
        \E[S_{i1} \mid G_{i} = 0] \frac{ \E[S_{i2} \mid G_{i} = 1]}{ \E[S_{i1} \mid G_{i} = 1]} & > \E[S_{i2} \mid G_{i} =0] \\ 
         \E[S_{i1} \mid G_{i} = 0] \frac{ \E[S_{i2} \mid G_{i} = 1]}{ \E[S_{i1} \mid G_{i} = 1]} - \E[S_{i2} \mid G_{i} =0]  & > 0 \\
         \E[S_{i2}(1) - S_{i2}(0) \mid G_{i} = 0] & >0 
    \end{align}
    Then, the treatment effect on selection for the control units is also positive.
\end{proof}
\\\\
$\E[S_{i1} \mid G_{i}= 1] = \E[S_{i1} \mid G_{i} = 0] \implies \E[S_{i2}(1) - S_{i2}(0) \mid G_{i} = 0] = \E[S_{i2}(1) - S_{i2}(0) \mid G_{i} = 1]$.

\begin{proof}
        \begin{align}
            \E[S_{i2}(1) - S_{i2}(0) \mid G_{i} = 1] & =  \E[S_{i2} \mid G_{i} = 1] -\E[S_{i1} \mid G_{i} = 1]\frac{\E[S_{i2} \mid G_{i} =0]}{\E[S_{i1} \mid G_{i} = 0]} \\
            & = \E[S_{i2} \mid G_{i} = 1]\frac{\E[S_{i1} \mid G_{i} = 1]}{\E[S_{i1} \mid G_{i} = 1]} -\E[S_{i1} \mid G_{i} = 1]\frac{\E[S_{i2} \mid G_{i} =0]}{\E[S_{i1} \mid G_{i} = 0]} \\
            & = \E[S_{i2} \mid G_{i} = 1]\frac{\E[S_{i1} \mid G_{i} = 0]}{\E[S_{i1} \mid G_{i} = 1]} -\E[S_{i1} \mid G_{i} = 0]\frac{\E[S_{i2} \mid G_{i} =0]}{\E[S_{i1} \mid G_{i} = 0]} \\
            & =  \E[S_{i1} \mid G_{i} = 0] \frac{ \E[S_{i2} \mid G_{i} = 1]}{ \E[S_{i1} \mid G_{i} = 1]} - \E[S_{i2} \mid G_{i} =0]  \\
            & =  \E[S_{i2}(1) - S_{i2}(0) \mid G_{i} = 0]
        \end{align}
\end{proof}

It can also be shown that there are only two cases when the treatment effect on selection is the same in both groups: when the baseline selection is the same across groups, as I just showed, and when the treatment effect is zero. 

%%%%%%%%%%%%%%%%%%%%%%%%%%%%%%%%%%%%%%%%%%%%%%%%%%%%%%%%%%%%%%%%%%%%
\subsection{Proof of Proposition \ref{prop:proportions}} \label{proof:proportions}
Below, I prove Proposition \ref{prop:proportions} under positive monotonicity. The prove under negative monotonicity follows by symmetry.

\begin{proof}
    \begin{equation}
        \pi_{0} = Pr(S_{i2}(1) = 1 \mid G_{i} = 0, S_{i2}(0) = 1) = 1 \label{eq:proof_pi0}
    \end{equation}
    Equation (\ref{eq:proof_pi0}) follows immediately from the positive monotonicity assumption.
    \begin{align}
        \pi_{1} & = Pr(S_{i2}(0) = 1 \mid G_{i} =1, S_{i2}(1) = 1) \\
        &= \frac{Pr(S_{i2}(1) = 1 \mid G_{i} = 1, S_{i2}(0) = 1) Pr(S_{i2}(0) = 1 \mid G_{i} = 1)}{Pr(S_{i2}(1) = 1 \mid G_{i} = 1)} \label{eq:proof_pi1bat} \\
        &= \frac{Pr(S_{i2}(0) = 1 \mid G_{i} =1)}{Pr(S_{i2}(1) = 1 \mid G_{i} = 1)} \label{eq:proof_pi1bi}\\
        &= \frac{\E[S_{i2}(0) \mid G_{i} = 1]}{\E[S_{i2}\mid G_{i} = 1]} \\
        &=\frac{\E[S_{i1} \mid G_{i} = 1]}{\E[S_{i2} \mid G_{i} = 1]}\frac{\E[S_{i2} \mid G_{i} = 0]}{\E[S_{i1} \mid G_{i} = 0]} \label{eq:proof_pi1hiru}
    \end{align}
    Where (\ref{eq:proof_pi1bat}) follows from Bayes' theorem, (\ref{eq:proof_pi1bi}) follows from the positive monotonicity assumption and (\ref{eq:proof_pi1hiru}) follows from Lemma \ref{prop:selection}.
\end{proof}

%%%%%%%%%%%%%%%%%%%%%%%%%%%%%%%%%%%%%%%%%%%%%%%%%%%%%%%%%%%%%%%%%%%%
\subsection{Proof of Lemma \ref{prop:strata_multiple_sources}} \label{proof:strata_multiple_sources}
\begin{proof}
    \begin{align}
            \mathcal{AO} & = \{i: S_{i2}(0) = 1, S_{i2}(1) = 1\} = \{i: \prod_{j}s_{i2}^{j}(0) = 1, \prod_{j}s_{i2}^{j}(1) = 1 \}           = \{i: s_{i2}^{j}(0) = s_{i2}^{j}(1) = 1 \quad \forall j\} \notag \\
            & = \{i: V_{i}^{j} = AO \quad \forall j\} = \{i: i\in \mathcal{AO}_{j} \quad \forall j\} = \bigcap\limits_{j} \mathcal{AO}_{j} \\
             \mathcal{NO} &= \{i: S_{i2}(0) = 0, S_{i2}(1) = 0\} = \{i:\prod_{j}s_{i2}^{j}(0)=0,\prod_{j}s_{i2}^{j}(1)=0\} \notag \\ 
        &= \{i: s_{i2}^{k}(0) = 0 \text{ for one }k\in J,s_{i2}^{\kappa}(1) \text{ for one }\kappa\in J\} \label{eq:proof_nobat} \\
        &= \{i:s_{i2}^{k}(0) = 0 ,s_{i2}^{k}(1) = 0 \text{ for one }k\in J\} = \{i: i \in NO_{k}\text{ for one }k\in J\} = \bigcup\limits_{j} \mathcal{NO}_{j} \label{eq:proof_nobi} \\
        \mathcal{OT} & = \{i:S_{i2}(0) = 0, S_{i2}(1) = 1\} = \{i:\prod_{j}s_{i2}^{j}(0)=0,\prod_{j}s_{i2}^{j}(1)=1\} \notag \\
        & \{i: s_{i2}^{k}(0) = 0 \text{ for one }k \in J, s_{i2}^{j}(0) = 1 \quad \forall j\neq k, \quad s_{i2}^{j}(1) = 1 \quad \forall j\} \label{eq:proof_otbat}\\
        &=\{i: i\in \mathcal{OT}_{k}\text{ for one }k\in J, i \in \mathcal{AO}_{j} \quad \forall j \neq k\} = \bigcup\limits_{j} \mathcal{OT}_{j} \label{eq:proof_otbi} \\
    \mathcal{OC} & = \{i:S_{i2}(0) = 1, S_{i2}(1) = 0\} = \{i:\prod_{j}s_{i2}^{j}(0)=1,\prod_{j}s_{i2}^{j}(1)=0\} \notag  \\ 
        & \{i: s_{i2}^{j}(0) = 1 \quad \forall j, \quad s_{i2}^{k}(1) = 0\text{ for one }k \in J, s_{i2}^{j}(1) = 1 \quad \forall j\neq k \quad \forall j\} \label{eq:proof_ocbat}\\
        &=\{i: i\in \mathcal{OC}_{k}\text{ for one }k\in J, i \in \mathcal{AO}_{j} \quad \forall j \neq k\} = \bigcup\limits_{j} \mathcal{OC}_{j} \label{eq:proof_ocbi}
        \end{align}
    where (\ref{eq:proof_nobat}), (\ref{eq:proof_otbat}) and (\ref{eq:proof_ocbat}) follow from the sources being mutually exclusive and (\ref{eq:proof_nobi}) from the no intersection of OC and OT strata. As shown in the text, when a unit belongs to the NO stratum according to one source, then it belongs to the AO stratum according to all the remaining sources. It suffices to belong to the NO stratum according to one source to belong to the NO stratum overall. Therefore, the NO stratum is given by the union of NO strata according to all the sources. The same reasoning can be used for OC and OT strata.

    These sets constitute a partition of the sample. Consider any source $j$. All the units must belong to one of the four strata defined by the source $j$, i.e., $i \in \{\mathcal{AO}_{j},\mathcal{NO}_{j},\mathcal{OT}_{j},\mathcal{OC}_{j}\}$. If a unit belongs to the NO, OT or OC stratum according to this source, then mutual exclusivity and no intersection of OC and OT strata imply that this unit belongs to the AO stratum for all the other sources. This implies that no unit belongs to more than one set in $\{\mathcal{AO}_{},\mathcal{NO}_{},\mathcal{OT}_{},\mathcal{OC}_{}\}$ simultaneously. Furthermore, the union of these sets equal to the whole sample by construction, so they constitute a partition of the sample.
\end{proof}

%%%%%%%%%%%%%%%%%%%%%%%%%%%%%%%%%%%%%%%%%%%%%%%%%%%%%%%%%%%%%%%%%%%%
\subsection{Proof of Proposition \ref{prop:proportions_multiple_sources}} \label{proof:proportions_multiple_sources}
For illustration purposes, I derive first the expression for $\pi_{1}$ with $J = 2$. For $j = 1$ positive monotonicity holds while for $j = 2 $ negative monotonicity holds. Following the same logic, I prove then the general case for both $\pi_{1}$ and $\pi_{0}$.

\begin{proof}
    Consider the case with $J=2$, where positive monotonicity holds for $j=1$ ($s_{i2}^{1}(1) \geq s_{i2}^{1}(0)$) and negative monotonicity holds for $j=2$ ($s_{i2}^{2}(1) \leq s_{i2}^{2}(0)$).
    \begin{align}
        \pi_{1} &= Pr(S_{i2}(0) = 1 \mid G_{i} = 1, S_{i2}(1) = 1) \\
        & = \frac{Pr(S_{i2}(0) = 1, S_{i2}(1) = 1\mid G_{i} = 1)}{Pr(S_{i2}(1) = 1 \mid G_{i} = 1)} \\
        &= \frac{1
        -Pr(S_{i2}(0) = 0, S_{i2}(1) = 1\mid G_{i} = 1)
        - Pr(S_{i2}(0) = 1 , S_{i2}(1) = 0 \mid G_{i} = 1)
        -Pr(S_{i2}(0) =0 , S_{i2}(1) =0 \mid G_{i} = 1)
        }{\E[S_{i2} \mid G_{i} = 1]} \\
        &= \frac{1
         -Pr(s_{i2}^{1}(0) = 0, s_{i2}^{1}(1) = 1\mid G_{i} = 1)
        - Pr(s_{i2}^{2}(0) = 1 , s_{i2}^{2}(1) = 0 \mid G_{i} = 1)
        -Pr(S_{i2}(0) =0 , S_{i2}(1) =0 \mid G_{i} = 1)
        }{\E[S_{i2} \mid G_{i} = 1]} \label{eq:proof_mpbat} \\
        &= \frac{1
         -Pr(s_{i2}^{1}(0) = 0, s_{i2}^{1}(1) = 1\mid G_{i} = 1)
        - Pr(s_{i2}^{2}(0) = 1 , s_{i2}^{2}(1) = 0 \mid G_{i} = 1)
        }{\E[S_{i2} \mid G_{i} = 1]} \notag  \\
        &- \frac{Pr(s_{i2}^{1}(0) = 0, s_{i2}^{1}(1) = 0 \mid G_{i} = 1)+Pr(s_{i2}^{2}(0) = 0, s_{i2}^{2}(1) = 0 \mid G_{i} = 1)}{\E[S_{i2} \mid G_{i} = 1]} \\
        & + \underbrace{\frac{ Pr(s_{i2}^{1}(0) = 0, s_{i2}^{1}(1) = 0,s_{i2}^{2}(0) = 0, s_{i2}^{2}(1) = 0 \mid G_{i} = 1)}{\E[S_{i2} \mid G_{i} = 1]}}_{ 0 \text{ (by mutual exclusivity)}} \notag \\
        & = \frac{1 - Pr(s_{i2}^{1}(0) = 0 \mid G_{i}=1) - Pr(s_{i2}^{2}(1) = 0 \mid G_{i}=1) }{\E[S_{i2} \mid G_{i} = 1]} \\
        &= \frac{1  - \left(1 - \E[s_{i2}^{1}(0) \mid G_{i} = 1]\right) - \left(1 - \E[s_{i2}^{2}\mid G_{i} = 1]\right)}{\E[S_{i2} \mid G_{i} = 1]} \\ 
        &= \frac{1}{\E[S_{i2} \mid G_{i} = 1]}\left(1 - \left(1 - \E[s_{i1}^{1} \mid G_{i} = 1]\frac{\E[s_{i2}^{1} \mid G_{i} = 0]}{\E[s_{i1}^{1} \mid G_{i} = 0]}\right)- \bigg(1 - \E[s_{i2}^{2}\mid G_{i} = 1]\bigg) \right)       \label{eq:proof_mpbi}
    \end{align}
    where (\ref{eq:proof_mpbat}) follows from the monotonicity assumptions and (\ref{eq:proof_mpbi}) from the selection model in Assumption \ref{as:sele} applied to each source of attrition.
\end{proof}

\begin{proof}
    Consider the generic case with $J$ different sources. Let $J^{+}$ denote the set of sources for which positive monotonicity holds and $J^{-}$ the set of sources for which negative monotonicity holds.

    Mutual exclusivity of the attrition sources implies:
    \begin{equation}
        Pr(V_{i} = NO) = \Pr\left(i \in \bigcup\limits_{j}\mathcal{NO}_{j}\right) = \sum_{j}Pr\left(s_{i2}^{j}(0) = 0, s_{i2}^{j}(1) = 0\right) \label{eq:proof_mebat}
    \end{equation}
    as
    \begin{equation*}
        \mathcal{NO}_{j} \bigcap \mathcal{NO}_{k} = \varnothing \quad \forall j,k \in J.
    \end{equation*}
    Similarly,
    \begin{align}
         Pr(V_{i} = OT) &= \Pr\left(i \in \bigcup\limits_{j}\mathcal{OT}_{j}\right) = \sum_{j}Pr\left(s_{i2}^{j}(0) = 0, s_{i2}^{j}(1) = 1\right) \label{eq:proof_mebi} \\
          Pr(V_{i} = OC) &= \Pr\left(i \in \bigcup\limits_{j}\mathcal{OC}_{j}\right) = \sum_{j}Pr\left(s_{i2}^{j}(0) = 1, s_{i2}^{j}(1) = 0\right) \label{eq:proof_mehiru}
    \end{align}

    Then,
    \begin{align}
        \pi_{1} & = Pr(S_{i2}(0) = 1 \mid G_{i} = 1, S_{i2}(1) = 1) \\
        & = \frac{Pr(S_{i2}(0) = 1, S_{i2}(1) = 1\mid G_{i} = 1)}{Pr(S_{i2}(1) = 1 \mid G_{i} = 1)} \\
        &= \frac{1
        -Pr(V_{i} = OT\mid G_{i} = 1)
        - Pr(V_{i} = OC \mid G_{i} = 1)
        -Pr(V_{i} = NO \mid G_{i} = 1)
        }{\E[S_{i2} \mid G_{i} = 1]} \\
        &= \frac{1
        -\sum_{j}Pr\left(s_{i2}^{j}(0) = 0, s_{i2}^{j}(1) = 1\mid G_{i} = 1\right)
        -\sum_{j}Pr\left(s_{i2}^{j}(0) = 1, s_{i2}^{j}(1) = 0\mid G_{i} = 1\right)
        }{\E[S_{i2} \mid G_{i} = 1]} \notag \\
        & - \frac{\sum_{j}Pr\left(s_{i2}^{j}(0) = 0, s_{i2}^{j}(1) = 0 \mid G_{i}=1\right)
        }{\E[S_{i2} \mid G_{i} = 1]} \label{eq:proof_msbat} \\
        &= \frac{1
        -\sum_{j \in J^{+}}Pr\left(s_{i2}^{j}(0) = 0, s_{i2}^{j}(1) = 1\mid G_{i} = 1\right)
        -\sum_{j \in J^{-}}Pr\left(s_{i2}^{j}(0) = 1, s_{i2}^{j}(1) = 0\mid G_{i} = 1\right)
        }{\E[S_{i2} \mid G_{i} = 1]} \notag \\
        & - \frac{\sum_{j}Pr\left(s_{i2}^{j}(0) = 0, s_{i2}^{j}(1) = 0 \mid G_{i}=1\right)
        }{\E[S_{i2} \mid G_{i} = 1]} \label{eq:proof_msbi} \\
         &= \frac{1
        -\sum_{j \in J^{+}}Pr\left(s_{i2}^{j}(0) = 0\mid G_{i} = 1\right)
        -\sum_{j \in J^{-}}Pr\left(s_{i2}^{j}(1) = 0\mid G_{i} = 1\right)
        }{\E[S_{i2} \mid G_{i} = 1]}  \\
        &= \frac{1
        -\sum_{j \in J^{+}}\left(1 - \E[s_{i2}^{j}(0)\mid G_{i} = 1]\right)
        -\sum_{j \in J^{-}}\left(1 - \E[s_{i2}^{j}\mid G_{i} = 1]\right)
        }{\E[S_{i2} \mid G_{i} = 1]}  \\
        &= \frac{1}{\E[S_{i2} \mid G_{i} = 1]}\left(1 - \sum_{j \in J^{+}}\left(1 - \E[s_{i1}^{j}\mid G_{i} = 1]\frac{\E[s_{i2}^{j} \mid G_{i} = 0]}{\E[s_{i1}^{j}\mid G_{i} = 0]}\right)
        -\sum_{j \in J^{-}}\left(1 - \E[s_{i2}^{j}\mid G_{i} = 1]\right)\right) ,\label{eq:proof_mshiru}
    \end{align}
    where equation (\ref{eq:proof_msbat}) follows from equations (\ref{eq:proof_mebat}-\ref{eq:proof_mehiru}), equation (\ref{eq:proof_msbi}) follows from the monotonicity assumptions and equation (\ref{eq:proof_mshiru}) follows from the selection model in Assumption \ref{as:sele} applied to each source of attrition. Similarly, 
    \begin{align}
         \pi_{0} & = Pr(S_{i2}(1) = 1 \mid G_{i} = 0, S_{i2}(0) = 1) \\
        & = \frac{Pr(S_{i2}(0) = 1, S_{i2}(1) = 1\mid G_{i} = 0)}{Pr(S_{i2}(0) = 1 \mid G_{i} = 0)} \\
        &= \frac{1
        -Pr(V_{i} = OT\mid G_{i} = 0)
        - Pr(V_{i} = OC \mid G_{i} = 0)
        -Pr(V_{i} = NO \mid G_{i} = 0)
        }{\E[S_{i2} \mid G_{i} = 0]} \\
        &= \frac{1
        -\sum_{j}Pr\left(s_{i2}^{j}(0) = 0, s_{i2}^{j}(1) = 1\mid G_{i} = 0\right)
        -\sum_{j}Pr\left(s_{i2}^{j}(0) = 1, s_{i2}^{j}(1) = 0\mid G_{i} = 0\right)
        }{\E[S_{i2} \mid G_{i} = 0]} \notag \\
        & - \frac{\sum_{j}Pr\left(s_{i2}^{j}(0) = 0, s_{i2}^{j}(1) = 0 \mid G_{i}=0\right)
        }{\E[S_{i2} \mid G_{i} = 0]}  \\
        &= \frac{1
        -\sum_{j \in J^{+}}Pr\left(s_{i2}^{j}(0) = 0, s_{i2}^{j}(1) = 1\mid G_{i} = 0\right)
        -\sum_{j \in J^{-}}Pr\left(s_{i2}^{j}(0) = 1, s_{i2}^{j}(1) = 0\mid G_{i} = 0\right)
        }{\E[S_{i2} \mid G_{i} = 0]} \notag \\
        & - \frac{\sum_{j}Pr\left(s_{i2}^{j}(0) = 0, s_{i2}^{j}(1) = 0 \mid G_{i}=0\right)
        }{\E[S_{i2} \mid G_{i} = 0]}  \\
         &= \frac{1
        -\sum_{j \in J^{+}}Pr\left(s_{i2}^{j}(0) = 0\mid G_{i} = 0\right)
        -\sum_{j \in J^{-}}Pr\left(s_{i2}^{j}(1) = 0\mid G_{i} = 0\right)
        }{\E[S_{i2} \mid G_{i} = 0]}  \\
        &= \frac{1
        -\sum_{j \in J^{+}}\left(1 - \E[s_{i2}^{j}\mid G_{i} = 0]\right)
        -\sum_{j \in J^{-}}\left(1 - \E[s_{i2}^{j}(1)\mid G_{i} = 0]\right)
        }{\E[S_{i2} \mid G_{i} = 0]}  \\
        &= \frac{1}{\E[S_{i2} \mid G_{i} = 0]}\left(1-
        \sum_{j \in J^{+}}\left(1 - \E[s_{i2}^{j}\mid G_{i} = 0]\right)
        -\sum_{j \in J^{-}}\left(1 - \E[s_{i2}^{j}\mid G_{i} = 0]\frac{\E[s_{i2}^{j}\mid G_{i} = 1]}{\E[s_{i1}^{j}\mid G_{i} = 1]}\right)\right) .
    \end{align}
\end{proof}

%%%%%%%%%%%%%%%%%%%%%%%%%%%%%%%%%%%%%%%%%%%%%%%%%%%%%%%%%%%%%%%%%%%%
\subsection{Proof of Proposition \ref{prop:proportions} as a specific case of Proposition \ref{prop:proportions_multiple_sources}} \label{proof:equivalence}
\begin{proof}
Consider the case where $J = 1$. Without loss of generality, assume the unique source of attrition observable in the data, $S$, is such that $S_{i2}(1) \geq S_{i2}(0) \quad \forall i$, i.e., $J^{-} = \varnothing$. From Proposition \ref{prop:proportions_multiple_sources}:
\begin{align}
     \pi_{0} & = \frac{1}{\E[S_{i2} \mid G_{i} = 0]}\left(1-
        \left(1 - \E[S_{i2}^{}\mid G_{i} = 0]\right)\right) = \frac{\E[S_{i2} \mid G_{i} = 0]}{\E[S_{i2} \mid G_{i} = 0]} = 1 \\
        \pi_{1} & = \frac{1}{\E[S_{i2} \mid G_{i} = 1]}\left(1 - \left(1 - \E[S_{i1}^{}\mid G_{i} = 1]\frac{\E[S_{i2}^{} \mid G_{i} = 0]}{\E[S_{i1}^{}\mid G_{i} = 0]}\right)\right) = \frac{\E[S_{i1}^{}\mid G_{i} = 1]}{\E[S_{i2} \mid G_{i} = 0]}\frac{\E[S_{i2}^{} \mid G_{i} = 0]}{\E[S_{i1}^{}\mid G_{i} = 0]}
\end{align}
\end{proof}
%%%%%%%%%%%%%%%%%%%%%%%%%%%%%%%%%%%%%%%%%%%%%%%%%%%%%%%%%%%%%%%%%%%%
\subsection{Proof of Proposition \ref{prop:normal_qttao}} \label{proof:normal_qttao}
This proof uses the following results from \textcite{athey_identification_2006}:
\begin{lemma} The CiC quantile estimator has an asymptotically linear representation. \\\label{prop:AItheo53}
    Let $\tau(q) = F_{Y_{2}(1) \mid G = 1}^{-1}(q) - F_{Y_{2}(0) \mid G=1}^{-1}(q)$. Let $\hat \tau(q)$ be CiC quantile estimator, $\hat \tau(q) = \widehat F_{Y_{2}(1) \mid G = 1}^{-1}(q) - \widehat{F}_{Y_{2} \mid G=0}^{-1}\left(\widehat F_{Y_{1} \mid G  = 0}\left(\widehat F_{Y_{1} \mid G=1}^{-1}(q)\right)\right)$. Then,
    \begin{equation*}
        \hat \tau(q) = \sum_{g,t}\hat \tau_{g,t}(q) + o_{p}(n^{-1/2})
    \end{equation*}
    Proof: See \textcite[Theorem 5.3]{athey_identification_2006}.
\end{lemma}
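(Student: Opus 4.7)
The plan is to show that $\hat\tau(q)$ is an asymptotically linear functional of four independent empirical CDFs, one per group-time cell, and then read off the decomposition $\sum_{g,t}\hat\tau_{g,t}(q)$ from the influence function. Write $\widehat{F}_{g,t}$ for the empirical CDF of $Y$ conditional on $G=g$ at period $t$, so that
\[
\hat\tau(q) = \psi\bigl(\widehat{F}_{1,2},\widehat{F}_{0,2},\widehat{F}_{0,1},\widehat{F}_{1,1}\bigr),\qquad
\psi(F_{1,2},F_{0,2},F_{0,1},F_{1,1}) = F_{1,2}^{-1}(q) - F_{0,2}^{-1}\!\bigl(F_{0,1}(F_{1,1}^{-1}(q))\bigr).
\]
The first term depends only on the $(G{=}1,t{=}2)$ cell; the second term is a compound functional of the three remaining cells. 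Because the sample is i.i.d.\ by Assumption \ref{as:rand_samp}, each $\widehat{F}_{g,t}$ is Donsker at the population $F_{g,t}$, and (in the repeated cross-sections case) independence across $(g,t)$ makes the joint process a vector of four independent Brownian bridges; the panel case only changes the covariance and is handled by the same delta-method machinery.

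Next I would establish Hadamard differentiability of $\psi$ at $(F_{1,2},F_{0,2},F_{0,1},F_{1,1})$. The two building blocks are standard: (i) the quantile map $F\mapsto F^{-1}$ is Hadamard-differentiable at a strictly increasing, continuously differentiable $F$, with derivative $h\mapsto -h/f\circ F^{-1}$; (ii) the composition $(F,G)\mapsto F\circ G^{-1}$ is Hadamard-differentiable under the same smoothness. Chaining these through the nested structure of $\psi$ gives four partial derivatives, each linear in the increment of a single $\widehat{F}_{g,t}-F_{g,t}$. The functional delta method then yields
\[
\hat\tau(q)-\tau(q)=\sum_{g,t}\hat\tau_{g,t}(q)+o_p(n^{-1/2}),
\]
where, writing $y_{1,1}\equiv F_{1,1}^{-1}(q)$ and $u\equiv F_{0,1}(y_{1,1})$, the four summands take the form $\hat\tau_{1,2}(q)=[q-\widehat{F}_{1,2}(F_{1,2}^{-1}(q))]/f_{1,2}(F_{1,2}^{-1}(q))$, $\hat\tau_{0,2}(q)=-[u-\widehat{F}_{0,2}(F_{0,2}^{-1}(u))]/f_{0,2}(F_{0,2}^{-1}(u))$, and analogously built terms for $(0,1)$ and $(1,1)$ involving $\widehat{F}_{0,1}(y_{1,1})$ and $\widehat{F}_{1,1}(y_{1,1})$ rescaled by the appropriate densities. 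Each summand is a mean-zero functional of observations from a single group-time cell, which is exactly the claimed decomposition.

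The last step is to convert the delta-method expansion into the specific Bahadur-Kiefer representation stated in the lemma. For each empirical quantile appearing inside $\psi$, I would invoke the classical almost-sure Bahadur expansion $\widehat{F}^{-1}(q)-F^{-1}(q) = -[\widehat{F}(F^{-1}(q))-q]/f(F^{-1}(q)) + O_p((\log n/n)^{3/4})$, valid under positivity and smoothness of $f$ near $F^{-1}(q)$, and similarly for the composed inverse. Collecting the leading terms gives $\sum_{g,t}\hat\tau_{g,t}(q)$ and aggregates the remainder into a single $o_p(n^{-1/2})$ error.

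The main obstacle is controlling the compound remainder in the inner object $\widehat{F}_{0,2}^{-1}\!\bigl(\widehat{F}_{0,1}(\widehat{F}_{1,1}^{-1}(q))\bigr)$, whose linearization chains three separate Bahadur-type approximations. Naively, the cross-product of three $O_p(n^{-1/2})$ deviations is $O_p(n^{-3/2})$, which is more than enough, but the difficulty is that the outer quantile must be differentiated at an \emph{estimated} argument $\widehat{F}_{0,1}(\widehat{F}_{1,1}^{-1}(q))$; one therefore needs uniform smoothness of $f_{0,2}$ on a shrinking neighborhood of $u$ and an equicontinuity argument (e.g.\ stochastic equicontinuity of the empirical process) to swap the argument with its population limit. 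These regularity conditions — continuously differentiable, strictly positive densities on the relevant supports — are the standard assumptions behind Theorem 5.3 of \textcite{athey_identification_2006}, and are consistent with Assumption \ref{as:outcome}.
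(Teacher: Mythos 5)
Your proposal is correct, but it is worth being clear about what the paper actually does here: the paper offers no proof of its own, delegating entirely to \textcite[Theorem 5.3]{athey_identification_2006}, so the right comparison is with the proof in that reference. Athey and Imbens establish the representation by direct, hands-on expansions: a chain of lemmas giving Bahadur-type linearizations of each empirical quantile and of the compound object $\widehat{F}_{0,2}^{-1}\bigl(\widehat{F}_{0,1}(\widehat{F}_{1,1}^{-1}(q))\bigr)$, with explicit control of the remainders under their regularity conditions (continuously differentiable, strictly positive densities on compact supports, with the relevant arguments interior to the supports). You instead package the same influence-function decomposition through the functional delta method: Donskerness of the four cell-wise empirical CDFs plus Hadamard differentiability of the quantile and composition maps, chained through $\psi$. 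This is a legitimately different formal route that buys something — weak convergence of the whole process and, with little extra work, uniformity in $q$ and an immediate extension to the dependent (panel) sampling scheme, which you correctly note only alters the covariance — at the cost of importing the differentiability lemmas rather than proving the expansion from scratch. Your cell-wise derivative terms have the correct signs, and you correctly identify the one genuinely delicate point, namely that the outer quantile map is differentiated at the estimated argument $\widehat{F}_{0,1}(\widehat{F}_{1,1}^{-1}(q))$, which requires positivity and smoothness of $f_{0,2}$ in a neighborhood of $F_{0,2}^{-1}(u)$ together with a stochastic-equicontinuity step; this is exactly where the support-containment condition in Assumption \ref{as:outcome} and the Athey--Imbens regularity conditions enter. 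Two cosmetic points: the displayed statement in the lemma omits the centering $\tau(q)$ on the left-hand side (your version with $\hat\tau(q)-\tau(q)$ is the intended one), and your Bahadur--Kiefer remainder rate is quoted slightly loosely, though any $o_{p}(n^{-1/2})$ remainder suffices, so nothing in the argument is affected.
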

\begin{corollary}The CiC quantile estimator is asymptotically normal.  \label{prop:cicasdist}
        \begin{equation*}
        \sqrt{n}(\hat \tau(q) - \tau(q)) \xrightarrow{d} \mathcal{N}(0, V_{q}^{12}+ V_{q}^{02}+ V_{q}^{01}+ V_{q}^{11})
    \end{equation*}
    Proof: See \textcite[Theorem 5.3]{athey_identification_2006}.
\end{corollary}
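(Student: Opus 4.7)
The plan is to derive the corollary directly from the asymptotically linear representation supplied by Lemma \ref{prop:AItheo53}. Specifically, that lemma writes
\[
\hat\tau(q) - \tau(q) = \sum_{g,t} \hat\tau_{g,t}(q) + o_p(n^{-1/2}),
\]
where each summand $\hat\tau_{g,t}(q)$ is the empirical-process contribution arising from estimation of the relevant objects in the $(g,t)$ subsample (roughly, the appropriately rescaled difference between the empirical and population distribution functions evaluated at the relevant quantile point). The first step is therefore just to invoke Lemma \ref{prop:AItheo53} and turn the problem into a CLT for the sum of four cell-specific influence functions.

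Next, I would argue that the four terms $\hat\tau_{g,t}(q)$ for $(g,t)\in\{0,1\}\times\{1,2\}$ are asymptotically independent. Under Assumption \ref{as:rand_samp}, the observations are i.i.d.\ across units, and conditional on the group indicator $G_i$ the two time periods contribute through the empirical c.d.f.s $\widehat F_{Y_t\mid G=g}$. The Hadamard differentiability arguments in \textcite{athey_identification_2006} linearize each empirical quantile / composition around its population counterpart, yielding influence functions indexed by the cell $(g,t)$ that depend only on observations falling in that cell. Since an observation contributes to exactly one $(G_i,t)$ combination when one splits by group and conditions on period, the cross-cell covariances vanish in the limit.

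Given the linearization and the cell-independence, each $\sqrt{n}\,\hat\tau_{g,t}(q)$ satisfies a univariate CLT with asymptotic variance $V_q^{gt}$ (this is precisely the per-cell variance computed in the proof of Theorem 5.3 of \textcite{athey_identification_2006} via the delta method applied to the quantile functional). The $o_p(n^{-1/2})$ remainder in Lemma \ref{prop:AItheo53} is negligible after multiplication by $\sqrt{n}$, so Slutsky's theorem combined with the continuous mapping theorem delivers
\[
\sqrt{n}(\hat\tau(q) - \tau(q)) \xrightarrow{d} \mathcal{N}\!\left(0,\; V_q^{12}+V_q^{02}+V_q^{01}+V_q^{11}\right),
\]
which is the claimed statement.

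The main obstacle, in my view, is verifying that the cross-cell covariances are indeed zero, i.e.\ that the four influence functions are uncorrelated in the limit. This requires checking that the linearization in Lemma \ref{prop:AItheo53} represents $\hat\tau_{g,t}(q)$ as an average of mean-zero terms supported on disjoint subsamples (or at least with vanishing joint expectations). Once this orthogonality is established, aggregating the four cell variances into the stated sum is mechanical; without it, one would need to carry cross-terms. Everything else---consistency of the plug-in quantile, Hadamard differentiability of the composition $F_{Y_2\mid G=0}\circ F_{Y_1\mid G=0}^{-1}\circ F_{Y_1\mid G=1}$, and uniform convergence of the empirical processes---is standard and imported wholesale from \textcite{athey_identification_2006}.
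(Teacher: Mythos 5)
Your proposal is correct in substance, but note first what the paper itself does: its ``proof'' of this corollary is nothing more than the citation to Theorem 5.3 of \textcite{athey_identification_2006}. What you have written is essentially a reconstruction of that cited theorem's proof --- the asymptotically linear representation (your invocation of Lemma \ref{prop:AItheo53}), per-cell CLTs yielding the variances $V_q^{gt}$, orthogonality across the four $(g,t)$ cells, and Slutsky to dispose of the $o_p(n^{-1/2})$ remainder. So relative to the paper you are doing strictly more work, and the work matches the cited source step for step.

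One caveat deserves emphasis, because you correctly identify the cross-cell orthogonality as the main obstacle and then resolve it with an argument that is valid only in part of this paper's setting. Your justification --- ``an observation contributes to exactly one $(G_i,t)$ combination'' --- is precisely the repeated cross-section sampling scheme under which Athey and Imbens prove Theorem 5.3: there the four subsamples consist of distinct, mutually independent units, so the four influence terms are independent by construction and the limiting variance is the clean sum $V_q^{12}+V_q^{02}+V_q^{01}+V_q^{11}$. In the panel setting this paper actually works with (Assumption \ref{as:rand_samp} makes $(Y_{i1},Y_{i2},S_{i1},S_{i2},G_i)$ i.i.d.\ across units, and the estimators condition on $S_{i2}=1$, so the \emph{same} units enter both $\widehat F_{Y_1\mid G=g,\,S_2=1}$ and $\widehat F_{Y_2\mid G=g,\,S_2=1}$), each unit contributes to two cells, the within-group cross-period influence terms are generally correlated, and the limiting variance acquires covariance terms between the $t=1$ and $t=2$ contributions within each group; the correct route is to stack the two period contributions into a single per-unit influence function before applying the CLT, replacing the sum of four variances by the variance of the sum. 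This does not make the corollary as stated wrong --- it is imported verbatim from the repeated cross-section theorem, and the paper inherits exactly the same gap by citing it without modification --- but the aggregation you call ``mechanical'' is mechanical only for repeated cross sections, and your asserted vanishing of cross-cell covariances fails as written for the panel application in Section \ref{sec:bost}.
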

In general, the asymptotic variance of the CiC quantile estimator is difficult to interpret. However, it is important to note that it is a function of the quantile $q$, and that the variance is also linear in $g$ and $t$. 
\begin{lemma} The estimated proportions are asymptotically normal. \label{prop:AItheo54}
    \begin{align*}
        \sqrt{n}(\hat \pi_{1} - \pi_{1}) \xrightarrow{d} \mathcal{N}(0, V_{\pi_{1}}) \\
        \sqrt{n}(\hat \pi_{0} - \pi_{0}) \xrightarrow{d} \mathcal{N}(0, V_{\pi_{0}}) 
    \end{align*}
    Proof: See \textcite[Theorem 5.4]{athey_identification_2006}.
\end{lemma}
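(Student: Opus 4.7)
The plan is to treat $\hat\pi_0$ and $\hat\pi_1$ as smooth functions of a single, finite-dimensional vector of sample means and apply the delta method, exploiting the fact that every basic building block in (\ref{eq:pi0})–(\ref{eq:pi1}) is a group-normalized sample average of a bounded $\{0,1\}$-valued random variable.

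First, I would rewrite every quantity of the form $N_g^{-1}\sum_i a_i\indicator\{G_i=g\}$ as the ratio $\frac{n^{-1}\sum_i a_i\indicator\{G_i=g\}}{n^{-1}\sum_i \indicator\{G_i=g\}}$, so that both estimators depend only on the common vector
\[
\bar X_n = n^{-1}\sum_{i=1}^{n}\bigl(G_i,\,S_{i2}G_i,\,S_{i2}(1-G_i),\, s_{it}^{j}G_i,\, s_{it}^{j}(1-G_i):\ t\in\{1,2\},\ j\in\{1,2\}\bigr)^{\prime}.
\]
Under Assumption \ref{as:rand_samp} the summands are i.i.d.\ across $i$ and bounded, and because each component is a function of the same unit $i$, the multivariate CLT gives $\sqrt n(\bar X_n-\mu)\xrightarrow{d}\mathcal{N}(0,\Sigma)$, where $\mu$ collects the corresponding population expectations and $\Sigma$ is the within-unit covariance matrix of the indicator vector.

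Second, from (\ref{eq:pi0})–(\ref{eq:pi1}) I would write $\hat\pi_g=\phi_g(\bar X_n)$ for explicit rational maps $\phi_g:\mathbb{R}^{K}\to\mathbb{R}$ whose denominators evaluated at $\mu$ are products of group probabilities $\Pr(G=g)$ and marginal selection probabilities of the form $\E[s_{i1}^{j}\mid G_i=g]$ and $\E[S_{i2}\mid G_i=g]$—each strictly positive under Assumption \ref{as:sele} and non-degeneracy of the two groups. Hence each $\phi_g$ is continuously differentiable in a neighborhood of $\mu$, and the delta method together with Slutsky's theorem yields
\[
\sqrt n(\hat\pi_g-\pi_g)\xrightarrow{d}\mathcal{N}\bigl(0,\ \nabla\phi_g(\mu)^{\prime}\,\Sigma\,\nabla\phi_g(\mu)\bigr),
\]
so $V_{\pi_g}=\nabla\phi_g(\mu)^{\prime}\Sigma\nabla\phi_g(\mu)$, and the sample analog $\nabla\phi_g(\bar X_n)^{\prime}\hat\Sigma_n\nabla\phi_g(\bar X_n)$ is consistent by the continuous mapping theorem.

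The main obstacle is algebraic rather than probabilistic: all summands are bounded, every moment exists, and the denominators are bounded away from zero under the maintained assumptions, so no truncation or trimming arguments are needed. What is mildly tedious is writing down $\nabla\phi_g(\mu)$ in a transparent, interpretable form. A cleaner route, closer in spirit to \textcite[Theorem 5.4]{athey_identification_2006}, is to linearize each of the scalar CiC-type proportions that enter $\phi_g$ separately, obtain an influence-function representation $\sqrt n(\hat\pi_g-\pi_g)=n^{-1/2}\sum_i\psi_g(Z_i)+o_p(1)$, and then sum the resulting influence functions. This bypasses any explicit Jacobian computation and makes the dependence of $V_{\pi_g}$ on the pre- and post-treatment selection probabilities for each source transparent.
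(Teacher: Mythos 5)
Your proposal is correct, but it is a genuinely different route from the paper's, because the paper does not prove this lemma at all: it simply cites \textcite[Theorem 5.4]{athey_identification_2006}, deferring to their asymptotic theory for the discrete (here binary) changes-in-changes estimator. Your argument makes this self-contained and, in fact, more elementary than the cited machinery: since the selection indicators are binary, the CiC imputation collapses to ratios of cell frequencies, so $\hat\pi_{0}$ and $\hat\pi_{1}$ in equations (\ref{eq:pi0})--(\ref{eq:pi1}) are rational maps $\phi_{g}$ of a single finite vector of sample means of bounded i.i.d.\ summands (Assumption \ref{as:rand_samp}), and the multivariate CLT plus the delta method delivers normality with $V_{\pi_{g}}=\nabla\phi_{g}(\mu)^{\prime}\Sigma\,\nabla\phi_{g}(\mu)$ --- no empirical-process arguments needed. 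Your route has a second advantage the citation lacks: the estimators actually used in the paper are the \emph{multiple-source} versions from Proposition \ref{prop:proportions_multiple_sources}, which combine two CiC-type transformations (one per source) and are not literally the single-transformation object covered by the cited theorem; your unified delta-method treatment covers them directly, whereas the paper's citation implicitly asks the reader to extend the cited result. Two small points of hygiene: (i) the positivity of the denominators is not delivered by Assumption \ref{as:sele} itself --- you need the non-degeneracy conditions $\Pr(G_{i}=g)\in(0,1)$, $\E[s_{i1}^{j}\mid G_{i}=g]>0$, and $\E[S_{i2}\mid G_{i}=g]>0$, which is exactly the (garbled) side condition the paper states in Appendix \ref{proof:normal_attao}; and (ii) for later use in Proposition \ref{prop:normal_qttao} it is worth stating the \emph{joint} asymptotic normality of $(\hat\pi_{1},\hat\pi_{0})$, which your construction gives for free since both are functions of the same $\bar X_{n}$, rather than the two marginal statements in the lemma.
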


\begin{proof}
    I will derive the asymptotic distribution for $\widehat{ \Lambda^{LB}}(q)$. The same steps could be followed to derive the asymptotic distribution of $\widehat{ \Lambda^{UB}}(q)$.

    First, denote with $$\tilde q  := F_{Y_{1}\mid G=0, S_2 = 1}\left(Q_{Y_{1}\mid G=1,S_2 = 1}(q\pi_1 + 1 - \pi_1)\right) +1 - \pi_{0}$$
    and $$q^{*}_{LB} := \min\left\{\tilde q, 1\right\}$$

    The structure of the proof is as follows. First, I show that $\hat q_{LB}^{*}$ converges in probability to $q_{LB}^{*}$. Then, I use the Delta method to prove asymptotic normality for the interior case when $\tilde q < 1$, which is guaranteed when $q \in (0,1)$  and $F_{Y_{1}\mid G=0, S_2 = 1}\left(Q_{Y_{1}\mid G=1,S_2 = 1}(q\pi_1 + 1 - \pi_1)\right) +1 - \pi_{0} < 1.$
    
    \begin{itemize}
        \item $\hat q_{LB}^{*} \xrightarrow{p} q_{LB}^{*}$
    \end{itemize}

    From Lemma \ref{prop:AItheo54}, it follows that $\hat \pi_{0} \xrightarrow{p} \pi_{0}$ and $\hat \pi_{1} \xrightarrow{p} \pi_{1}$. From the Glivenko–Cantelli theorem, I know that the empirical distribution function converges almost surely to the true distribution function, $\widehat F_{Y}(y) \xrightarrow{a.s.} F_{Y}(y)$. Because $Y$ is continuous with compact support, it follows that $\widehat Q_{Y}(q) \xrightarrow{p} Q_{Y}(q)$. Then, by the continuous mapping theorem, 
    \begin{align}
        \widehat Q_{Y_{1} \mid G=1, S_{2} = 1}^{}(q\hat\pi_{1}+1 - \hat\pi_{1}) & \xrightarrow{p} Q_{Y_{1}\mid G=1,S_2 = 1}(q\pi_1 + 1 - \pi_1)\\
        \widehat F_{Y_{1} \mid G = 0, S_{2} = 1}^{}\left(\widehat Q_{Y_{1} \mid G=1, S_{2} = 1}^{}(q\hat\pi_{1}+1 - \hat\pi_{1}) \right) + 1 - \hat\pi_{0} &\xrightarrow{p} \notag \\ F_{Y_{1}\mid G=0, S_2 = 1}\left(Q_{Y_{1}\mid G=1,S_2 = 1}(q\pi_1 + 1 - \pi_1)\right) +1 - \pi_{0}  = \tilde q
    \end{align}
    \begin{align}
        \hat{ q}_{LB}^{*} = \min\left\{\widehat F_{Y_{1} \mid G = 0, S_{2} = 1}^{}\left(\widehat Q_{Y_{1} \mid G=1, S_{2} = 1}^{}(q\hat\pi_{1}+1 - \hat\pi_{1}) \right) + 1 - \hat\pi_{0} , 1\right\} \xrightarrow{p} \min\{\tilde q,1\} = q_{LB}^{*}
    \end{align}

    \begin{itemize}
        \item Interior case, $\tilde q < 1$
    \end{itemize}
    
    In an abuse of notation, denote:
    {\small
    \begin{align*}
        \widehat{ \Lambda^{LB}}(q) = \widehat Q_{Y_{2} \mid G = 1, S_{2} = 1}^{}(q\hat\pi_{1}) -  \widehat  Q_{Y_{2} \mid G = 0, S_{2} = 1}^{}\left(\widehat F_{Y_{1} \mid G = 0, S_{2} = 1}^{}\left(\widehat Q_{Y_{1} \mid G=1, S_{2} = 1}^{}(q\hat\pi_{1}+1 - \hat\pi_{1}) \right) + 1 - \hat\pi_{0}\right) = \widehat \Lambda_{q}\left(\hat \pi_{1},\hat \pi_{0}\right) \\
        \Lambda^{LB}(q) = Q_{Y_{2} \mid G = 1, S_{2} = 1}^{}(q\pi_{1}) -  Q_{Y_{2} \mid G = 0, S_{2} = 1}^{}\left(F_{Y_{1} \mid G = 0, S_{2} = 1}^{}\left(Q_{Y_{1} \mid G=1, S_{2} = 1}^{}(q\pi_{1}+1 - \pi_{1}) \right) + 1 - \pi_{0}\right) = \Lambda_{q}\left(\pi_{1},\pi_{0}\right)
    \end{align*}
    }%
    Using a first-order Taylor approximation, I can write:
    \begin{align} 
        \widehat \Lambda_{q}\left(\hat \pi_{1},\hat \pi_{0}\right) & \approx \widehat \Lambda_{q}\left(\pi_{1},\pi_{0}\right) + \frac{\partial \widehat \Lambda_{q}\left(\pi_{1},\pi_{0}\right)}{\partial \pi_{1}}\left(\hat\pi_{1} - \pi_{1}\right) + \frac{\partial \widehat \Lambda_{q}\left(\pi_{1},\pi_{0}\right)}{\partial \pi_{0}}\left(\hat\pi_{0} - \pi_{0}\right) \\
        \widehat \Lambda_{q}\left(\hat \pi_{1},\hat \pi_{0}\right) - \Lambda_{q}\left(\pi_{1},\pi_{0}\right) & \approx \widehat \Lambda_{q}\left(\pi_{1},\pi_{0}\right) - \Lambda_{q}\left(\pi_{1},\pi_{0}\right) + \frac{\partial\widehat \Lambda_{q}\left(\pi_{1},\pi_{0}\right)}{\partial \pi_{1}}\left(\hat\pi_{1} - \pi_{1}\right) + \frac{\partial \widehat \Lambda_{q}\left(\pi_{1},\pi_{0}\right)}{\partial \pi_{0}}\left(\hat\pi_{0} - \pi_{0}\right) 
    \end{align}
    From Corollary \ref{prop:cicasdist}, it follows that 
    \begin{equation}
      \sqrt{n} \left( \widehat \Lambda_{q}(\pi_{1},\pi_{0})-\Lambda_{q}(\pi_{1},\pi_{0})\right) \xrightarrow{d} \mathcal{N}\left(0,V_{q\pi_{1}}^{12}+ V_{q + 1 - \pi_{0}}^{02}+ V_{q}^{01}+ V_{q\pi_{1} + 1 - \pi_{1}}^{11}\right)  .\label{eq:ap6bat}
    \end{equation}
    From Corollary \ref{prop:cicasdist}, and provided that $\widehat \Lambda_{q}(\pi_{1},\pi_{0})$ uniformly converges to $\Lambda_{q}(\pi_{1},\pi_{0})$, I can use the mean value theorem to write:
    \begin{equation}\label{eq:ap6bi}
        \frac{\partial\widehat \Lambda_{q}\left(\pi_{1},\pi_{0}\right)}{\partial \pi_{1}} = \lim_{h \to 0} \frac{\widehat \Lambda_{q}\left(\pi_{1} + h,\pi_{0}\right) - \widehat \Lambda_{q}\left(\pi_{1},\pi_{0}\right)}{h} \xrightarrow{p} \lim_{h \to 0} \frac{ \Lambda_{q}\left(\pi_{1} + h,\pi_{0}\right) -  \Lambda_{q}\left(\pi_{1},\pi_{0}\right)}{h} = \frac{\partial \Lambda_{q}\left(\pi_{1},\pi_{0}\right)}{\partial \pi_{1}}
    \end{equation}
    From equation (\ref{eq:ap6bi}), Lemma \ref{prop:AItheo54} and Slutsky theorem, it follows that:
    \begin{equation}
        \frac{\partial\widehat \Lambda_{q}\left(\pi_{1},\pi_{0}\right)}{\partial \pi_{1}}\left(\hat\pi_{1} - \pi_{1}\right) \xrightarrow{d} \mathcal{N}\left(0, \left(\frac{\partial \Lambda_{q}\left(\pi_{1},\pi_{0}\right)}{\partial \pi_{1}}\right)^2V_{\pi_{1}} \right) \label{eq:ap6hiru}
    \end{equation}
    Similarly,
    \begin{equation}
        \frac{\partial\widehat \Lambda_{q}\left(\pi_{1},\pi_{0}\right)}{\partial \pi_{0}}\left(\hat\pi_{0} - \pi_{0}\right) \xrightarrow{d} \mathcal{N}\left(0, \left(\frac{\partial \Lambda_{q}\left(\pi_{1},\pi_{0}\right)}{\partial \pi_{0}}\right)^2V_{\pi_{0}} \right)\label{eq:ap6lau}
    \end{equation}
    Combining equations (\ref{eq:ap6bat}) and (\ref{eq:ap6hiru}) and (\ref{eq:ap6lau}), it follows that
    \begin{equation}
         \sqrt{n}\left(\widehat \Lambda_{q}\left(\hat \pi_{1},\hat \pi_{0}\right) - \Lambda_{q}\left(\pi_{1},\pi_{0}\right)\right) \xrightarrow{d} \mathcal{N}\left(0,\varsigma_{LB}^{2}\right),
    \end{equation}
    where
    \begin{equation}
        \varsigma_{LB}^{2} = \left(V_{q\pi_{1}}^{12}+ V_{q + 1 - \pi_{0}}^{02}+ V_{q}^{01}+ V_{q\pi_{1} + 1 - \pi_{1}}^{11}\right) +
        \left(\frac{\partial \Lambda_{q}\left(\pi_{1},\pi_{0}\right)}{\partial \pi_{1}}\right)^2V_{\pi_{1}} 
        + \left(\frac{\partial \Lambda_{q}\left(\pi_{1},\pi_{0}\right)}{\partial \pi_{0}}\right)^2V_{\pi_{0}}.
    \end{equation}
    The second and third terms in $\varsigma_{LB}^{2}$ come from estimating the proportion of AO units in each of the groups. If these proportions were known, these two terms would disappear. Similarly, the less precisely these proportions are estimated, i.e., the larger $V_{\pi_1}$ and $V_{\pi_0}$ are, the more they contribute to the variance of the bounds.

\end{proof}
%%%%%%%%%%%%%%%%%%%%%%%%%%%%%%%%%%%%%%%%%%%%%%%%%%%%%%%%%%%%%%%%%%%%
\subsection{Proof of Proposition \ref{prop:bounds_attao}} \label{proof:bounds_attao}
This proof uses the following result from \textcite{horowitz_identification_1995}:
    \begin{lemma} \label{prop:HMprop4}
        Let $Y$ be a continuous random variable and a mixture of two random variables, $Y_{0}$ and $Y_{1}$, with known mixing probability, $p$. The c.d.f. of the observed $Y$ is $F(y) = (1-p) F_{0}(y) + p F_{1}(y)$, where $F_{0}$ and $F_{1}$ are the c.d.f.s of $Y_{0}$ and $Y_{1}$ respectively. The c.d.f. of $Y$ after trimming the $1-p$ upper tail of $Y$ is given by $G(y) = \min\left\{\frac{F(y)}{p}, 1\right\}$. Then, $\int_{-\infty}^{\infty}ydG(y) \leq \int_{-\infty}^{\infty}y dF_{1}(y)$ and $\int_{-\infty}^{\infty}ydG(y)$ is a sharp lower bound for $\int_{-\infty}^{\infty}y dF_{1}(y)$. The c.d.f. of $Y$ after trimming the $1-p$ bottom tail of $Y$ is given by $G^{'}(y) = \max\left\{0,\frac{F(y)-(1-p)}{p}\right\}$. Then, $\int_{-\infty}^{\infty}ydG^{'}(y) \geq \int_{-\infty}^{\infty}y dF_{1}(y)$ and $\int_{-\infty}^{\infty}ydG^{\prime}(y)$ is a sharp upper bound for $\int_{-\infty}^{\infty}y dF_{1}(y)$. \\
        Proof: See \textcite[Proposition 4]{horowitz_identification_1995}.
    \end{lemma}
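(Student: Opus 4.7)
The plan is to establish the lower-bound half of the lemma via a first-order stochastic dominance argument, then mirror it for the upper bound, and finally verify sharpness by explicit construction. The trimmed distribution $G$ is the conditional distribution of $Y$ on the event $\{Y \leq y_p^*\}$, where $y_p^* = \inf\{y : F(y) \geq p\}$ is the $p$-quantile of $F$; since $F$ is continuous, $F(y_p^*) = p$ and $G(y) = F(y)/p$ for $y \leq y_p^*$ while $G(y) = 1$ for $y > y_p^*$.

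The first step is to show that, for every feasible decomposition $F = (1-p)F_0 + pF_1$ with $F_0,F_1$ valid c.d.f.s, one has $F_1(y) \leq G(y)$ for all $y$, i.e.\ $G$ is first-order stochastically dominated by $F_1$. For $y \leq y_p^*$, rearranging the mixture gives $F_1(y) = [F(y) - (1-p)F_0(y)]/p \leq F(y)/p = G(y)$ because $F_0(y) \geq 0$. For $y > y_p^*$, the inequality is trivial since $G(y) = 1$. The second step is to convert this dominance into the desired integral inequality: using the standard identity $\int y\, dH(y) = \int_0^\infty (1-H(y))\,dy - \int_{-\infty}^0 H(y)\,dy$ for any c.d.f.\ $H$, the pointwise inequality $F_1 \leq G$ immediately yields $\int y\, dG(y) \leq \int y\, dF_1(y)$.

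The third step is sharpness: I would exhibit a specific admissible pair $(F_0,F_1)$ that attains the bound. Set $F_1 = G$ and define $F_0(y) = [F(y) - pG(y)]/(1-p)$. Direct substitution gives $F_0(y) = 0$ for $y \leq y_p^*$ (because $G(y) = F(y)/p$ there) and $F_0(y) = [F(y) - p]/(1-p)$ for $y > y_p^*$. One checks that $F_0$ is non-decreasing (inheriting monotonicity from $F$), lies in $[0,1]$ (since $p \leq F(y) \leq 1$ on the relevant range), is continuous at $y_p^*$ (both one-sided limits equal $0$ by continuity of $F$), and satisfies $F_0(-\infty)=0$, $F_0(+\infty)=1$. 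Hence $F_0$ is a valid c.d.f., the decomposition $F = (1-p)F_0 + pG$ is admissible, and the lower bound is attained.

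The upper-bound claim for $G'$ follows symmetrically: $G'$ is the conditional distribution of $Y$ on $\{Y \geq y_{1-p}^{**}\}$ where $y_{1-p}^{**} = \inf\{y : F(y) \geq 1-p\}$, and the analogous pointwise inequality $F_1(y) \geq G'(y)$ is obtained from $F_1(y) = [F(y) - (1-p)F_0(y)]/p \geq [F(y) - (1-p)]/p = G'(y)$ on $y \geq y_{1-p}^{**}$, using $F_0(y) \leq 1$. The main conceptual step is the stochastic-dominance inequality in the first step; the rest is largely bookkeeping. I do not expect a serious obstacle since the lemma is stated for continuous $F$, which sidesteps atoms at the trimming threshold that would otherwise complicate the sharpness construction.
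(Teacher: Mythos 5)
Your proof is correct and follows essentially the same route as the source the paper cites for this lemma \parencite[Proposition 4]{horowitz_identification_1995}: pointwise bounds on $F_{1}$ from the mixture decomposition (the content of Lemma \ref{prop:HMprop2}), monotonicity of the mean under first-order stochastic dominance, and sharpness via exhibiting the feasible extreme decompositions $F_{1}=G$ and $F_{1}=G^{\prime}$ with the corresponding valid $F_{0}$. The only point worth flagging is the implicit requirement $\E|Y|<\infty$ so that the integral identity in your second step is licit, a harmless assumption since the lemma's conclusion is vacuous without it.
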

\begin{proof}
    From equation (\ref{eq:princ_did}), it follows that the lower bound for the $\attao$ is given by the lower bound of $\E[\ddot Y_{i} \mid G_{i} = 1, V_{i} = AO]$ and the upper bound of $\E[\ddot Y_{i} \mid G_{i} = 0, V_{i} = AO]$. To prove Proposition \ref{prop:bounds_attao}, I prove next that $\E[\ddot Y_{i} \mid G_{i} = 1, S_{i2} = 1, \ddot Y_{i} \leq \ddot y_{\pi_{1}}^{1} ]$ is a lower bound for $\E[\ddot Y_{i} \mid G_{i} = 1, V_{i} = AO]$. Similar arguments can be employed for the upper bound of $\E[\ddot Y_{i} \mid G_{i} = 0, V_{i} = AO]$ and the upper bound of the $\attao$. 

    Let $F(\ddot y)$ be the c.d.f. of $\ddot Y$ given $S_{2} = 1$, $G = 1$. It can be expressed as the convex combination of the c.d.f. of $\ddot Y$ for the AO and OT strata:
    \begin{equation}
        F(\ddot y) = \pi_{1} F^{AO}(\ddot y) + (1-\pi_{1}) F^{OT}(\ddot y),
    \end{equation}
    where $F^{v}(\ddot y)$ is the c.d.f. of $\ddot Y$ given $G =1$ and $V = v$. $\pi_{1} =  Pr(S_{i2}(0) = 1 \mid G_{i} = 1, S_{i2}(1) = 1)$, as defined in Proposition \ref{prop:bounds_attao}, is taken as given. Then, it follows that

    \begin{equation}
        \E[\ddot Y_{i} \mid G_{i} = 1, S_{i2} = 1, \ddot Y_{i} \leq \ddot y_{\pi_{1}}^{1} ] = \frac{1}{\pi_{1}} \int_{-\infty}^{\ddot y_{\pi_{1}}^{1}} y dF(\ddot y) \leq \int_{-\infty}^{\infty}ydF^{AO}(\ddot y) = \E[\ddot Y \mid G = 1, V_{} = AO],
    \end{equation}
    where $\ddot y_{\pi_{1}}^{1}$ is the $\pi_{1}$th quantile of the distribution of $\ddot Y$ given $S_{2} = 1$ and $G= 1$. The inequality is given by Lemma \ref{prop:HMprop4}. From Lemma \ref{prop:HMprop4} it also follows that this bound is sharp, provided that $\pi_{1}$ is uniquely identified in the data.
\end{proof}
%%%%%%%%%%%%%%%%%%%%%%%%%%%%%%%%%%%%%%%%%%%%%%%%%%%%%%%%%%%%%%%%%%%%
\subsection{Proof of Proposition \ref{prop:normal_attao}} \label{proof:normal_attao}
This proof is similar to the proofs of Propositions 2 and 3 in \textcite{lee_training_2009} and uses the following results from \textcite{newey_chapter_1994}:
\begin{lemma} \label{prop:NFtheo26}
    Suppose that $z_{i}, (i=1,2...),$ are i.i.d., $\mathbb{E}[g(z,\theta)] = 0$ if and only if $\theta = \theta^{*}$; $\theta^{*}\in \Theta$, which is compact; $g(z,\theta)$ is continuous at each $\theta \in \Theta$ with probability one; $\E[\sup_{\theta \in \Theta} \rVert g(z,\theta) \rVert] < \infty$. Then, $\hat \theta  \xrightarrow{p}\theta^{*}$. \\
    Proof: See \textcite[Theorem 2.6]{newey_chapter_1994}.
\end{lemma}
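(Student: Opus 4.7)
The plan is to recognize this as the standard consistency theorem for extremum/GMM estimators and follow the Newey--McFadden template: combine (i) identification, (ii) compactness, (iii) continuity of the limit criterion, and (iv) a uniform law of large numbers, and then invoke the usual argmin-continuity argument. Throughout, $\hat\theta$ is understood as the (approximate) minimizer of the sample criterion $Q_n(\theta) := \|n^{-1}\sum_{i=1}^{n} g(z_i,\theta)\|$, whose population counterpart is $Q_0(\theta) := \|\E[g(z,\theta)]\|$.

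First I would dispatch the easy pieces. The identification hypothesis $\E[g(z,\theta)]=0 \iff \theta=\theta^{*}$ immediately yields that $Q_0$ is uniquely minimized at $\theta^{*}$, with $Q_0(\theta^{*})=0$. Continuity of $Q_0$ on the compact $\Theta$ follows from the almost-sure pointwise continuity of $g(z,\theta)$ together with the dominance condition $\E[\sup_{\theta}\|g(z,\theta)\|]<\infty$ via dominated convergence (applied to sequences $\theta_m \to \theta$ inside $\E[g(z,\theta_m)]$).

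The crux, and the step I expect to be the main obstacle, is the uniform law of large numbers: $\sup_{\theta\in\Theta} |Q_n(\theta)-Q_0(\theta)| \xrightarrow{p} 0$. The standard route exploits compactness of $\Theta$ to cover it with finitely many $\delta$-balls centered at points $\theta_1,\ldots,\theta_K$; on each ball, the local oscillation is controlled by the envelope $\Delta_k(z,\delta) := \sup_{\|\theta-\theta_k\|<\delta}\|g(z,\theta)-g(z,\theta_k)\|$, which tends to $0$ pointwise in $z$ as $\delta\downarrow 0$ by continuity and is dominated by $2\sup_{\theta}\|g(z,\theta)\|$. The integrability hypothesis makes this envelope integrable, so dominated convergence yields $\E[\Delta_k(z,\delta)]\to 0$. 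Combining a pointwise LLN at each of the $K$ centers with the oscillation bound, and then choosing $\delta$ small along a diagonal argument, delivers the uniform bound. This is the substantive content of the theorem, since promoting pointwise to uniform convergence is exactly what the compactness and envelope conditions purchase.

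Finally, consistency follows by the standard argmin-continuity argument. Fix $\epsilon>0$. Continuity and uniqueness of the minimizer together with compactness of $\Theta\setminus B_\epsilon(\theta^{*})$ imply $\eta := \inf_{\theta\notin B_\epsilon(\theta^{*})} Q_0(\theta) > 0$. Uniform convergence, combined with $Q_0(\theta^{*})=0$, ensures that with probability approaching one both $Q_n(\theta^{*}) < \eta/2$ and $\inf_{\theta\notin B_\epsilon(\theta^{*})} Q_n(\theta) > \eta/2$, while the definition of $\hat\theta$ gives $Q_n(\hat\theta) \leq Q_n(\theta^{*})$. Hence $\hat\theta$ cannot lie outside $B_\epsilon(\theta^{*})$ on this event, which has probability tending to one, so $\hat\theta \xrightarrow{p} \theta^{*}$.
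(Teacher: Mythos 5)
Your proposal is correct, but there is a basic mismatch in kind with the paper: the paper offers no proof of this lemma at all — it is stated as an imported result, with "proof" being a citation to Theorem 2.6 of Newey and McFadden (1994). What you have done is reconstruct, from scratch, essentially the argument that lives in the cited source: your three-step structure (uniqueness and continuity of the population criterion $Q_0(\theta)=\lVert\E[g(z,\theta)]\rVert$, a uniform law of large numbers via compactness and an integrable envelope, and the argmin-continuity step) is precisely Newey--McFadden's Theorem 2.1 combined with their Lemma 2.4, specialized to the GMM criterion. Two small points of care, neither fatal. First, in the ULLN step the radius of the covering balls must be allowed to depend on the center: for each $\theta$ choose $\delta_\theta$ with $\E\bigl[\sup_{\lVert\theta'-\theta\rVert\le\delta_\theta}\lVert g(z,\theta')-g(z,\theta)\rVert\bigr]<\epsilon$ (dominated convergence), then extract a finite subcover; your phrasing with a single $\delta$ chosen "along a diagonal argument" gestures at this but as written the centers $\theta_1,\dots,\theta_K$ are picked after $\delta$, which does not quite guarantee $\E[\Delta_k(z,\delta)]$ is small at every center simultaneously. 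Also note $\lvert Q_n(\theta)-Q_0(\theta)\rvert\le\lVert n^{-1}\sum_i g(z_i,\theta)-\E[g(z,\theta)]\rVert$ by the reverse triangle inequality, which is the (implicit) bridge from the ULLN for the moments to the ULLN for the criterion. Second, the Newey--McFadden theorem as cited allows a random weighting matrix $\widehat W\xrightarrow{p}W$ with $Wg_0(\theta)=0$ only at $\theta^*$, whereas you fix the identity weighting; this is harmless here, since the paper applies the lemma to a just-identified system where the choice of weight matrix is immaterial, and your final step should read $Q_n(\hat\theta)\le\inf_\theta Q_n(\theta)+o_p(1)$ if $\hat\theta$ is only an approximate minimizer, which is the form actually invoked in the paper's Appendix.
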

\begin{lemma} \label{prop:NFtheo72}
    Suppose that $\hat g_{n}(\hat \theta)^{\prime}\hat g_{n}(\hat \theta) \leq \inf_{\theta \in \Theta}\hat g_{n}( \theta)^{\prime}\hat g_{n}( \theta) + o_{p}(n^{-1}) $, $\hat \theta \xrightarrow{p}\theta^{*}$, where there is $g^{*}(\theta^{*})$ is such that (i) $g^{*}(\theta^{*}) = 0$, (ii) $g^{*}(\theta^{*})$ is differentiable at $\theta^{*}$ with derivative $G$ such that $G^{\prime}G$ is nonsingular, (iii) $\theta^{*}$ is an interior point of $\Theta$, (iv) $\sqrt{n} \hat{g}_n\left(\theta^{*}\right) \xrightarrow{\mathrm{d}} N(0, \Sigma) ;(\mathrm{v})$ for any $\delta_n \longrightarrow 0, \sup _{\left\|\theta-\theta^{*}\right\| \leqslant \delta_n} \sqrt{n} \| \hat{g}_n(\theta)-$ $\hat{g}_n\left(\theta^{*}\right)-g^{*}(\theta) \| /\left[1+\sqrt{n}\left\|\theta-\theta^{*}\right\|\right] \xrightarrow{\mathrm{p}} 0$. Then $\sqrt{n}\left(\hat{\theta}-\theta^{*}\right) \xrightarrow{\mathrm{d}} \mathcal{N}\left(0,\left(G^{\prime} G\right)^{-1} G^{\prime}  \Sigma G\left(G^{\prime}  G\right)^{-1}\right)$. \\
    Proof: See \textcite[Theorem 7.2]{newey_chapter_1994}.
\end{lemma}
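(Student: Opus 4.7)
The plan is to adapt the standard Newey--McFadden argument for extremum estimators defined by approximate minimization of a quadratic form $Q_n(\theta) = \hat{g}_n(\theta)'\hat{g}_n(\theta)$. The core idea is to linearize $\hat{g}_n(\hat\theta)$ around the true value $\theta^*$ using the stochastic equicontinuity condition (v) together with the differentiability of the population moment $g^*$, and then exploit the near-minimization inequality to solve for $\sqrt{n}(\hat\theta - \theta^*)$. The proof proceeds in three steps: (a) derive a linearization of $\sqrt{n}\hat{g}_n(\hat\theta)$; (b) use the approximate minimization to establish $\sqrt{n}$-consistency; (c) bootstrap the linearization to obtain the asymptotic distribution.

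First, since $\hat\theta \xrightarrow{p} \theta^*$ and $\theta^*$ is in the interior of $\Theta$, with probability approaching one $\hat\theta$ lies within any fixed neighborhood of $\theta^*$, so that condition (v) can be applied with $\theta = \hat\theta$. Combining it with the differentiability condition $g^*(\theta) = G(\theta-\theta^*) + o(\|\theta-\theta^*\|)$ (using $g^*(\theta^*)=0$), I obtain the expansion
\begin{equation*}
\sqrt{n}\,\hat{g}_n(\hat\theta) \;=\; \sqrt{n}\,\hat{g}_n(\theta^*) \;+\; \sqrt{n}\,G(\hat\theta - \theta^*) \;+\; R_n,
\end{equation*}
where $R_n = o_p(1 + \sqrt{n}\|\hat\theta - \theta^*\|)$.

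Second, I would plug this expansion into the approximate minimization bound $n\|\hat{g}_n(\hat\theta)\|^2 \leq n\|\hat{g}_n(\theta^*)\|^2 + o_p(1)$ granted by the hypothesis. Expanding the square on the left and using that $\sqrt{n}\hat{g}_n(\theta^*) = O_p(1)$ by condition (iv), the cross-product and remainder terms can be controlled, and the leading quadratic term $n(\hat\theta - \theta^*)'G'G(\hat\theta - \theta^*)$ must therefore be $O_p(1)$. Since $G'G$ is nonsingular by hypothesis, this delivers $\sqrt{n}\|\hat\theta - \theta^*\| = O_p(1)$. Consequently the remainder in the linearization collapses to $R_n = o_p(1)$.

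Third, with $\sqrt{n}$-consistency in hand, I would extract an approximate first-order condition. Minimizing the quadratic approximation $\|\sqrt{n}\hat{g}_n(\theta^*) + \sqrt{n}G(\hat\theta-\theta^*) + o_p(1)\|^2$ over $\hat\theta - \theta^*$ and comparing to $\|\sqrt{n}\hat{g}_n(\theta^*)\|^2$ via the approximate minimization bound yields
\begin{equation*}
G'G\,\sqrt{n}(\hat\theta - \theta^*) \;=\; -\,G'\sqrt{n}\hat{g}_n(\theta^*) \;+\; o_p(1),
\end{equation*}
so that $\sqrt{n}(\hat\theta - \theta^*) = -(G'G)^{-1}G'\sqrt{n}\hat{g}_n(\theta^*) + o_p(1)$. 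Applying condition (iv) and the continuous mapping theorem gives the stated normal limit with sandwich variance $(G'G)^{-1}G'\Sigma G(G'G)^{-1}$.

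The main obstacle is the self-referential nature of the two approximations used simultaneously: the stochastic equicontinuity remainder in condition (v) is divided by $1+\sqrt{n}\|\hat\theta-\theta^*\|$, so the very quantity we want to bound appears inside the remainder. The step that bootstraps from a crude rate (implied by consistency plus boundedness of $R_n/(1+\sqrt{n}\|\hat\theta-\theta^*\|)$) to full $\sqrt{n}$-consistency requires care, since one must extract the quadratic lower bound from the approximate minimization \emph{before} the linearization error has been controlled. The cleanest route is to treat the approximate minimization inequality as defining an implicit quadratic in $\sqrt{n}(\hat\theta-\theta^*)$ with $O_p(1)$ linear coefficient, a positive-definite leading term (from $G'G$), and an additive error that is negligible relative to $\sqrt{n}\|\hat\theta-\theta^*\|$; standard arguments then force $\sqrt{n}\|\hat\theta-\theta^*\| = O_p(1)$, closing the loop.
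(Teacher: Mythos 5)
The paper itself contains no proof of this lemma: it is an auxiliary result imported verbatim from the literature, with the ``proof'' being nothing more than the pointer to \textcite[Theorem 7.2]{newey_chapter_1994}. What you have reconstructed blind is, in substance, the proof of that cited theorem itself, which follows the approach of Pakes and Pollard (1989, Theorem 3.3): linearize $\sqrt{n}\hat g_{n}(\hat\theta)$ using condition (v) plus differentiability of $g^{*}$ at $\theta^{*}$, bootstrap $\sqrt{n}$-consistency out of the near-minimization inequality, then identify the limit by comparison with the minimizer of the quadratic approximation. Your handling of the self-referential remainder---the genuinely delicate step---is the right one: writing $A_{n} = \sqrt{n}\hat g_{n}(\theta^{*}) = O_{p}(1)$, $b_{n} = \sqrt{n}(\hat\theta - \theta^{*})$, and $\|R_{n}\| \leq \epsilon_{n}(1 + \|b_{n}\|)$ with $\epsilon_{n} \xrightarrow{p} 0$, the inequality $\|A_{n} + Gb_{n} + R_{n}\|^{2} \leq \|A_{n}\|^{2} + o_{p}(1)$ is a quadratic inequality in $\|Gb_{n} + R_{n}\|$ forcing $\|Gb_{n} + R_{n}\| = O_{p}(1)$, and nonsingularity of $G^{\prime}G$ gives $\|Gb_{n}\| \geq \lambda\|b_{n}\|$ for some $\lambda > 0$, whence $(\lambda - \epsilon_{n})\|b_{n}\| \leq O_{p}(1)$ and $b_{n} = O_{p}(1)$, exactly as you indicate.

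Two compressed steps would need to be made explicit in a full write-up. First, in your third step the comparison point cannot be $\theta^{*}$ itself: comparing the criterion at $\hat\theta$ only against its value at $\theta^{*}$ does not yield the first-order condition. You must compare against $\tilde\theta_{n} = \theta^{*} - (G^{\prime}G)^{-1}G^{\prime}\hat g_{n}(\theta^{*})$, which lies in $\Theta$ w.p.a.~1 by interiority (iii) since $\tilde\theta_{n} - \theta^{*} = O_{p}(n^{-1/2})$, and at which condition (v) also applies; with $P = G(G^{\prime}G)^{-1}G^{\prime}$ and $\tilde b_{n} = -(G^{\prime}G)^{-1}G^{\prime}A_{n}$, the orthogonal decomposition $\|A_{n} + Gb_{n}\|^{2} = \|(I-P)A_{n}\|^{2} + \|G(b_{n} - \tilde b_{n})\|^{2}$ against $\|A_{n} + G\tilde b_{n}\|^{2} = \|(I-P)A_{n}\|^{2}$ delivers $\|G(b_{n} - \tilde b_{n})\|^{2} = o_{p}(1)$ and hence your asymptotically linear representation. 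Second, invoking (v) at the random point $\hat\theta$ requires the routine device of choosing $\delta_{n} \to 0$ slowly enough that $\Pr(\|\hat\theta - \theta^{*}\| > \delta_{n}) \to 0$, which consistency supplies. With these points spelled out your sketch is a correct and essentially complete proof of the lemma---it adds nothing the citation does not already cover, but it is the same argument the cited source uses.
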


\begin{proof}
\begin{comment}
    The outline of the proof is as follows. I will first use Lemma \ref{prop:NFtheo26} to prove that $\widehat{\Delta^{LB}}$ and $\widehat{\Delta^{UB}}$ are consistent for $\Delta^{LB}$ and $\Delta^{UB}$ respectively. Once consistency is proved, I use Lemma \ref{prop:NFtheo72} to derive the asymptotic distribution of the estimators.
\end{comment}
    To prove consistency, it suffices to prove that $\frac{\sum_{i} S_{i2}G_{i} \indicator(\ddot Y_{i} \leq \hat{ \ddot y}_{\hat \pi_{1}}^{1})\ddot Y_{i} }{\sum_{i}S_{i2}G_{i}\indicator(\ddot Y_{i} \leq \hat{ \ddot y}_{\hat \pi_{1}}^{1})}$ is a consistent estimator of $\E[\ddot Y_{i} \mid G_{i} = 1, S_{i2} = 1, \ddot Y_{i} \leq \ddot y_{\pi_{1}}^{1} ]$. Similar steps can be used to prove that $\frac{\sum_{i}S_{i2}(1-G_{i})\indicator(\ddot Y_{i} \geq \hat{\ddot y}_{1 - \hat \pi_{0}}^{0})\ddot Y_{i}}{\sum_{i}S_{i2}(1-G_{i})\indicator(\ddot Y_{i} \geq \hat{\ddot y}_{1 - \hat \pi_{0}}^{0})}$ is a consistent estimator of $ \E[\ddot Y_{i} \mid G_{i} = 0, S_{i2} = 1, \ddot Y_{i} \leq \ddot y_{\pi_{0}}^{0} ]$. Then, by the Continuous Mapping Theorem for p-convergence, it follows that $\widehat{\Delta^{LB}}$ is a consistent estimator for $\Delta^{LB}$. Symmetric arguments can be used to prove that $\widehat{\Delta^{UB}}$ is consistent for $\Delta^{UB}$.

    Consider the case with $J=2$. For $j = 1$ positive monotonicity
    holds while for $j = 2$ negative monotonicity holds. Define the moment function
    \begin{equation*}
       g(z_i,\theta) =  \begin{pmatrix}
            (\ddot Y_i - \mu)S_{i2}G_i\indicator(\ddot Y_i \leq \ddot y_{\pi_{1}}^{1}) \\
            (\indicator(\ddot Y_i \leq \ddot y_{\pi_{1}}^{1} - \pi_{1}^{1})S_{i2}G_i \\
            (s_{i1}^{1}-\alpha_{1,1}^{1})G_i \\
            (s_{i1}^{1} - \alpha_{1,0}^{1})(1-G_i) \\
            (s_{i2}^{1} - \alpha_{2,0}^{1})(1-G_i) \\
            (s_{i2}^{2} - \alpha_{2,1}^{2})G_i \\
            \left(S_{i2} - \frac{\alpha_{1,1}^{1}\frac{\alpha_{2,0}^{1}}{\alpha_{1,0}^{1}} + \alpha_{2,1}^{2}}{\pi_{1}}\right)G_i
        \end{pmatrix},
    \end{equation*}
    where $\theta^{\prime} = (\mu, \ddot y_{\pi_{1}}^{1}, \alpha_{1,1}^{1},\alpha_{1,0}^{1},\alpha_{2,0}^{1},\alpha_{2,1}^{2}, \pi_{1})$, $\theta^{\prime *} = (\mu^{*}, \ddot y_{\pi_{1}}^{1*}, \alpha_{1,1}^{1*},\alpha_{1,0}^{1*},\alpha_{2,0}^{1*},\alpha_{2,1}^{2*}, \pi_{1}^{*})$, $\mu^{*} \equiv \E[\ddot Y_{i} \mid G_{i} = 1, S_{i2} = 1, \ddot Y_{i} \leq \ddot y_{\pi_{1}}^{1} ]$, $\alpha_{t, g}^{j*} \equiv Pr(s_{t}^{j} = 1 \mid G = g)$. The estimators provided in equations (\ref{eq:pi1}), (\ref{est:trimt}) and (\ref{est:lb}) are the solution to
    \begin{equation*}
        \min_{\theta \in \Theta} \left(\sum_i g(z_i,\theta)\right)^{\prime}\left(\sum_i g(z_i,\theta)\right).
    \end{equation*}
    This just-identified system yields only one solution as long as any of the following expectations are equal to zero: $\E[s_{1}^{1} \mid G=1],\E[s_{1}^{1} \mid G=0],\E[s_{2}^{1} \mid G=0],\E[s_{2}^{2} \mid G=1]$. $\Theta$ is compact provided that it consists of the bounds of the support for the trimmed mean and quantiles, and the unit interval for the probabilities. Given that the supports are bounded and $g(z,\theta)$ is continuous at each $\theta \in \Theta$, from Lemma \ref{prop:NFtheo26} it follows that $\frac{\sum_{i} S_{i2}G_{i} \indicator(\ddot Y_{i} \leq \hat{ \ddot y}_{\hat \pi_{1}}^{1})\ddot Y_{i} }{\sum_{i}S_{i2}G_{i}\indicator(\ddot Y_{i} \leq \hat{ \ddot y}_{\hat \pi_{1}}^{1})}$ is a consistent estimator of $\E[\ddot Y_{i} \mid G_{i} = 1, S_{i2} = 1, \ddot Y_{i} \leq \ddot y_{\pi_{1}}^{1} ]$.

    Once consistency is proved, asymptotic normality follows from Lemma \ref{prop:NFtheo72} as shown in \textcite[Proposition 3]{lee_training_2009} and \textcite{kang_robust_2024}.

\end{proof}

\subsection{Proof of Proposition \ref{prop:cross_proportion}} \label{proof:cross_proportion}
\begin{proof}
Let $\pi_{g}$ be the proportion of AO units in group $g$ in the post-treatment period, that is, $\pi_{g} = Pr(V_{i} =AO \mid G_{i}=g, S_{i2} = 1)$. Using the Law of Total Probability, I can write:
\begin{equation}
    Pr(V_{i}= AO \mid G_{i} =g) = Pr(V_{i} =AO, S_{i2} = 1 \mid G_{i} = g) +Pr(V_{i} =AO, S_{i2} = 0 \mid G_{i} = g). \label{eq:ap11bat}
\end{equation}
Because $V_{i} =AO \implies S_{i2} = 1$, it follows that $Pr(V_{i} =AO, S_{i2} = 0 \mid G_{i} = g) = 0$. Therefore, equation \eqref{eq:ap11bat} becomes:
\begin{align}
     Pr(V_{i}= AO \mid G_{i} =g) &= Pr(V_{i} =AO, S_{i2} = 1 \mid G_{i} = g) \\
     & = Pr(V_{i} =AO \mid G_{i}=g,S_{i2} =1) Pr(S_{i2} = 1\mid G_{i}=g)\\
     &=\pi_{g} \E[S_{i2}\mid G_{i}=g] \label{eq:ap11bi}
\end{align}
By assumption \ref{as:rand_samp}, this probability is the same in the pre-treatment period and in the post-treatment period. Recall that $V_{i}$ is defined using the potential selection outcomes in the post-treatment period. Assumption \ref{as:rand_samp} ensures that units sampled in the pre-treatment period have the same distribution of post-treatment selection outcomes as units sampled in the post-treatment period.

Using the same logic for the pre-treatment period, it follows that:
\begin{equation}
    Pr(V_{i}= AO \mid G_{i} =g) = Pr(V_{i} =AO, S_{i1} = 1 \mid G_{i} = g) +Pr(V_{i} =AO, S_{i1} = 0 \mid G_{i} = g). \label{eq:ap11hiru}
\end{equation}
By Assumption \ref{as:abstate}, it follows that $V_{i} = AO \implies S_{i1} = 1$, which implies that $Pr(V_{i} =AO, S_{i1} = 0 \mid G_{i} = g) = 0$. That is, all the units sampled in the pre-treatment period that would belong to the AO stratum in the post-treatment period have to have $S_{i1} = 1$. From this, it follows that
\begin{align}
     Pr(V_{i}= AO \mid G_{i} =g) & = Pr(V_{i} =AO, S_{i1} = 1 \mid G_{i} = g) \\
     & = Pr(V_{i} =AO \mid G_{i}=g,S_{i1} =1) Pr(S_{i1} = 1\mid G_{i}=g)\\
     &=\varpi_{g} \E[S_{i1}\mid G_{i}=g] \label{eq:ap11lau}
\end{align}
Finally, combining equations \eqref{eq:ap11bi} and \label{eq:ap11lau}, it follows that
\begin{equation}
    \pi_{g} \E[S_{i2}\mid G_{i}=g] = \varpi_{g} \E[S_{i1}\mid G_{i}=g]
\end{equation}
\begin{equation} \label{eq:ap_varpi}
    \varpi_{g} = \pi_{g}\frac{\E[S_{i2}\mid G_{i}=g]}{\E[S_{i1}\mid G_{i}=g]}
\end{equation}
\end{proof}

\subsection{Proof of Proposition \ref{prop:cross_proportion_bounds}} \label{proof:cross_proportion_bounds}
If Assumption \ref{as:abstate} does not hold, then $\varpi_{g}$ can only be partially identified.

\begin{proof}
    \begin{equation*}
        \varpi_{g} = Pr(V_i =AO \mid G_i = g, S_{i1} = 1) = \frac{Pr(V_i =AO, S_{i1} =1 \mid G_i = g)}{\E[S_{i1}=1\mid G_{i} =g]}
    \end{equation*}
    From equation \eqref{eq:ap11bi}, 
    \begin{equation*}
         Pr(V_i =AO \mid G_i = g) = \pi_{g} \E[S_{i2} \mid G{i} = g].
    \end{equation*}
    By the Law of Total Probability,
    \begin{align}
         Pr(V_i =AO \mid G_i = g) &= Pr(V_i =AO, S_{i1} =1 \mid G_i = g)  + Pr(V_i =AO, S_{i1} =0 \mid G_i = g) , \\
        \pi_{g} \E[S_{i2} \mid G{i} = g] &= Pr(V_i =AO, S_{i1} =1 \mid G_i = g)  + Pr(V_i =AO, S_{i1} =0 \mid G_i = g)  .
    \end{align}
    So that,
    \begin{equation}
        Pr(V_i =AO, S_{i1} =1 \mid G_i = g) = \pi_{g} \E[S_{i2} \mid G{i} = g] - Pr(V_i =AO, S_{i1} =0 \mid G_i = g)  
    \end{equation}
    Because 
    \begin{equation}
        0 \leq Pr(V_i =AO, S_{i1} =0 \mid G_i = g)   \leq Pr(S_{i1} = 0 \mid G_{i} =g) = 1-\E[S_{i1}\mid G_{i} = g],
    \end{equation}
    It follows that 
    \begin{equation}
        \pi_{g} \E[S_{i2} \mid G{i} = g] - (1-\E[S_{i1}\mid G_{i} = g]) \leq Pr(V_i =AO, S_{i1} =1 \mid G_i = g)
    \end{equation}
    and
    \begin{equation}
        Pr(V_i =AO, S_{i1} =1 \mid G_i = g) \leq \pi_{g} \E[S_{i2} \mid G{i} = g]
    \end{equation}
    All in all, it follows that
    \begin{equation}
        \varpi_{g} = \frac{Pr(V_i =AO, S_{i1} =1 \mid G_i = g)}{\E[S_{i1}=1\mid G_{i} =g]} \leq \min\left\{1,\pi_{g}\frac{\E[S_{i2}\mid G_{i}=g]}{\E[S_{i1}\mid G_{i}=g]}\right\}.
    \end{equation}
    and
    \begin{equation}
         \varpi_{g} = \frac{Pr(V_i =AO, S_{i1} =1 \mid G_i = g)}{\E[S_{i1}=1\mid G_{i} =g]} \geq\max\left\{0, \frac{\pi_{g}\E[S_{i2}\mid G_{i}= g]-\left(1 - \E[S_{i1}\mid G_{i} = g]\right)}{\E[S_{i1}\mid G_{i} = g]}\right\}
    \end{equation}
\end{proof}

The upper bound corresponds to the case where all the units in the AO stratum are observed in the pre-treatment period, and is given by equation \eqref{eq:ap_varpi}. This is always the case under Assumption \ref{as:abstate}. Without this assumption, however, the proportion of units observed in the pre-treatment period may be smaller than the proportion of AO. In that case, the upper bound of $\varpi_{g}$ is equal to 1, implying that all the observed units belong to the AO stratum. The lower bound corresponds to the case in which all the units not observed in the pre-treatment period belong to the $AO$ stratum in the post-treatment period. Consequently, if the proportion of units with not observed outcomes in the pre-treatment period is larger than the proportion of the AO stratum, then the lower bound is just 0.

%%%%%%%%%%%%%%%%%%%%%%%%%%%%%%%%%%%%%%%%%%%%%%%%%%%%%%%%%%%%%%%%%%%%
%  APPENDIX C: Appendix to section 3
%%%%%%%%%%%%%%%%%%%%%%%%%%%%%%%%%%%%%%%%%%%%%%%%%%%%%%%%%%%%%%%%%%%%
\renewcommand{\theequation}{C\arabic{equation}}
\renewcommand{\thetable}{C\arabic{table}}
\renewcommand{\thefigure}{C\arabic{figure}}
\setcounter{equation}{0}
\setcounter{table}{0}
\setcounter{figure}{0}

\section{Supplementary material} \label{ap:supmat}
Table \ref{tab:example_mult_sourc} illustrates Lemma \ref{prop:proportions_multiple_sources} under Assumptions \ref{as:cond_mono} and \ref{as:intersec} when multiple sources of attrition are available in the data. Table \ref{tab:map_subp_str} shows how the subpopulations based on the observed selection outcomes identify the proportions of the 4 strata.

\begin{landscape}
    \begin{table}[H]
    \caption{Principal Strata in the post-treatment period with multiple sources of attrition}
    \label{tab:example_mult_sourc}
\vspace{-0.1cm} 
\begin{center} 
\begin{tabular}{|c|ccc|ccc|ccc|c|}
\hline
 & \multicolumn{3}{c|}{Source 1} & \multicolumn{3}{c|}{Source 2} & \multicolumn{3}{c|}{Overall Selection} & \\ 
  %\cmidrule(lr){2-4}  \cmidrule(lr){5-7}  \cmidrule(lr){8-10} 
\textbf{Row} & $\bs{s^{1}(0)}$ &$\bs{s^{1}(1)}$ & $\bs{V^{1}}$ &
 $\bs{s^{2}(0)}$ &$\bs{s^{2}(1)}$ & $\bs{V^{2}}$ &
  $\bs{S^{}(0)}$ &$\bs{S^{}(1)}$ & $\bs{V^{}}$ &
\textbf{Ruled out} \\ \hline
1 & 0 & 0 & $NO_{1}$ & 0 & 0 & $NO_{2}$ & 0 & 0 & $NO$ & Mutual Exclusivity \\
2 & 0 & 0 & $NO_{1}$ & 0 & 1 & $OT_{2}$ & 0 & 0 & $NO$ & Mutual Exclusivity, Monotonicity \\
3 & 0 & 0 & $NO_{1}$ & 1 & 0 & $OC_{2}$ & 0 & 0 & $NO$ & Mutual Exclusivity \\
4 & 0 & 0 & $NO_{1}$ & 1 & 1 & $AO_{2}$ & 0 & 0 & $NO$ & No \\
5 & 1 & 1 & $AO_{1}$ & 0 & 0 & $NO_{2}$ & 0 & 0 & $NO$ & No \\
6 & 1 & 1 & $AO_{1}$ & 0 & 1 & $OT_{2}$ & 0 & 1 & $OT$ & Negative Monotonicity Source 2 \\
7 & 1 & 1 & $AO_{1}$ & 1 & 0 & $OC_{2}$ & 1 & 0 & $OC$ & No \\
8 & 1 & 1 & $AO_{1}$ & 1 & 1 & $AO_{2}$ & 1 & 1 & $AO$ & No \\
9 & 0 & 1 & $OT_{1}$ & 0 & 0 & $NO_{2}$ & 0 & 0 & $NO$ & Mutual Exclusivity \\
10 & 0 & 1 & $OT_{1}$ & 0 & 1 & $OT_{2}$ & 0 & 1 & $OT$ & Mutual Exclusivity, Monotonicity \\
11 & 0 & 1 & $OT_{1}$ & 1 & 0 & $OC_{2}$ & 0 & 0 & $NO$ & No Overlap \\
12 & 0 & 1 & $OT_{1}$ & 1 & 1 & $AO_{2}$ & 0 & 1 & $OT$ & No \\
13 & 1 & 0 & $OC_{1}$ & 0 & 0 & $NO_{2}$ & 0 & 0 & $NO$ & Mutual Exclusivity, Monotonicity \\
14 & 1 & 0 & $OC_{1}$ & 0 & 1 & $OT_{2}$ & 0 & 0 & $NO$ & No Overlap, Monotonicity x 2\\
15 & 1 & 0 & $OC_{1}$ & 1 & 0 & $OC_{2}$ & 1 & 0 & $OC$ & Mutual Exclusivity, Monotonicity \\
16 & 1 & 0 & $OC_{1}$ & 1 & 1 & $AO_{2}$ & 1 & 0 & $OC$ & Positive Monotonicity Source 1 \\
\hline 
\end{tabular}
 
\end{center}
\vspace{-0.2cm}
\begin{minipage}{1\linewidth \setstretch{0.75} } 
{\scriptsize Notes: For Source 1 positive monotonicity is assumed, i.e., $s^{1}(1) \geq s^{1}(0)$. For Source 2 negative monotonicity is assumed, i.e., $s^{2}(1) \leq s^{2}(0)$. All the rows that involve the $OC_{1}$ stratum according to Source 1 or the $OT_{2}$ according to Source 2 are ruled out by these monotonicity assumptions. All the rows such that $s^{1}(0) = s^{2}(0) = 0$ or $s^{1}(1) = s^{2}(1) = 0$ or both are ruled out by the fact that sources are mutually exclusive. Finally, rows 11 and 14, which constitute the cases in which a unit belongs to the OT stratum according to one source and to the OC stratum according to the other, are ruled out by Assumption \ref{as:intersec}. All in all, we are left with 5 rows.
}
 \end{minipage}
\end{table}
\end{landscape}

\begin{table}[H] \centering
\caption{Map between subpopulations defined based on observed outcomes and Principal Strata.}
\begin{tabular}{|c|c|c|}
\hline
\begin{tabular}[c]{@{}c@{}}Subpopulation \\ $(s^{1}, s^{2}, G)$ \end{tabular} & Strata & Rows in Table \ref{tab:example_mult_sourc} \\ \hline
$(1,1,1)$ & AO, OT & 8, 12 \\
$(1,0,1)$ & NO, OC & 5, 7 \\
$(0,1,1)$ & NO & 4 \\
$(1,1,0)$ & OC, AO & 7, 8 \\
$(1,0,0)$ & NO & 5 \\
$(0,1,0)$ & NO, OT & 4, 12 \\
\hline
\end{tabular}
\label{tab:map_subp_str}
%\vspace{-0.2cm}
\begin{minipage}{0.81\linewidth  \setstretch{0.75}}
{\scriptsize  Notes: The first column represents the partition of the sample into 6 different subpopulations based on the observed outcomes. The second column represents the composition in terms of the principal strata of these subpopulations. The third column is the correspondence between the principal strata in the second column con the rows in Table \ref{tab:example_mult_sourc}. For example, consider the first row. This subpopulation, $(s^{1}, s^{2}, G) = (1,1,1)$ are the treated units that are observed. In terms of potential outcomes, these are the units such that $s^{1}(1) = s^{2}(1) = 1 \implies S(1) = 1$. So that these units are AO or OT. If we look at table \ref{tab:example_mult_sourc}, once the rows that are ruled out by assumption are excluded, the units in this subpopulation can only belong to rows 8 and 12. }
 \end{minipage}
\end{table}

\begin{table}[H]
    \caption{Post-treatment selection outcomes for units observed in the pre-treatment period}
    \label{tab:samplesize}
\vspace{-0.1cm} 
\begin{center}
 \begin{tabular}{l c | cc cc}
\toprule
& \multicolumn{1}{c}{Pre-treatment} & \multicolumn{4}{c}{Post-treatment} \\
\cmidrule(lr){2-2}\cmidrule(lr){3-6}
& \multicolumn{1}{c}{$S_1 = 1$}
& \multicolumn{2}{c}{$S_1 = 1,\; S_2 = 0$}
& \multicolumn{1}{c}{$S_1 = 1,\; S_2 = 1$} \\
\cmidrule(lr){3-4}\cmidrule(lr){5-5}
& & \makecell[c]{Unemployment \\ /Self-employment} & \makecell[c]{Non-response} & \makecell[c]{Paid \\employment} & Total \\
\hline
Control          &       661  & 124&  150 &   387 & 661\\
Treatment &  792 &  166 & 125  &  501  & 792\\

\hline
Total  &       1453 &  290 & 275  &  888 & 1453 \\
\hline
\end{tabular}
\end{center}
\vspace{-0.2cm}
\begin{minipage}{1\linewidth \setstretch{0.75} } 
{\scriptsize Notes: The first column indicates the number of units with salaried earnings observed in the pre-treatment period. Columns 2-4 indicate the post-treatment selection outcome for these units. Specifically, Column 2 shows the number of units observed in the first period who did not have paid employment in the second period. Column 3 reports the number of units that had paid employment in the first period but did not answer the survey in the second period. Column 4 shows the number of units for whom salaried earnings are observed both in the pre-treatment and the post-treatment periods.
}
 \end{minipage}
\end{table}

\end{spacing}
\newrefcontext[sorting=nyt]
\end{document}